\documentclass[12pt]{article}

\usepackage[margin=1in]{geometry}
\usepackage{amsmath, amssymb, amsthm}
\usepackage{mathtools}
\usepackage{bm}

\makeatletter
\renewcommand{\maketitle}{%
  \begin{center}
    {\Large\bfseries\@title\par}
    \vskip 1em
    {\normalsize\@author\par}
    \vskip 0.5em
    {\small\@date\par}
  \end{center}
  \vskip 1.5em
}
\makeatother

\usepackage{booktabs}
\usepackage{graphicx}
\graphicspath{{figures/}{figures/supplement/}}
\usepackage{array}
\usepackage{float}
\usepackage{multirow}
\usepackage[numbers]{natbib}
\usepackage[colorlinks=true, citecolor=blue, linkcolor=blue, urlcolor=blue]{hyperref}

\newtheorem{theorem}{Theorem}
\newtheorem{lemma}{Lemma}
\newtheorem{corollary}{Corollary}
\newtheorem{proposition}{Proposition}

\newtheorem{remark}{Remark}

\newcommand{\E}{\mathbb{E}}
\newcommand{\Var}{\mathrm{Var}}
\newcommand{\Cov}{\mathrm{Cov}}
\newcommand{\tr}{\mathrm{trace}}
\newcommand{\diag}{\mathrm{diag}}
\newcommand{\plim}{\xrightarrow{p}}
\newcommand{\dlim}{\xrightarrow{d}}
\newcommand{\betahat}{\hat{\beta}}
\newcommand{\betastar}{\hat{\beta}^*}
\newcommand{\VarAR}{\hat{V}_{AR}}
\newcommand{\cmark}{\ensuremath{\checkmark}}
\newcommand{\xmark}{\ensuremath{\times}}

\title{Overstuffed sandwiches and separation anxiety: finite-sample variance estimation for penalized GEE with near-separated binary data}

\author{
Awan Afiaz\textsuperscript{1,2,*} and
M.\ Shafiqur Rahman\textsuperscript{2,\ensuremath{\dagger}}
\\[6pt]
{\small\textsuperscript{1}Department of Biostatistics, University of Washington}\\
{\small\textsuperscript{2}Institute of Statistical Research and Training, University of Dhaka}
}

\date{}

\begin{document}

\maketitle
\begingroup
\renewcommand{\thefootnote}{\fnsymbol{footnote}}
\footnotetext[1]{Emails: \href{mailto:aafiaz@uw.edu}{aafiaz@uw.edu}; \href{mailto:aafiaz@isrt.ac.bd}{aafiaz@isrt.ac.bd}}
\footnotetext[2]{Email: \href{mailto:shafiq@isrt.ac.bd}{shafiq@isrt.ac.bd}}
\endgroup
\setcounter{footnote}{0}


\begin{abstract}
Penalized generalized estimating equations (PGEE) stabilize point estimation for longitudinal binary data under near-separation, but inference still depends on how the sandwich variance is corrected. Existing corrections for PGEE can overadjust in high-leverage directions, require restrictive pooling assumptions, or add global regularization without explaining the bias. We establish first-order asymptotics for PGEE along convergent interior-root sequences and derive a matrix characterization of the parameter-specific overcorrection induced by full leverage adjustment. Finite-sample calibration is limited by both mean bias and the variability of leverage-corrected variance estimates. We propose $\VarAR$, which keeps the score-level leverage correction and adds a finite-sample upward translation dominated at first order by the finite-population factor, with a smaller centering term. In simulations, $\VarAR$ gives conservative or near-nominal type I error in low-event, small-$N$ settings, including $N = 10$, where several standard corrections remain anti-conservative and pooling estimators are unavailable for unbalanced designs.
\end{abstract}

\noindent\textbf{Keywords:} generalized estimating equations, Firth penalty, sandwich variance, small-sample bias correction, longitudinal binary data, near-separation

\section{Introduction}
\label{sec:intro}

Longitudinal and clustered binary outcomes are common in biomedical research. A typical repeated-measures design records a binary response (a disease indicator, a treatment failure, an adverse event) on each subject at several follow-up times together with baseline or time-varying covariates \citep{fitzmaurice2012applied}. Repeated measurements within a subject are correlated and valid inference on the regression coefficients requires accounting for this within-subject dependence. The generalized estimating equations (GEE) framework of Liang and Zeger \citep{liang1986longitudinal} specifies a marginal mean model for the outcome without requiring the full joint distribution, and is widely used when the scientific question concerns marginal (population-averaged) effects. Under standard regularity conditions and a correctly specified mean model, the GEE estimator is consistent and asymptotically normal, and the sandwich (empirical) covariance estimator is consistent for the asymptotic variance even when the working correlation structure is misspecified.

These asymptotic guarantees degrade when the number of subjects $N$ is small. The sandwich estimator is downward biased in finite samples \citep{mancl2001covariance}, so Wald tests and confidence intervals built from it are anticonservative. The finite-sample null rejection rate can exceed the nominal level, sometimes by a factor of two or more. Small-sample inference matters in settings where GEE is routinely applied: cluster-randomized trials with a handful of clusters per arm, pediatric and rare-disease studies where recruitment is capped, and pilot or feasibility trials expected to deliver credible standard errors under tight budgets. In such studies, the sandwich bias can materially inflate false-positive rates.

The finite-sample separation problem worsens when events are rare or when covariates partition the response space unevenly. Albert and Anderson \citep{albert1984existence} identified two separation regimes for logistic regression: complete separation, where a linear combination of covariates perfectly predicts the outcome and the maximum likelihood estimator has no finite solution, and quasicomplete separation, where the separation holds up to ties on the boundary. Mondol and Rahman \citep{mondol2019bias} described a third regime, near-to-quasi-complete separation, in which covariate patterns have few but nonzero events, a situation also described as sparsity in the broader logistic-regression literature \citep{greenland2016sparse}. Although GEE is not a likelihood method, the same finite-sample estimation instability transfers to the score equations. A quasi-score built around a logit link inherits the same convergence failures and parameter explosions as the MLE when events are scarce and subjects are few \citep{mondol2019bias}. Near-separation is promoted by low event rates, small numbers of subjects, and within-subject correlation that concentrates events in a subset of subjects. Conditions of this type arise often in small clinical trials of rare endpoints.

Firth \citep{firth1993bias} proposed a modification of the score equation that removes the first-order bias of the maximum likelihood estimator in regular exponential families. For binary regression, the Firth penalty also guarantees finite parameter estimates under separation. Mondol and Rahman \citep{mondol2019bias} adapted this construction to GEE by adding a Firth-type penalty to the GEE score equation, producing penalized GEE (PGEE). Their simulations showed that PGEE reduces coefficient bias and improves convergence relative to standard GEE under separation and near-separation. PGEE can often restore stable point estimates in small-sample, rare-event fits, but stable point estimation does not resolve inference. Valid Wald tests and confidence intervals still depend on how the covariance matrix is estimated, and whether existing GEE-oriented corrections remain appropriate under the penalty remains open.

Most small-sample corrections for the GEE sandwich fall into four mechanism-based groups. (i) Leverage-correction estimators inflate individual subject contributions using the hat-matrix block $H_{ii}$ to remove the bias induced by fitted residuals shrinking toward their predicted values \citep{kauermann2001note,mancl2001covariance}. (ii) Additive-stabilizer estimators augment the sandwich with a positive-definite ridge term that widens confidence intervals in small samples \citep{morel2003small,rogers2015modification}. (iii) Pooling estimators replace the per-subject empirical covariance with an average across all subjects, reducing variance at the cost of a common within-subject correlation structure across subjects. In the direct pooled-correlation implementation used here, equal numbers of repeated observations per subject are also required because the pooled residual outer products must have a common dimension \citep{pan2001robust,gosho2014robust,wang2011modified}. (iv) Hybrid proposals combine these mechanisms, for example by mixing pooling with leverage correction or by subtracting cross-subject bias terms \citep{ford2018comparison}. Table~\ref{tab:assumptions} separates these four groups from a short baseline block containing the Liang--Zeger sandwich and the earlier scalar degrees-of-freedom inflation \citep{mackinnon1985some}. Comparison studies under standard GEE have evaluated these modifications in cluster-randomized and longitudinal designs \citep{pengli2015small,wang2016covariance,ford2018comparison}. All of this literature targets GEE. In this paper we study these same sandwich forms as candidate plug-in covariance estimators evaluated at the PGEE fit.

What is known about PGEE variance estimation is limited. Mondol and Rahman \citep{mondol2019bias} reported results under the Morel correction only. Geroldinger et al. \citep{geroldinger2022investigation} compared PGEE to an augmented-GEE alternative for point estimation but again restricted their variance analysis to the Morel estimator. In Gosho et al.'s simulation study \citep{gosho2023comparison}, the Morel correction had the most reliable coverage across sample sizes, but the paper did not explain why. Ishii et al. \citep{ishii2024geessbin} released the \texttt{geessbin} R package for modified GEE methods, including GEE, bias-corrected GEE, and PGEE with multiple bias-adjusted covariance estimators; the contribution is computational rather than theoretical. Geroldinger et al. \citep{geroldinger2022investigation} also showed that, contrary to the original claim of Mondol and Rahman, the Firth penalty does not guarantee finite estimates for PGEE under complete separation. They note that the GitHub version of \texttt{geefirthr} estimated a scale parameter, whereas their modified implementation set the scale parameter to 1 and handled singleton subjects. However, variance estimation for PGEE remains poorly understood in three respects. First, no asymptotic theory has been established for PGEE and subsequent work assumes without proof that PGEE inherits the large-sample properties of GEE. Second, no explanation has been given for why Morel's correction outperforms the alternatives in this setting. Third, existing bias corrections were derived under GEE assumptions that the penalty may violate, and there is no PGEE-specific bias framework to guide practitioners in the high-leverage regime near separation.

In this paper, we make the following contributions:
\begin{enumerate}
\item We establish a first-order asymptotic theory for PGEE (Theorem~\ref{thm:pgee-asymptotics}): along any convergent sequence of interior PGEE roots, the limiting distribution coincides with that of GEE. The Firth penalty is $O(1)$ and does not affect the limit (Lemma~\ref{lem:penalty-order}). This justifies applying GEE variance estimators to PGEE fits in the stable asymptotic regime.
\item We derive a first-order matrix characterization of the overcorrection induced by full $(I - H_{ii})^{-1}$ leverage correction under correctly specified working covariance (Theorem~\ref{thm:overcorrection}): the overcorrection matrix is $B_{\mathrm{lev}} = \sum_i A_i(I_0 - A_i)^{-1}A_i$, and the per-parameter overcorrection $\rho_s = [B_{\mathrm{lev}}]_{ss}/[I_0]_{ss}$ depends on the information structure of each parameter. In the balanced intercept-treatment special case (Corollary~\ref{cor:overcorrection}), the treatment-direction overcorrection is governed by the smaller treatment arm through the factor $N_{\min}/(N_{\min}-1)$. Under a common within-subject design (Remark~\ref{rem:rho-interpretation}), within-subject covariates receive a smaller correction because their information is distributed across all subjects; this is the main heuristic pattern observed throughout our simulations rather than a fully general result.
\item We propose $\VarAR$ as a deliberate finite-sample calibration device: it retains the score-level leverage correction needed to undo self-shrinkage and adds a controlled upward translation dominated at first order by the multiplicative factor $c_N = \mathrm{FPC}\cdot\mathrm{Bessel}$, with mean-centering contributing a smaller lower-order correction. This buffer improves finite-sample calibration when leverage-based corrections are highly variable (Proposition~\ref{prop:bias}).
\item We provide finite-sample bias expressions for all candidates (Proposition~\ref{prop:bias}) together with a consistency result for the estimator class (Proposition~\ref{prop:consistency}).
\item We evaluate fourteen variance estimators (thirteen from the literature and $\VarAR$) across 192 simulation scenarios and two applied datasets. In the low-event, small-$N$ non-pooling PGEE settings studied here, $\VarAR$ yields conservative or near-nominal type I error for treatment-effect inference at $N = 10$, where existing leverage corrections remain anti-conservative and pooling estimators are inapplicable to unbalanced designs.
\end{enumerate}

The remainder of the paper is organized as follows. Section~\ref{sec:methods} introduces notation and regularity conditions, presents the GEE and PGEE framework, reviews the existing variance estimators, and develops the proposed estimator and its theoretical properties. Section~\ref{sec:simulation} describes the simulation study and its results. Section~\ref{sec:applications} applies the methods to two datasets, with additional cases in the appendix. Section~\ref{sec:discussion} concludes with practical recommendations and directions for future work.

\section{Methods}
\label{sec:methods}

\subsection{Notation}
\label{sec:notation}

Consider $N$ independent subjects, of which $N_1$ have a subject-level treatment indicator $x_i = 1$ and $N_0 = N - N_1$ have $x_i = 0$; write $N_{\min} = \min(N_0, N_1)$. We use ``subject'' and ``cluster'' interchangeably; in a repeated-measures design the cluster is the subject, and $n_i$ denotes the number of repeated observations on subject or cluster $i$. Unless a component is indexed explicitly, $\beta$ denotes the full $p$-dimensional parameter vector. Let $\mu_{ij} = \E(Y_{ij}\mid X_{ij})$ denote the marginal mean for observation $j$ in subject $i$, let $\phi$ denote the dispersion parameter, and let $R_i(\alpha)$ denote the working correlation matrix indexed by parameter vector $\alpha$. For subject $i$ ($i = 1, \ldots, N$): $y_i$ is $n_i \times 1$, $X_i$ is $n_i \times p$. Define:
\begin{itemize}
\item $W_i = \diag\{\mu_{ij}(1-\mu_{ij})\}$: variance function matrix ($n_i \times n_i$)
\item $V_i = \phi W_i^{1/2} R_i(\alpha) W_i^{1/2}$: working covariance ($n_i \times n_i$)
\item $D_i = W_i X_i$: derivative matrix ($n_i \times p$, canonical logistic link)
\item $I_0 = \sum_i D_i^T V_i^{-1} D_i$: sensitivity matrix ($p \times p$)
\item $\Delta = I_0^{-1}$ ($p \times p$)
\item $A_i = D_i^T V_i^{-1} D_i$: per-subject information ($p \times p$); note $I_0 = \sum_i A_i$
\item $H_{ii} = D_i \Delta D_i^T V_i^{-1}$: hat matrix block for cluster $i$ ($n_i \times n_i$)
\item $\hat{\mu}_i = \mu_i(\betahat)$: fitted marginal mean; $r_i = y_i - \hat{\mu}_i$: fitted residual; $e_i = y_i - \mu_i(\beta_0)$: true residual
\item $M_0 = \sum_i D_i^T V_i^{-1} \Cov(y_i) V_i^{-1} D_i$: variability matrix (target)
\end{itemize}

\subsection{Regularity conditions}
\label{sec:regularity}

We state conditions for fixed $p$ and growing $N$. These adapt the standard GEE conditions of \citet{xie2003asymptotics}, \citet{balan2005asymptotic}, and \citet{wang2011modified} to our setting.

\begin{enumerate}
\item[(R0)] The numbers of repeated observations per subject satisfy $1 \leq n_i \leq n_{\max}$ for a fixed constant $n_{\max}$. Covariates are uniformly bounded: $\sup_{i,j}\|x_{ij}\| \leq C_x$. Marginal probabilities are bounded: $0 < b_1 \leq \mu_{ij}(\beta_0) \leq b_2 < 1$.
\item[(R1)] The parameter space $\Theta \subset \mathbb{R}^p$ is compact and $\beta_0 \in \mathrm{int}(\Theta)$.
\item[(R2)] $N^{-1}I_0(\beta)$ converges uniformly to a positive definite limit $\Gamma(\beta)$ on $\Theta$. Write $\Gamma_0 = \Gamma(\beta_0)$. In addition, the limiting mean estimating function $\bar U(\beta)=\lim_{N\to\infty}N^{-1}\sum_i \E_{\beta_0}\{U_i(\beta)\}$ has a well-separated zero at $\beta_0$: for every $\epsilon>0$, $\inf_{\beta\in\Theta:\|\beta-\beta_0\|\ge\epsilon}\|\bar U(\beta)\|>0$.
\item[(R3)] $N^{-1}\sum_i D_i^TV_i^{-1}\Cov(y_i)V_i^{-1}D_i$ converges uniformly to a positive definite limit $\Sigma(\beta)$ on $\Theta$. Write $\Sigma_0 = \Sigma(\beta_0)$.
\item[(R4)] The maps $\beta \mapsto D_i(\beta)$, $\beta \mapsto V_i(\beta)$, $\beta \mapsto \mu_i(\beta)$ are twice continuously differentiable on $\Theta$, uniformly in $i$.
\item[(R5)] $\max_{1 \leq i \leq N} \E[\|U_i(\beta_0)\|^{2+\delta}] \leq C$ for some $\delta > 0$.
\item[(R6)] The working correlation estimator satisfies $\sqrt{N}(\hat\alpha - \alpha_0) = O_p(1)$ as $N \to \infty$, and the map $(\beta, \alpha) \mapsto V_i(\beta, \alpha)^{-1}$ is continuously differentiable with spectral-norm bounded derivatives on $\Theta \times \mathcal{A}$ for a compact $\mathcal{A}$ containing $\alpha_0$ in its interior.
\item[(R7)] $\max_{1 \leq i \leq N} \tr(H_{ii}(\beta_0)) \to 0$ as $N \to \infty$.
\end{enumerate}

Conditions (R0)--(R5) are standard. For canonical link GLMs with bounded covariates and outcomes, (R0) implies (R4), and (R5) holds trivially for any $\delta > 0$. The second part of (R2) is the root-identification condition used in Z-estimator consistency arguments. Under working independence it follows from strict concavity of the logistic quasi-likelihood; under general working correlation we state it explicitly. Condition (R3) is a mild requirement that holds under standard GEE designs with canonical links and bounded covariates. We state it explicitly because our paper concerns variance estimation. Condition (R6) is the standard plug-in regularity condition for a working-correlation estimator in GEE asymptotics, and we treat it as an assumption here. Condition (R7) is our subject-level formulation of the per-observation leverage condition of \citet{balan2005asymptotic}; it ensures $(I - H_{ii})^{-1} \to I$ uniformly and is needed for Proposition~\ref{prop:consistency}. For fixed $p$ with balanced designs and comparable per-subject information, $\tr(H_{ii}) = O(p/N) \to 0$.

\subsection{GEE and sandwich variance}

The GEE estimator $\betahat$ is any root of the estimating equation
\begin{equation}
U(\beta) = \sum_{i=1}^{N} D_i(\beta)^T V_i(\beta,\alpha)^{-1}\{y_i - \mu_i(\beta)\} = 0.
\label{eq:gee}
\end{equation}
For the canonical logistic mean model in Section~\ref{sec:notation}, $D_i(\beta) = W_i(\beta)X_i$ and $V_i(\beta,\alpha) = \phi W_i(\beta)^{1/2}R_i(\alpha)W_i(\beta)^{1/2}$. A first-order expansion of $U(\betahat)$ about $\beta_0$ gives the covariance target $\Cov(\betahat) \approx \Delta M_0 \Delta$.

The Liang-Zeger sandwich estimator replaces $\Cov(y_i)$ in $M_0$ with $r_i r_i^T$:
\begin{equation}
\hat{V}_{LZ} = \Delta \left[\sum_{i=1}^{N} D_i^T V_i^{-1} r_i r_i^T V_i^{-1} D_i\right] \Delta.
\label{eq:lz}
\end{equation}
This is consistent as $N \to \infty$ but downward biased in finite samples.

\subsection{Penalized GEE}
\label{sec:asymptotics}

The PGEE modifies the GEE score by adding a Firth-type penalty \citep{firth1993bias}:
\begin{equation}
U^*(\beta) = U(\beta) + b(\beta) = 0, \quad b_r(\beta) = \frac{1}{2}\tr\!\left(I_0^{-1} \frac{\partial I_0}{\partial \beta_r}\right).
\label{eq:pgee}
\end{equation}
Mondol and Rahman \citep{mondol2019bias} adapted this construction from penalized likelihood to GEE. The penalty shifts $\betastar$ toward the interior of the parameter space, stabilizing estimation under near-separation. In likelihood models, the Firth penalty guarantees finite estimates under separation; Geroldinger et al. \citep{geroldinger2022investigation} showed this guarantee does not extend to PGEE, where the penalty is applied to a quasi-score rather than a log-likelihood gradient. PGEE estimates can still diverge under complete separation, though convergence failures are rarer than under standard GEE. The penalty in \eqref{eq:pgee} uses the working sensitivity matrix $I_0$ rather than the empirical Godambe information $I_0^{-1}M_0I_0^{-1}$. In maximum likelihood, $I_0$ equals the Fisher information and the Firth penalty removes the $O(N^{-1})$ bias of the MLE exactly. In the GEE setting, $I_0$ coincides with the true information only when the working correlation is correctly specified. Under misspecification, the penalty preserves the $O(1)$ bound of Lemma~\ref{lem:penalty-order}, but does not guarantee finite estimates: \citet{geroldinger2022investigation} show that PGEE can still diverge under complete separation. The penalty should be viewed as a stabilizer that reduces the frequency of non-convergence in small samples, not as a finiteness guarantee.

At $\beta_0$, $\E[U^*(\beta_0)] = b(\beta_0) \neq 0$. The PGEE estimating equation is biased by design. Lemma~\ref{lem:penalty-order} shows the bias is bounded and Theorem~\ref{thm:pgee-asymptotics} shows it does not affect the limiting distribution.

\begin{lemma}[Firth penalty order]
\label{lem:penalty-order}
Under regularity conditions (R0)--(R4) in Section~\ref{sec:regularity}, the Firth penalty satisfies $\|b(\beta)\| = O(1)$ and $\|\partial b / \partial \beta^T\|_{\text{sp}} = O(1)$ uniformly on $\Theta$, where $\|\cdot\|_{\text{sp}} = \sigma_{\max}(\cdot)$ denotes the spectral norm.
\end{lemma}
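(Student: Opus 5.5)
The plan is to write $b_r(\beta) = \tfrac12\tr\bigl(I_0^{-1}\,\partial I_0/\partial\beta_r\bigr) = \tfrac12\tr\bigl((N^{-1}I_0)^{-1}\,N^{-1}\partial I_0/\partial\beta_r\bigr)$. This exhibits the penalty as a trace of a product of a normalized inverse and a normalized derivative, and each factor can be shown to be $O(1)$ uniformly on $\Theta$. Since $p$ is fixed, $|\tr(AB)|\le p\,\|A\|_{\text{sp}}\|B\|_{\text{sp}}$, so $\|b(\beta)\|\le \tfrac{p^{3/2}}{2}\max_r \|(N^{-1}I_0)^{-1}\|_{\text{sp}}\,\|N^{-1}\partial I_0/\partial\beta_r\|_{\text{sp}}$. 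The derivative bound is handled in the same way after differentiating once more.

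The first step bounds the inverse. By (R2), $N^{-1}I_0(\beta)\to\Gamma(\beta)$ uniformly on $\Theta$. Moreover $\Gamma$ is continuous, since it is a uniform limit of maps that are continuous by (R4) and (R6), and it is positive definite on the compact set $\Theta$ (R1). Hence $\lambda_{\min}(\Gamma(\beta))\ge 2c>0$ on $\Theta$, and for all large $N$ we get $\lambda_{\min}(N^{-1}I_0(\beta))\ge c$ uniformly. This gives $\|(N^{-1}I_0)^{-1}\|_{\text{sp}}\le 1/c$.

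The second step bounds the normalized derivatives. We have $I_0=\sum_i A_i$ with $A_i = X_i^T W_i V_i^{-1} W_i X_i$ and $V_i^{-1} = \phi^{-1}W_i^{-1/2}R_i^{-1}W_i^{-1/2}$, so $A_i=\phi^{-1}X_i^TW_i^{1/2}R_i^{-1}W_i^{1/2}X_i$. Differentiating in $\beta_r$ gives only terms involving $\partial W_i^{1/2}/\partial\beta_r = \tfrac12 W_i^{-1/2}\,\partial W_i/\partial\beta_r$. The entries of $\partial W_i/\partial\beta_r$ are $\mu_{ij}(1-\mu_{ij})(1-2\mu_{ij})x_{ijr}$, so the ratio is bounded by $\tfrac12 C_x\sqrt{\mu_{ij}(1-\mu_{ij})}\le C_x/4$, without needing $\mu$ bounded away from $0$ or $1$ at $\beta\neq\beta_0$. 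Together with $n_i\le n_{\max}$, $\|x_{ij}\|\le C_x$ (R0), and $\|R_i(\alpha)^{-1}\|_{\text{sp}}$ bounded on $\mathcal A$ (R6), this gives $\|\partial A_i/\partial\beta_r\|_{\text{sp}}\le K$ uniformly in $i$ and $\beta$. Summing over $i$ yields $\|N^{-1}\partial I_0/\partial\beta_r\|_{\text{sp}}\le K$. The second derivatives $\partial^2 A_i/\partial\beta_r\partial\beta_s$ are bounded in the same way, using (R4) for twice continuous differentiability uniform in $i$.

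The third step handles the Jacobian. Differentiating $b_r$ gives
\[
\frac{\partial b_r}{\partial\beta_s}=\tfrac12\tr\Bigl(-I_0^{-1}\tfrac{\partial I_0}{\partial\beta_s}I_0^{-1}\tfrac{\partial I_0}{\partial\beta_r}+I_0^{-1}\tfrac{\partial^2 I_0}{\partial\beta_r\partial\beta_s}\Bigr).
\]
After normalizing each $I_0$ and each derivative by $N$, all powers of $N$ cancel, and each term is a product of matrices already bounded in steps one and two. Bounding each of the $p^2$ entries and using fixed $p$ then gives $\|\partial b/\partial\beta^T\|_{\text{sp}}=O(1)$ uniformly.

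The main obstacle is the uniformity of the derivative bounds in step two. Away from $\beta_0$, $W_i$ can be nearly singular, and $V_i^{-1}$ contains $W_i^{-1/2}$. The plan is to cancel these factors algebraically before bounding, as above, rather than bounding $V_i^{-1}$ separately. If the estimated $\phi$ enters, it is treated as fixed, or as bounded away from zero, in line with the paper's convention.
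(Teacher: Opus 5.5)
Your proposal is correct and is essentially the paper's proof: an $O(N^{-1})$ bound on $\|I_0^{-1}\|_{\text{sp}}$ from (R2), $O(N)$ bounds on the first and second derivatives of $I_0$, the inequality $|\tr(AB)|\le p\|A\|_{\text{sp}}\|B\|_{\text{sp}}$, and the same differentiated-trace formula for $\partial b_r/\partial\beta_s$. You make explicit what the paper asserts in one line: the continuity-plus-compactness step that gives a uniform eigenvalue floor, and the derivative bounds for $A_i$. Your $W_i^{-1/2}$ cancellation is harmless but not needed, since bounded covariates on compact $\Theta$ already keep $\mu_{ij}(\beta)$ away from $0$ and $1$, and your appeal to (R6) for $R_i(\alpha)^{-1}$ only surfaces an assumption that the paper's appeal to (R4) uses tacitly.
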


The proof (Appendix~\ref{app:proofs}) uses only the order properties of $I_0$ as a sum of subject-level terms. This extends the result of Kosmidis and Firth \citep{kosmidis2009bias}, who established $A_t = O_p(1)$ for the bias-reduction adjustment within the exponential family, to estimating equations where no loglikelihood exists.

\begin{theorem}[PGEE asymptotic distribution]
\label{thm:pgee-asymptotics}
Under (R0)--(R7), let $\{\betastar_N\}$ be any sequence of PGEE roots with $\betastar_N \in \mathrm{int}(\Theta)$ for all sufficiently large $N$. Then
\begin{enumerate}
\item[(a)] $\betastar_N \plim \beta_0$;
\item[(b)] $\sqrt{N}(\betastar_N - \beta_0) \dlim N(0, \Gamma_0^{-1}\Sigma_0\Gamma_0^{-1})$,
\end{enumerate}
where $\Gamma_0 = \lim_{N \to \infty} N^{-1} I_0$ and $\Sigma_0 = \lim_{N \to \infty} N^{-1} M_0$. Consistency of the selected root follows from the Z-estimator argument of \citet[Theorem 5.9]{vandervaart1998asymptotic}, applied to the penalized score $U^\star$ under (R1)--(R2) and Lemma~\ref{lem:penalty-order}. The plug-in error from $\hat\alpha$ is absorbed into the first-order expansion under (R6) by the standard GEE plug-in argument; \citet{xie2003asymptotics} provides the relevant asymptotic template, though not this exact statement in our notation.
\end{theorem}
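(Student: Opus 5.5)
We treat $\betastar_N$ as a Z-estimator for the penalized score $U^*(\beta)=U(\beta)+b(\beta)$ and show that, after normalizing by $N$ (for consistency) or by $\sqrt N$ (for the distribution), the penalty is negligible by Lemma~\ref{lem:penalty-order}. Everything then reduces to the standard GEE argument.

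\textbf{Step 1 (consistency).} Write $\Psi_N(\beta)=N^{-1}U^*(\beta)$ and $\Psi(\beta)=\bar U(\beta)$. The plan is to show $\sup_{\beta\in\Theta}\|\Psi_N(\beta)-\Psi(\beta)\|\plim 0$. Then the well-separated zero in (R2) and \citet[Theorem 5.9]{vandervaart1998asymptotic} give (a), since $\Psi_N(\betastar_N)=0$ exactly. Split the sup into three pieces:
\begin{itemize}
\item $N^{-1}\sup_\Theta\|b(\beta)\|=O(N^{-1})$, by Lemma~\ref{lem:penalty-order}.
\item $N^{-1}\sum_i[U_i(\beta)-\E U_i(\beta)]$, which is uniformly $o_p(1)$. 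Pointwise this follows from a weak LLN for independent summands, which are bounded by (R0). Uniformity follows from compactness (R1) plus an equicontinuity (Lipschitz) bound from (R4), which bounds $\partial U_i/\partial\beta^T$ uniformly.
\item The difference between the averaged means $N^{-1}\sum_i\E U_i(\beta)$ and their limit $\bar U(\beta)$, which vanishes by the definition of $\bar U$. The convergence must be uniform; I would again obtain this from equicontinuity.
\end{itemize}
The $\hat\alpha$ plug-in is handled by (R6): $\|V_i(\beta,\hat\alpha)^{-1}-V_i(\beta,\alpha_0)^{-1}\|_{\text{sp}}\le C\|\hat\alpha-\alpha_0\|=O_p(N^{-1/2})$ uniformly. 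So replacing $\alpha_0$ by $\hat\alpha$ changes $\Psi_N$ by $o_p(1)$ uniformly.

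\textbf{Step 2 (expansion).} Since $\betastar_N$ is interior, expand $0=U^*(\betastar_N)$ about $\beta_0$ componentwise with a mean-value point $\tilde\beta_N$:
\[
0=U(\beta_0)+b(\beta_0)+\Bigl[\tfrac{\partial U}{\partial\beta^T}(\tilde\beta_N)+\tfrac{\partial b}{\partial\beta^T}(\tilde\beta_N)\Bigr](\betastar_N-\beta_0).
\]
Dividing by $N$, we need $-N^{-1}\partial U/\partial\beta^T(\tilde\beta_N)\plim\Gamma_0$. The derivative equals $-I_0(\beta)$ plus a term $\sum_i \partial(D_i^TV_i^{-1})/\partial\beta\cdot(y_i-\mu_i)$. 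That extra term has mean zero at $\beta_0$ and is $o_p(N)$ uniformly near $\beta_0$, by the same LLN/equicontinuity device as in Step 1. Then (R2) and part (a) give the limit $\Gamma_0$. The penalty derivative contributes $N^{-1}O(1)=o(1)$ by Lemma~\ref{lem:penalty-order}. Hence
\[
\sqrt N(\betastar_N-\beta_0)=\Gamma_0^{-1}N^{-1/2}U(\beta_0)+\Gamma_0^{-1}N^{-1/2}b(\beta_0)+o_p(1).
\]
The middle term is $O(N^{-1/2})$ by Lemma~\ref{lem:penalty-order}. This is exactly where the penalty disappears from the limit.

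\textbf{Step 3 (CLT and plug-in).} The vector $N^{-1/2}U(\beta_0)=N^{-1/2}\sum_iU_i(\beta_0)$ is a sum of independent mean-zero vectors with covariance $N^{-1}M_0\to\Sigma_0$ by (R3). The Lyapunov condition holds by (R5), so by Cram\'er--Wold it converges in distribution to $N(0,\Sigma_0)$, and Slutsky gives (b).

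The remaining issue, which I expect to be the main obstacle, is that $U$ is actually evaluated at $\hat\alpha$. Write
\[
N^{-1/2}[U(\beta_0,\hat\alpha)-U(\beta_0,\alpha_0)]=\Bigl[N^{-1}\tfrac{\partial U}{\partial\alpha^T}(\beta_0,\bar\alpha)\Bigr]\sqrt N(\hat\alpha-\alpha_0).
\]
Here $\partial U/\partial\alpha^T=\sum_iD_i^T(\partial V_i^{-1}/\partial\alpha)e_i$ is a mean-zero sum, so it is $o_p(N)$. Controlling it at the random point $\bar\alpha$ needs a uniform-in-$\alpha$ LLN over compact $\mathcal A$, using the bounded derivatives in (R6). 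Combined with $\sqrt N(\hat\alpha-\alpha_0)=O_p(1)$, the product is $o_p(1)$. This follows the template of \citet{xie2003asymptotics}.

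Condition (R7) is not needed for the limit itself. It only ensures uniformly small per-subject leverage, consistent with the negligibility of individual terms.
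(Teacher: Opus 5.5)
Your proposal is correct and follows the paper's proof essentially step for step. For (a), both use uniform convergence of $N^{-1}U^*$ together with the well-separated zero in (R2) and van der Vaart's Theorem 5.9. For (b), both use a mean-value expansion, the Lyapunov CLT with Cram\'er--Wold, a Taylor expansion in $\alpha$ whose derivative has mean zero at $\beta_0$, and Lemma~\ref{lem:penalty-order} to make $N^{-1/2}b(\beta_0)$ vanish. Your additional care fills in steps the paper passes over quickly: the uniform convergence of the averaged means to $\bar{U}$, the $\hat\alpha$ plug-in in the consistency step, the stochastic mean-zero part of the Jacobian, and uniformity in $\alpha$ at the intermediate point.
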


PGEE and GEE solve different equations and differ in finite samples but they share the same limiting distribution along any convergent interior-root sequence because $N^{-1/2}b(\beta_0) \to 0$ while $N^{-1/2}U(\beta_0) \dlim N(0, \Sigma_0)$. The practical consequence is that GEE variance estimators target the correct quantity under PGEE in that asymptotic regime. Define $I_0^* = I_0 - \partial b / \partial \beta^T$ as the PGEE sensitivity matrix and $H_{ii}^* = D_i(I_0^*)^{-1}D_i^TV_i^{-1}$ as the PGEE hat matrix. By a Neumann series expansion, $H_{ii}^* - H_{ii} = O(N^{-2})$, so $H_{ii}$ evaluated at $\betastar$ is a sufficient first-order approximation for variance estimation.

Regularity condition (R0) requires that the marginal probabilities remain bounded away from 0 and 1, which may appear to conflict with the premise of near-separation. The resolution is that near-separation is a finite-sample pathology, not a population property. Under a fixed, interior $\beta_0$, the law of large numbers guarantees that complete or quasi-complete separation occurs with probability tending to zero as $N \to \infty$. Theorem~\ref{thm:pgee-asymptotics} describes this stable limiting regime along convergent interior-root sequences. It does not resolve the finite-sample paths on which PGEE fails to converge; it shows only that when an interior root is selected, the penalty does not alter the first-order limit.

\subsection{Existing variance estimators}
\label{sec:existing}

We review the thirteen literature estimators through a baseline block plus four mechanism-based groups. All thirteen literature estimators were originally proposed for GEE, and none incorporates PGEE-specific penalty or bias considerations; here we treat them as candidate plug-in covariance estimators for PGEE by evaluating the same sandwich forms at the PGEE fit and its associated working quantities. Complete formulas are given in Appendix~\ref{app:existing}; here we give the main formulas, identify the assumptions each estimator places on the data, and note their limitations. Throughout, $I_1 = \sum_{i=1}^N d_i d_i^T$ denotes the uncorrected middle matrix of the sandwich, where $d_i = D_i^T V_i^{-1} r_i$ is the score contribution for cluster $i$.

\paragraph{Group 1: leverage corrections.} These estimators multiply each cluster's residual outer product by a power of $(I - H_{ii})^{-1}$ to remove the bias induced by fitted residuals shrinking toward zero. The Westgate-Burchett family can be written as
\begin{equation}
\hat{V}(c) = \Delta \left[\sum_{i=1}^N D_i^T V_i^{-1} (I-H_{ii})^{-c} r_i r_i^T (I-H_{ii}^T)^{-c} V_i^{-1} D_i\right] \Delta, \qquad c \in [0,1].
\label{eq:vfamily}
\end{equation}
The Kauermann-Carroll \citep{kauermann2001note} and Mancl-DeRouen \citep{mancl2001covariance} estimators are the canonical special cases $\hat{V}_{KC} = \hat{V}(1/2)$ and $\hat{V}_{MD} = \hat{V}(1)$. The only structural difference between them is the correction exponent. Fay and Graubard \citep{fay2001small} apply a different leverage correction at the score level rather than the residual level, using a diagonal $p \times p$ scaling matrix computed from $D_i^T V_i^{-1} D_i \Delta$ with a clip at a preset leverage threshold. The residual-level members of this group assume $(I - H_{ii})$ is invertible, which can fail in near-separation regimes where one or more eigenvalues of $H_{ii}$ approach unity. Section~\ref{sec:proposed} quantifies the resulting overcorrection and shows that it is parameter-specific.

\paragraph{Group 2: additive stabilizers.} These estimators add a positive-definite ridge-like term to the sandwich to guarantee positive definiteness and widen confidence intervals in small samples. The pure additive-stabilizer representative in our comparison is the Morel, Bokossa, and Neerchal estimator \citep{morel2003small}:
\begin{align}
I_{1,\mathrm{MBN}}^c
&=
\frac{n^*-1}{n^*-p} \cdot \frac{N}{N-1}
\sum_{i=1}^N (d_i - \bar{d})(d_i - \bar{d})^T, \notag\\
\hat{V}_{MBN}
&=
\Delta I_{1,\mathrm{MBN}}^c \Delta + \kappa \, \delta_n \, \Delta.
\label{eq:vmorel}
\end{align}
Here $\bar{d} = N^{-1}\sum_{i=1}^N d_i$, $\kappa = \max\{1, p^{-1} \tr(\Delta I_{1,\mathrm{MBN}}^c)\}$ is a design-effect estimate, $\delta_n = \min\{0.5, p/(N-p)\}$ is a shrinkage factor, and $n^* = \sum_i n_i$. The first term applies Morel's finite-population and Bessel-like correction to the centered score covariance; the second is an additive stabilizer that vanishes as $N \to \infty$. Morel et al. discuss both trace-based and determinant-based design-effect versions of $\kappa$; we write the trace form used in their main implementation. Morel's original formulation indexes the correction by the regression-parameter dimension. In our notation this is $p=\dim(\beta)$, and we use the same count in \eqref{eq:vmorel} when $\phi$ is estimated as a plug-in scalar dispersion parameter; Appendix Proposition~\ref{prop:plug-in-phi-morel} gives the formal statement and proof. Neither the finite-sample factor nor the additive term removes the leverage bias of the sandwich; both are forms of heuristic regularization. Determinant-based and pooling-based extensions of the Morel construction appear under hybrids below.

\paragraph{Group 3: pooling estimators.} These estimators replace the per-subject residual outer products $r_i r_i^T$ with a pooled estimate aggregated across subjects. Pan \citep{pan2001robust} introduced the pooled unscaled correlation
\begin{equation}
\hat{R}_u = \frac{1}{N} \sum_{i=1}^N W_i^{-1/2} r_i r_i^T W_i^{-1/2}, \label{eq:Ru}
\end{equation}
and forms the sandwich $\hat{V}_{\text{Pan}} = \Delta \bigl[\sum_i D_i^T V_i^{-1} W_i^{1/2} \hat{R}_u W_i^{1/2} V_i^{-1} D_i\bigr] \Delta$. Gosho et al. \citep{gosho2014robust} replace $1/N$ with $1/(N-p)$ as a degrees-of-freedom correction. Pooling estimators require a common within-subject correlation structure across subjects for aggregation to be meaningful. In the direct unstructured implementation used here, the pooled matrix also requires equal numbers of repeated observations per subject; otherwise subjects contribute residual outer products of different dimensions. These requirements can fail in clinical trials with variable follow-up or in cohorts heterogeneous in enrollment depth.

\paragraph{Group 4: hybrids.} These estimators combine more than one correction mechanism. Rogers and Stoner \citep{rogers2015modification} apply the determinant-based Morel stabilizer to a pooled middle matrix. Wang and Long \citep{wang2011modified} and Westgate and Burchett combine pooling with leverage correction. Ford and Westgate \citep{ford2018comparison} study $\hat{V}_{FW} = (\hat{V}_{KC} + \hat{V}_{MD})/2$, averaging the two leverage corrections on the heuristic that KC undercorrects and MD overcorrects. The Fan-Zhang-Zhang estimator subtracts an estimated cross-subject contamination $\sum_{j \neq i} H_{ij} r_j r_j^T H_{ij}^T$ from the MD correction, producing an estimator approximately unbiased for $\Cov(y_i)$ in Gaussian linear models. This estimator carries an $O(N^2)$ cost and yields an indefinite middle matrix in general.

Table~\ref{tab:assumptions}, placed after the proposed estimator for direct comparison, summarizes the assumptions required by each estimator. The next subsection derives the residual decomposition underlying the non-pooling leverage corrections, and Section~\ref{sec:proposed} uses that decomposition to motivate the proposed estimator.

\subsection{Residual bias decomposition}
\label{sec:bias-problem}

A first-order Taylor expansion of $\hat{\mu}_i$ around $\beta_0$, using \eqref{eq:gee} and the definition of $H_{ii}$, gives:
\begin{equation}
r_i \approx (I_{n_i} - H_{ii})e_i - \sum_{l \neq i} H_{il} e_l,
\label{eq:residual-expansion}
\end{equation}
where $H_{il} = D_i \Delta D_l^T V_l^{-1}$ is the cross-subject hat matrix block and $e_i = y_i - \mu_i(\beta_0)$ is the true residual.

Under PGEE, the expansion acquires an additional deterministic term. Because $\hat{\beta}^* - \beta_0 \approx I_0^{-1}[U(\beta_0) + b(\beta_0)]$, we have
\begin{equation}
r_i \approx (I_{n_i} - H_{ii})e_i - \sum_{l \neq i} H_{il}e_l - D_i I_0^{-1} b(\beta_0).
\label{eq:pgee-residual}
\end{equation}
The third term is deterministic and $O(N^{-1})$ (since $I_0^{-1} = O(N^{-1})$ and $\|b\| = O(1)$ by Lemma~\ref{lem:penalty-order}). In $\E[r_ir_i^T]$, cross-terms between $e_i$ and $D_iI_0^{-1}b$ vanish because $\E[e_i] = 0$, and the quadratic term $D_iI_0^{-1}bb^TI_0^{-1}D_i^T$ is $O(N^{-2})$. The penalty-induced contribution to the residual variance is therefore the same order as the hat-matrix approximation $H_{ii}^* - H_{ii} = O(N^{-2})$ noted in Section~\ref{sec:asymptotics}. The GEE-based expansion \eqref{eq:residual-expansion} remains valid for PGEE at first order.

The expectation over the independent subjects is:
\begin{equation}
\E[r_i r_i^T] = (I - H_{ii})\Cov(y_i)(I - H_{ii})^T + \sum_{l \neq i} H_{il}\Cov(y_l)H_{il}^T.
\label{eq:Errt}
\end{equation}
The first term is self-shrinkage. The factor $(I - H_{ii})$ contracts the residual covariance, making $r_ir_i^T$ systematically smaller than $\Cov(y_i)$. This is the source of the downward bias in $\hat{V}_{LZ}$. The second term is cross-subject contamination. Subject $i$'s fitted values depend on other subjects through the shared parameter estimate. This term is positive semidefinite and partially offsets the self-shrinkage.

Leverage-correction estimators address the self-shrinkage by applying $(I - H_{ii})^{-c}$ to the residuals. Setting $c = 1$ (Mancl-DeRouen) inverts the self-shrinkage exactly, but this correction treats the cross-subject term as zero. Since the dropped term is positive semidefinite, the corrected estimator overshoots. The term $(I-H_{ii})^{-1}r_ir_i^T(I-H_{ii}^T)^{-1}$ has expectation $\Cov(y_i)$ plus a positive semidefinite remainder. This omitted cross-subject term has been noted before. Mancl and DeRouen treated it as negligible under small leverages, and Fay and Graubard later pointed out a typographical error in the same expression \citep{mancl2001covariance,fay2001small}. Our contribution is to quantify the resulting positive semidefinite remainder under PGEE, where high leverage is the regime of interest. This residual decomposition is the bridge from the GEE-derived literature estimators to the PGEE-specific results below.

\subsection{Proposed estimator and PGEE-specific results}
\label{sec:proposed}

We apply the leverage correction at the score level rather than the residual level. The corrected score is a finite-sample adjustment to the empirical influence function, since the Liang-Zeger sandwich variance is the sample variance of the estimated influence functions. The design goal is deliberate rather than optimality-based: retain the leverage correction needed to undo self-shrinkage, then add a controlled upward translation in the small-$N$ regime where realized leverage-corrected standard errors are often too small. Define the corrected score contribution:
\begin{equation}
f_i = D_i^T V_i^{-1}(I - H_{ii})^{-1}r_i.
\label{eq:corrected-score}
\end{equation}
By substitution of the Taylor expansion \eqref{eq:residual-expansion}, $(I - H_{ii})^{-1}$ cancels the self-shrinkage factor $(I - H_{ii})$ exactly, recovering the true score $D_i^TV_i^{-1}e_i$ plus a cross-subject remainder. The base estimator $\VarAR$ aggregates the corrected scores using the Morel finite-population correction structure:
\begin{equation}
\begin{aligned}
\hat{M}_{AR} &= \frac{n^*-1}{n^*-p} \cdot \frac{N}{N-1} \sum_{i=1}^N (f_i - \bar{f})(f_i - \bar{f})^T, \\
\VarAR &= \Delta\, \hat{M}_{AR}\, \Delta,
\end{aligned}
\label{eq:var-ar}
\end{equation}
where $\bar{f} = N^{-1}\sum_{i=1}^N f_i$. Here $c_N = \text{FPC}\cdot\text{Bessel} > 1$ provides the leading-order upward translation, while mean-centering is lower order under PGEE. Proposition~\ref{prop:bias} characterizes the resulting first-order bias.

The following theorem characterizes the overcorrection structure.

\begin{theorem}[Direction-dependent overcorrection]
\label{thm:overcorrection}
Under correctly specified working covariance ($V_i = \Cov(y_i)$) and the first-order Taylor expansion \eqref{eq:residual-expansion}, the overcorrection matrix of any estimator applying the full $(I - H_{ii})^{-1}$ correction to the score contributions is
\begin{equation}
B_{\mathrm{lev}} = \sum_{i=1}^N A_i (I_0 - A_i)^{-1} A_i,
\label{eq:blev}
\end{equation}
where $A_i = D_i^T V_i^{-1} D_i$ is the per-subject contribution to $I_0 = \sum_i A_i$. The per-parameter relative overcorrection
\begin{equation}
\rho_s = \frac{[B_{\mathrm{lev}}]_{ss}}{[I_0]_{ss}}
\label{eq:rho-s}
\end{equation}
depends on the information structure of each parameter.
\end{theorem}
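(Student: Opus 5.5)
The plan is a direct second-moment computation. Substitute the first-order residual expansion \eqref{eq:residual-expansion} into the corrected score \eqref{eq:corrected-score}. Compute $\sum_i \E[f_i f_i^T]$ and subtract the target middle matrix $M_0$. Under the hypothesis $V_i = \Cov(y_i)$ the target collapses to $M_0 = \sum_i A_i = I_0$. The remainder, once simplified with a push-through identity for $(I-H_{ii})^{-1}$, should be exactly $B_{\mathrm{lev}}$. By Section~\ref{sec:bias-problem}, the penalty term in \eqref{eq:pgee-residual} contributes only $O(N^{-2})$, so the GEE expansion suffices at first order and I will work with it throughout.

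\textbf{Step 1 (the corrected score).} Apply \eqref{eq:residual-expansion} inside \eqref{eq:corrected-score}. The factor $(I-H_{ii})^{-1}$ cancels the self-shrinkage, giving
\[
f_i \approx D_i^TV_i^{-1}e_i - D_i^TV_i^{-1}(I-H_{ii})^{-1}\sum_{l\neq i}H_{il}e_l .
\]
The $e_l$ are independent across subjects with mean zero, so in $\E[f_if_i^T]$ the cross terms between $e_i$ and $e_l$ ($l\neq i$) vanish. The same holds for the cross terms between $e_l$ and $e_{l'}$ with $l\neq l'$. The first piece of $f_i$ contributes $D_i^TV_i^{-1}\Cov(y_i)V_i^{-1}D_i = A_i$, and summing over $i$ recovers $M_0 = I_0$. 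What remains is the overcorrection
\[
\sum_i D_i^TV_i^{-1}(I-H_{ii})^{-1}\Bigl[\sum_{l\neq i}H_{il}\Cov(y_l)H_{il}^T\Bigr](I-H_{ii}^T)^{-1}V_i^{-1}D_i .
\]
This is positive semidefinite, consistent with the discussion after \eqref{eq:Errt}.

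\textbf{Step 2 (collapse the cross-subject sum).} Use $H_{il} = D_i\Delta D_l^TV_l^{-1}$ and $\Cov(y_l) = V_l$. Each term becomes $H_{il}V_lH_{il}^T = D_i\Delta A_l\Delta D_i^T$. Summing over $l\neq i$ gives $D_i\Delta(I_0-A_i)\Delta D_i^T$.

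\textbf{Step 3 (push-through identity).} Since $H_{ii} = D_i\Delta D_i^TV_i^{-1}$, we have $(I-H_{ii})D_i = D_i(I_p-\Delta A_i)$. Hence
\[
(I-H_{ii})^{-1}D_i = D_i(I_p-\Delta A_i)^{-1} = D_i(I_0-A_i)^{-1}I_0 ,
\]
using $I_p - \Delta A_i = \Delta(I_0-A_i)$. It follows that $D_i^TV_i^{-1}(I-H_{ii})^{-1}D_i = A_i(I_0-A_i)^{-1}I_0$, and its transpose is $I_0(I_0-A_i)^{-1}A_i$. Substituting into the Step 1 remainder, each summand is
\[
A_i(I_0-A_i)^{-1}I_0\,\Delta(I_0-A_i)\Delta\, I_0(I_0-A_i)^{-1}A_i = A_i(I_0-A_i)^{-1}A_i .
\]
Summing over $i$ gives \eqref{eq:blev}. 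The overcorrection of the variance itself is then $\Delta B_{\mathrm{lev}}\Delta$. Normalising the middle-matrix excess by the target $M_0 = I_0$ coordinatewise gives $\rho_s$ in \eqref{eq:rho-s}. That $\rho_s$ ``depends on the information structure'' is then visible directly: it is governed by how concentrated the $s$-th diagonal of $I_0$ is in individual $A_i$. This is what Corollary~\ref{cor:overcorrection} and Remark~\ref{rem:rho-interpretation} will specialise.

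\textbf{Main obstacle.} The delicate step is Step 3. I must justify invertibility of $I-H_{ii}$ and of $I_0-A_i$ simultaneously. The nonzero eigenvalues of $H_{ii}$ coincide with those of $\Delta A_i$, which lie in $[0,1)$ exactly when $I_0-A_i$ is positive definite. I will therefore state this as the standing requirement, namely that no single subject carries all the information in any direction. The theorem is thus a first-order statement conditional on the expansion \eqref{eq:residual-expansion}, not an exact finite-sample identity. I would flag that near separation $I_0-A_i$ can become nearly singular, which is precisely where $\rho_s$ blows up.
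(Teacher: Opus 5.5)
Your proposal is correct and follows the paper's proof essentially step for step. You substitute the first-order expansion into $f_i$ and isolate the cross-subject positive semidefinite remainder, then apply the push-through identity $(I-H_{ii})^{-1}D_i = D_i(I_0-A_i)^{-1}I_0$ and collapse with $\sum_{l\neq i}A_l = I_0 - A_i$; you collapse the $l$-sum before the push-through rather than after, which changes nothing. Your remark that invertibility of $I-H_{ii}$ is equivalent to $I_0 - A_i \succ 0$ is a small, correct addition that the paper leaves implicit.
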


The proof (Appendix~\ref{app:proofs}) uses the push-through identity $(I - UW)^{-1}U = U(I - WU)^{-1}$ and the summation $\sum_{l \neq i} \Delta A_l = I_p - \Delta A_i$ to collapse the cross-subject terms into the first-order matrix form \eqref{eq:blev}.

\begin{remark}[Interpreting $\rho_s$]
\label{rem:rho-interpretation}
Corollary~\ref{cor:overcorrection} shows that in the balanced intercept-treatment case the worst overcorrection is governed by the smaller treatment arm through the factor $N_{\min}/(N_{\min}-1)$. This is the canonical situation where treatment information is concentrated in one subset of subjects. For within-subject covariates with a common within-subject design across subjects, such as a shared time grid with $t_{ij} = t_{kj}$ for all pairs $i,k$, the information is distributed across all $N$ subjects, giving $\rho_s$ of order $1/(N-1)$, strictly smaller than $1/(N_{\min}-1)$ whenever $N_{\min} < N$. This common-design simplification can fail for sparse or subject-concentrated covariates.
\end{remark}

The per-subject contributions to $\hat M_{MD}$ and the uncentered version of $\hat M_{AR}$ are algebraically identical: both equal $\sum_i f_i f_i^T$ with $f_i = D_i^T V_i^{-1}(I-H_{ii})^{-1}r_i$. $\VarAR$ differs from $\hat V_{MD}$ only through the multiplicative factor $c_N = \text{FPC}\cdot\text{Bessel}$ and the mean-centering by $\bar f$. In expectation, the centering contribution is $O(N^{-1})$ under PGEE and negligible relative to the $O(N)$ middle matrix, so $\E[\hat M_{AR}] \approx c_N \E[\hat M_{MD}]$ at leading order. The excess bias $(c_N - 1)\E[\hat M_{MD}]$ is positive definite at first order (Proposition~\ref{prop:bias}(iii)) and acts as a finite-sample buffer against the high variability inherent to all leverage-based corrections (Section~\ref{sec:sim-results}).

\begin{corollary}[Scalar overcorrection bound]
\label{cor:overcorrection}
Under the conditions of Theorem~\ref{thm:overcorrection} with $p = 2$ (intercept and a subject-level binary covariate), $N_{\min} \ge 2$, and common numbers of repeated observations within each treatment group (that is, all treated subjects share the same number of repeated observations and all control subjects share the same number), the maximum eigenvalue of $M_0^{-1}\E[\hat{M}_{MD}]$ is
\begin{equation}
\lambda_{\max}\!\left(M_0^{-1} \E\!\left[\hat{M}_{MD}\right]\right) = \frac{N_{\min}}{N_{\min} - 1},
\label{eq:scalar-bound}
\end{equation}
where $N_{\min} = \min(N_0, N_1)$. With $N_{\min} = 3$ (a common clinical trial configuration under PGEE), the overcorrection is 50\%.
\end{corollary}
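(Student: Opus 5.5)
The plan is to reduce the eigenvalue problem to Theorem~\ref{thm:overcorrection} and then diagonalize $I_0$ and $B_{\mathrm{lev}}$ together using the group structure of the design. Under correctly specified working covariance we have $M_0=\sum_i D_i^TV_i^{-1}V_iV_i^{-1}D_i=I_0$. Theorem~\ref{thm:overcorrection} identifies the overcorrection of the full $(I-H_{ii})^{-1}$ correction, so at first order
\[
\E[\hat M_{MD}]=M_0+B_{\mathrm{lev}}=I_0+B_{\mathrm{lev}}.
\]
Hence $M_0^{-1}\E[\hat M_{MD}]=I_p+I_0^{-1}B_{\mathrm{lev}}$, and it suffices to show that $\lambda_{\max}(I_0^{-1}B_{\mathrm{lev}})=1/(N_{\min}-1)$.

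\textbf{Step 1: rank-one structure of $A_i$.} With $p=2$ and a subject-level binary covariate, $X_i=1_{n_i}z_i^T$, where $z_i=(1,x_i)^T$. Hence $A_i=a_i z_iz_i^T$ with scalar $a_i=1_{n_i}^TW_iV_i^{-1}W_i1_{n_i}>0$. Because the covariate is constant within a subject, $\mu_{ij}$ is constant within each treatment group. If in addition the number of repeated observations and the working correlation $R_i(\alpha)$ are common within a group, then $a_i=a_g$ for every subject in group $g\in\{0,1\}$.

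This is the step I expect to be the main obstacle. Exactness of the bound genuinely needs $a_i$ constant within groups: if $a_i$ varied within a group, the terms $a_i^2/(S_g-a_i)$ computed below would not sum to $S_g/(N_g-1)$. So I would state explicitly that the common-$n_i$ hypothesis is used together with a common within-group working correlation, so that $V_i$ is identical within a group.

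\textbf{Step 2: change of basis to group coordinates.} Let $T$ be the invertible $2\times2$ matrix with $Tz_i=e_g$ for subjects in group $g$; this works because $(1,0)$ and $(1,1)$ form a basis. Replacing each $A$ by $\tilde A=T^{-T}AT^{-1}$ gives $\tilde A_i=a_g e_ge_g^T$ and
\[
\tilde I_0=\diag(N_0a_0,\,N_1a_1)=\diag(S_0,S_1).
\]
The eigenvalues of $I_0^{-1}B_{\mathrm{lev}}$ are unchanged, because $I_0^{-1}B_{\mathrm{lev}}=T^{-1}\tilde I_0^{-1}\tilde B_{\mathrm{lev}}T$ is similar to $\tilde I_0^{-1}\tilde B_{\mathrm{lev}}$. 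It therefore suffices to compute in the new coordinates.

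\textbf{Step 3: diagonal computation of $B_{\mathrm{lev}}$.} For $i$ in group $g$, the matrix $\tilde I_0-\tilde A_i$ is diagonal with $g$-entry $(N_g-1)a_g$. This entry is positive, and hence $\tilde I_0-\tilde A_i$ is invertible, precisely because $N_{\min}\ge2$. Then
\[
\tilde A_i(\tilde I_0-\tilde A_i)^{-1}\tilde A_i=\frac{a_g^2}{(N_g-1)a_g}\,e_ge_g^T=\frac{a_g}{N_g-1}\,e_ge_g^T.
\]
Summing over the $N_g$ subjects in each group gives
\[
\tilde B_{\mathrm{lev}}=\diag\!\Big(\frac{S_0}{N_0-1},\,\frac{S_1}{N_1-1}\Big).
\]
Consequently $\tilde I_0^{-1}\tilde B_{\mathrm{lev}}=\diag\{1/(N_0-1),\,1/(N_1-1)\}$. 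Its largest eigenvalue is $1/(N_{\min}-1)$, and adding the identity yields $N_{\min}/(N_{\min}-1)$.

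The numerical remark follows directly: with $N_{\min}=3$ the factor is $3/2$, i.e.\ a 50\% overcorrection. As a by-product, the corresponding eigenvector is the group-$N_{\min}$ coordinate direction, which ties the result to the treatment-arm interpretation in Remark~\ref{rem:rho-interpretation}.
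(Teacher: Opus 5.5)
Your argument is correct and is essentially the paper's proof. Both use Theorem~\ref{thm:overcorrection} with $M_0=I_0$, the group-common rank-one $A_g$, and the identity $A_g(I_0-A_g)^{-1}A_g=A_g/(N_g-1)$; the paper reads the eigenvalues off a lower-triangular $I_0^{-1}B_{\mathrm{lev}}$ in the intercept--treatment basis, whereas you diagonalize in group coordinates and also state explicitly the within-group constancy of $A_i$ that the paper assumes.

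There is one small slip: with $\tilde A=T^{-T}AT^{-1}$ you need $T^{-T}z_i=e_g$ rather than $Tz_i=e_g$. Alternatively, keep $Tz_i=e_g$ and use $\tilde A=TAT^{T}$. As written, $\tilde A_i$ is not $a_ge_ge_g^T$.
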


\begin{remark}[Exact correction exponent under GEE]
\label{rem:kc-scalar}
Under the conditions of Corollary~\ref{cor:overcorrection} with working independence, the correction exponent $c = 1/2$ in the Westgate-Burchett family $(I-H_{ii})^{-c}$ eliminates the scalar-direction overcorrection exactly ($\lambda_{\max} = 1$). The multi-parameter extension under working independence is verified numerically in our simulation study but is not proven analytically here; we conjecture it follows from a block-diagonal argument using the subject-level independence structure of $V_i$ but do not present a proof. Under a misspecified working correlation (exchangeable or AR(1) working model with non-independence truth), residual overcorrection of $O(1)$ magnitude persists because the hat-matrix blocks no longer commute with $V_i$. Under PGEE, the penalty modifies the hat matrix: $H_{ii}^*$ differs from $H_{ii}$ by $O(N^{-2})$, so any GEE overcorrection result carries over up to that perturbation.
\end{remark}

Theorem~\ref{thm:overcorrection} gives the matrix form of $B_{\mathrm{lev}}$; Corollary~\ref{cor:overcorrection} and Remark~\ref{rem:rho-interpretation} together imply that in the canonical balanced intercept-treatment design with a common within-subject covariate structure, the $(I-H)^{-1}$ correction acts unequally across parameters. In that regime, treatment effects concentrate information in the smaller arm and receive the largest correction (when the treated arm is not the majority; otherwise the maximum shifts to the intercept direction), while within-subject covariates with a common within-subject design receive a smaller correction. $\VarAR$ therefore applies the largest overcorrection to the parameter most affected by near-separation in this regime. In clinical applications, this is typically the treatment effect. The resulting conservatism for the treatment parameter is the finite-sample cost of the controlled overcorrection; the overcorrection ratio $\rho_s$ is a diagnostic that quantifies this cost per parameter.

\begin{proposition}[Finite-sample bias comparison]
\label{prop:bias}
Under (R0)--(R7) with $p$ fixed and correctly specified working covariance $V_i = \Cov(y_i)$, let $B(\hat{M}) = \E[\hat{M}(\beta_0)] - M_0$ denote the first-order bias, and write $B_{\mathrm{lev}}^{(0)} = \sum_i A_i\Delta A_i$ for the leading-order form of $B_{\mathrm{lev}}$ from Theorem~\ref{thm:overcorrection}. Then:
\begin{enumerate}
\item[(i)] $\hat{V}_{LZ}$: $\E[\hat{M}_{LZ}] - M_0 = -B_{\mathrm{lev}}^{(0)} + O(N^{-1})$, negative semidefinite at leading order.
\item[(ii)] $\hat{V}_{MD}$: $\E[\hat{M}_{MD}] - M_0 = +B_{\mathrm{lev}}^{(0)} + O(N^{-1})$, positive semidefinite at leading order, with $\lambda_{\max}(M_0^{-1}\E[\hat{M}_{MD}]) = N_{\min}/(N_{\min}-1)$ under Corollary~\ref{cor:overcorrection} conditions.
\item[(iii)] $\VarAR$: $\E[\hat{M}_{AR}] - \E[\hat{M}_{MD}] = (c_N - 1)\E[\hat{M}_{MD}] + O(N^{-1})$, positive definite at leading order, where $c_N = \frac{n^\star - 1}{n^\star - p}\cdot\frac{N}{N-1}$ is the product of the FPC and Bessel factors. The excess acts as a finite-sample buffer against the high variability of the leverage correction (Section~\ref{sec:sim-results}).
\end{enumerate}
In each case, $B = O(1)$ while $M_0 = O(N)$, so the relative bias is $O(N^{-1})$.
\end{proposition}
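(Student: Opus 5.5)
The plan is to work throughout with the first-order residual expansion \eqref{eq:pgee-residual}. In that expansion the penalty term $-D_iI_0^{-1}b(\beta_0)$ contributes only $O(N^{-2})$ per subject to $\E[r_ir_i^T]$; this follows from Lemma~\ref{lem:penalty-order}. Summed over $N$ subjects and sandwiched by $D_i^TV_i^{-1}(\cdot)V_i^{-1}D_i = O(1)$, this gives $O(N^{-1})$ in the middle matrix. So it suffices to compute expectations from \eqref{eq:Errt}, treating $D_i$, $V_i$ and $H_{ii}$ as evaluated at $\beta_0$. Replacing these by fitted quantities changes them by $O_p(N^{-1/2})$, and after taking expectations the effect can be absorbed into the $O(N^{-1})$ remainder; I will state this as part of the first-order convention in $B(\hat M)$.

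For (i), substitute \eqref{eq:Errt} with $\Cov(y_i)=V_i$ into $\sum_i D_i^TV_i^{-1}\E[r_ir_i^T]V_i^{-1}D_i$. The self-shrinkage term is
\[
D_i^TV_i^{-1}(I-H_{ii})V_i(I-H_{ii})^TV_i^{-1}D_i .
\]
Using $D_i^TV_i^{-1}H_{ii}=A_i\Delta D_i^TV_i^{-1}$, it expands to $A_i-2A_i\Delta A_i+A_i\Delta A_i\Delta A_i$. The cross term is $\sum_{l\ne i}A_i\Delta A_l\Delta A_i = A_i\Delta(I_0-A_i)\Delta A_i$, which equals $A_i\Delta A_i - A_i\Delta A_i\Delta A_i$. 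Summing, the cubic terms cancel exactly and we obtain $M_0-\sum_iA_i\Delta A_i=M_0-B^{(0)}_{\mathrm{lev}}$. Negative semidefiniteness holds because each $A_i\Delta A_i$ is psd. In this correctly specified case the identity is exact at first order, so the $O(N^{-1})$ term absorbs only the higher-order Taylor and penalty pieces.

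For (ii), premultiply by $(I-H_{ii})^{-1}$. The self-shrinkage term then becomes exactly $A_i$. The cross term becomes
\[
D_i^TV_i^{-1}(I-H_{ii})^{-1}D_i\Delta(I_0-A_i)\Delta D_i^T(I-H_{ii}^T)^{-1}V_i^{-1}D_i .
\]
By the push-through identity used in Theorem~\ref{thm:overcorrection}, $D_i^TV_i^{-1}(I-H_{ii})^{-1}D_i = A_i(I-\Delta A_i)^{-1}$. This gives $A_i(I_0-A_i)^{-1}A_i$, which is $B_{\mathrm{lev}}$. Since $\Delta A_i=O(N^{-1})$, a Neumann expansion $(I_0-A_i)^{-1}=\Delta+\Delta A_i\Delta+\cdots$ shows $B_{\mathrm{lev}}=B^{(0)}_{\mathrm{lev}}+O(N^{-1})$. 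The eigenvalue statement is then Corollary~\ref{cor:overcorrection}, which I would simply invoke.

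For (iii), write
\[
\sum_i(f_i-\bar f)(f_i-\bar f)^T=\sum_if_if_i^T-N\bar f\bar f^T .
\]
The first sum has expectation $\E[\hat M_{MD}]$. For the second, $\sum_i f_i$ is a first-order approximation to $\sum_iD_i^TV_i^{-1}e_i$ plus cross-subject remainders, and these approximately cancel against $U(\hat\beta^*)=-b(\hat\beta^*)$. Concretely, $\sum_i f_i = -b + \sum_i D_i^TV_i^{-1}[(I-H_{ii})^{-1}-I]r_i$. The first piece is $O(1)$ by Lemma~\ref{lem:penalty-order}. The second is a sum of $N$ terms, each $O(N^{-1})\cdot O_p(1)$, with mean-zero leading part; I would show its second moment is $O(1)$. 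Hence $\E[N\bar f\bar f^T]=N^{-1}\E[(\sum f_i)(\sum f_i)^T]=O(N^{-1})$. Multiplying by $c_N=1+O(N^{-1})$ gives $\E[\hat M_{AR}]=c_N\E[\hat M_{MD}]+O(N^{-1})$. Positive definiteness follows from $c_N>1$, which requires $p>1$ and $N>1$, together with $\E[\hat M_{MD}]\succeq M_0\succ0$ from (ii). Finally, since $A_i=O(1)$ and $\Delta=O(N^{-1})$, $B^{(0)}_{\mathrm{lev}}=O(1)$, while $M_0=O(N)$ by (R3); this gives the relative-bias claim.

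The main obstacle is the centering bound in (iii). Controlling $\E\|\sum_i D_i^TV_i^{-1}[(I-H_{ii})^{-1}-I]r_i\|^2$ requires handling the dependence of $r_i$ on all the $e_l$. I would first try to bound it by $\max_i\|H_{ii}\|^2\cdot\E\|\sum_i D_i^TV_i^{-1}e_i\|^2$-type terms, using (R7) and the expansion \eqref{eq:residual-expansion}, and I expect this to yield $O(1)$. A secondary subtlety is that (iii) compares leading-order expectations, so the statement should be read at the same first-order convention as (i)–(ii).
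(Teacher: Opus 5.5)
Your proposal is correct and follows the paper's proof. For (i) and (ii) you use the same expansion of the self-shrinkage and cross-subject pieces via $D_i^TV_i^{-1}H_{ii}=A_i\Delta D_i^TV_i^{-1}$ and the push-through identity; obtaining the exact $\sum_iA_i(I_0-A_i)^{-1}A_i$ before expanding is slightly cleaner. For (iii) you use the same centering identity with $\E[S_fS_f^T]=O(1)$. Your root identity $\sum_iD_i^TV_i^{-1}r_i=-b$, which also holds exactly for the linearized residuals in \eqref{eq:pgee-residual}, is a compact substitute for the paper's explicit $Q_i$-expansion of $S_f$. The leftover leverage piece is not a real obstacle: it is bounded at once by $\|(I-H_{ii})^{-1}-I\|\le C\|H_{ii}\|$, $\sum_i\|H_{ii}\|\le C\sum_i\tr(H_{ii})=Cp$, $\E\|r_i\|^2=O(1)$ and Cauchy--Schwarz.

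One caution: your aside that evaluating $D_i,V_i,H_{ii}$ at the fit changes the expectation only by $O(N^{-1})$ is not true in general. With a non-identity working correlation, $D_i^TV_i^{-1}$ depends on $\beta$, and its fluctuation contributes $O(1)$ in total, the same order as $B^{(0)}_{\mathrm{lev}}$. You do not need that claim, because the statement defines the bias as $\E[\hat M(\beta_0)]-M_0$.
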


\begin{remark}[Per-parameter scaling]
\label{rem:scaling}
Theorem~\ref{thm:overcorrection} suggests correcting the differential overcorrection via per-parameter scaling. Two approaches are natural. The first is deflation: multiply the variance for parameter $s$ by $1/(1+\rho_s)$, removing the overcorrection entirely. This is unsafe for the treatment parameter, where the overcorrection is precisely the mechanism that produces conservatism. At $N = 10$ with balanced allocation ($N_1 = 5$, $\rho_1 = 0.25$), deflating the treatment standard error by $1/\sqrt{1+\rho_1}$ reduces it by approximately $11\%$. Although this removes only the leverage overcorrection, it erodes the mean-level conservative buffer created by the overcorrection and makes anti-conservative rejection more likely in finite samples. The second approach is upward equalization: multiply by $(1+\rho_{\max})/(1+\rho_s)$, inflating all parameters to match the worst-case overcorrection. This preserves conservatism but makes within-subject covariates even more conservative than they already are due to the global FPC. Neither direction achieves simultaneous calibration: deflation sacrifices safety for the most vulnerable parameter, equalization sacrifices power for the best-calibrated ones. We therefore report $\VarAR$ without per-parameter adjustment and present $\rho_s$ as a diagnostic.
\end{remark}

\begin{proposition}[Consistency]
\label{prop:consistency}
Under (R0)--(R7) with $p$ fixed, for each candidate $\hat{V} = I_0(\betahat)^{-1}\hat{M}(\betahat)I_0(\betahat)^{-1}$, $N \hat{V} \plim \Gamma_0^{-1}\Sigma_0\Gamma_0^{-1}$. In particular, $\VarAR$ is consistent.
\end{proposition}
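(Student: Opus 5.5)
The plan is to reduce consistency of every candidate to two facts. First, the bread converges: $N^{-1}I_0(\betahat)\plim\Gamma_0$. Second, the meat converges: $N^{-1}\hat M(\betahat)\plim\Sigma_0$. The continuous mapping theorem then gives
\[
N\hat V=\{N^{-1}I_0(\betahat)\}^{-1}\{N^{-1}\hat M(\betahat)\}\{N^{-1}I_0(\betahat)\}^{-1}\plim\Gamma_0^{-1}\Sigma_0\Gamma_0^{-1},
\]
using that $\Gamma_0$ is positive definite by (R2). Here $\betahat$ denotes either the GEE or the PGEE root; by Theorem~\ref{thm:pgee-asymptotics}(a) it is consistent, and by part (b) it satisfies $\betahat-\beta_0=O_p(N^{-1/2})$. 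The bread step is immediate: $N^{-1}I_0(\beta)\to\Gamma(\beta)$ uniformly on $\Theta$, and $\Gamma$ is continuous by (R4). The same holds when $\hat\alpha$ is plugged in, by (R6).

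For the meat, I would first handle the uncorrected Liang--Zeger middle matrix $I_1=\sum_i d_id_i^T$. Write $r_i=e_i-\{\mu_i(\betahat)-\mu_i(\beta_0)\}$, so that $\|r_i-e_i\|\le C\|\betahat-\beta_0\|$ uniformly in $i$ by (R0) and (R4). Likewise $\|D_i^TV_i^{-1}(\betahat)-D_i^TV_i^{-1}(\beta_0)\|\le C\|\betahat-\beta_0\|$ uniformly, by (R4) and (R6). Expanding, $N^{-1}I_1(\betahat)$ equals $N^{-1}\sum_i D_i^TV_i^{-1}e_ie_i^TV_i^{-1}D_i$ at $\beta_0$ plus remainders bounded by $C\|\betahat-\beta_0\|=o_p(1)$, since all summands are uniformly bounded. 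The leading term is an average of independent, bounded (hence uniformly integrable via (R5)) matrices with mean $N^{-1}M_0\to\Sigma_0$ by (R3). A weak LLN for independent non-identically distributed summands with bounded $1+\delta/2$ moments then gives convergence to $\Sigma_0$.

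Next I would transfer this to each correction mechanism.

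\textbf{Leverage corrections} ($\hat V(c)$, KC, MD, FW, Fay--Graubard, and the $f_i$ of $\VarAR$). By (R7), $\max_i\|H_{ii}\|\to0$. I would need this at $\betahat$ rather than $\beta_0$, which holds by continuity since $H_{ii}$ is uniformly Lipschitz in $\beta$ on a neighborhood and $\betahat\plim\beta_0$. Hence $\max_i\|(I-H_{ii})^{-c}-I\|=o_p(1)$. Then
\[
\Bigl\|N^{-1}\sum_i D_i^TV_i^{-1}\{(I-H_{ii})^{-c}r_ir_i^T(I-H_{ii}^T)^{-c}-r_ir_i^T\}V_i^{-1}D_i\Bigr\|
\le o_p(1)\cdot N^{-1}\sum_i\|D_i^TV_i^{-1}\|^2\|r_i\|^2=o_p(1),
\]
since $N^{-1}\sum_i\|r_i\|^2=O_p(1)$ by boundedness. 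The same bound handles the score-level Fay--Graubard scaling, whose diagonal factors also tend to $1$ uniformly.

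\textbf{Centering and scalar factors} (MBN, $\VarAR$). The factor $c_N\to1$. For centering, $\sum_i(f_i-\bar f)(f_i-\bar f)^T=\sum_if_if_i^T-N\bar f\bar f^T$. Under GEE, $\sum_i d_i=U(\betahat)=0$; under PGEE, $\sum_i d_i=-b(\betahat)=O(1)$ by Lemma~\ref{lem:penalty-order}. Also $\sum_i(f_i-d_i)=o_p(N)$. So $\bar f=o_p(1)$ and $N^{-1}N\bar f\bar f^T=o_p(1)$. The Morel additive term satisfies $N\kappa\delta_n\Delta=O_p(1)\cdot O(N^{-1})\cdot O(1)\to0$, because $\kappa$ is bounded in probability and $N\Delta\to\Gamma_0^{-1}$.

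\textbf{Pooling and hybrids} (Pan, Gosho, Wang--Long, Rogers--Stoner, Fan--Zhang--Zhang). I would show $\hat R_u\plim$ the common limiting correlation by the same LLN-plus-Lipschitz argument. These estimators are meaningful only under the common-structure and balanced-design assumption, which makes the sandwich limit again $\Sigma_0$. The Fan--Zhang--Zhang subtraction term is $\sum_{j\ne i}H_{ij}r_jr_j^TH_{ij}^T$ with $\|H_{ij}\|=O(N^{-1})$, so it contributes $O_p(N^{-1})$ per subject and vanishes after scaling by $N^{-1}$.

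I expect the main obstacle to be controlling $\max_i\|(I-H_{ii}(\betahat))^{-1}\|$ uniformly in probability. (R7) is stated only at $\beta_0$, so I would first establish a uniform Lipschitz bound on $\beta\mapsto H_{ii}(\beta)$ over a shrinking ball, derived from (R4) and the bounded inverse of $N^{-1}I_0$. A second delicate point is justifying the pooling estimators under the stated conditions, since (R0)--(R7) do not themselves impose a common correlation structure. For those estimators I would state consistency conditionally on that structural assumption.
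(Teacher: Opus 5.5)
Your argument is correct. It has the paper's overall architecture: the leverage factors tend to the identity, a law of large numbers handles the meat, and continuous mapping finishes. The difference lies in how you handle the meat, and your version is more complete.

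The paper treats only $\VarAR$ and declares the other candidates analogous. It asserts a uniform law $N^{-1}\sum_i d_i(\beta)d_i(\beta)^T\plim\Sigma(\beta)$ on $\Theta$ via a Lipschitz covering argument, then evaluates at $\betahat$.

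You instead write $r_i=e_i-\{\mu_i(\betahat)-\mu_i(\beta_0)\}$ and bound every perturbation by $C\|\betahat-\beta_0\|$, plus the $\hat\alpha$ error. You therefore need a law of large numbers only at the fixed point $\beta_0$. This has a concrete advantage. Away from $\beta_0$ the paper's stated limit is not right, because $\E[d_i(\beta)d_i(\beta)^T]$ also contains the outer product of $D_i^TV_i^{-1}\{\mu_i(\beta_0)-\mu_i(\beta)\}$. That is harmless once you evaluate at $\betahat\plim\beta_0$, but your route never meets it.

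You also make the bread convergence explicit, which the paper leaves implicit. And you handle the centering term $\bar f\bar f^T$ in $N^{-1}\hat M_{AR}$, which the paper's Step~3 does not address. You correctly kill it with the estimating equation: under PGEE, $\sum_i d_i(\betahat)=-b(\betahat)=O(1)$ by Lemma~\ref{lem:penalty-order}, and together with $\sum_i(f_i-d_i)=o_p(N)$ this gives $\bar f=o_p(1)$. Your caveat that the pooling estimators need a common within-subject correlation structure, which (R0)--(R7) do not imply, is a fair qualification of ``for each candidate''.

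One step needs tightening. (R7) controls $\tr(H_{ii})$, not a norm, and $H_{ii}$ is not symmetric. So ``by (R7), $\max_i\|H_{ii}\|\to0$'' needs the paper's similarity argument: $H_{ii}=V_i^{1/2}\widetilde H_{ii}V_i^{-1/2}$ with $\widetilde H_{ii}$ symmetric positive semidefinite. A simpler route is the direct bound $\|H_{ii}(\beta)\|_{\text{sp}}\le\|D_i\|_{\text{sp}}^2\|V_i^{-1}\|_{\text{sp}}\|I_0(\beta)^{-1}\|_{\text{sp}}=O(N^{-1})$, uniformly in $i$ and $\beta\in\Theta$, by (R0), (R2) and (R6). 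This also removes what you flag as the main obstacle, transferring the leverage bound from $\beta_0$ to $\betahat$, with no Lipschitz argument on a shrinking ball.
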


All proofs are in Appendix~\ref{app:proofs}. All candidates converge to the same target; they differ only in their finite-sample bias, which the simulation study in Section~\ref{sec:simulation} evaluates.

Table~\ref{tab:assumptions} summarizes the baseline block, the four mechanism-based literature groups, and the proposed estimator in the order used here.

\begin{table}[h]
\centering
\small
\caption{Baseline block, four mechanism-based literature groups, and the proposed estimator. ``Common corr'' = common within-subject correlation structure across subjects; ``Balanced'' = equal numbers of repeated observations per subject required by the direct unstructured pooled-correlation implementation used here; ``$(I-H_{ii})$ inv.'' = residual-level leverage condition for the inverse correction. $\hat{V}_{FG}$ uses a score-level diagonal clipping rather than a residual-level inverse and therefore does not require $(I-H_{ii})$ invertibility.}
\label{tab:assumptions}
\resizebox{\textwidth}{!}{%
\begin{tabular}{llccc}
\toprule
Group & Estimator(s) & Common corr & Balanced & $(I-H_{ii})$ inv. \\
\midrule
\textit{Baseline} & $\hat{V}_{LZ}$, $\hat{V}_{DF}$ & \xmark & \xmark & \xmark \\
\midrule
\textit{Leverage corrections} & $\hat{V}_{KC}$, $\hat{V}_{MD}$ & \xmark & \xmark & \cmark \\
 & $\hat{V}_{FG}$ & \xmark & \xmark & \xmark \\
\midrule
\textit{Additive stabilizers} & $\hat{V}_{MBN}$ & \xmark & \xmark & \xmark \\
\midrule
\textit{Pooling estimators} & $\hat{V}_{\text{Pan}}$, $\hat{V}_{GST}$ & \cmark & \cmark & \xmark \\
\midrule
\textit{Hybrids} & $\hat{V}_{RS}$ (pooling + additive) & \cmark & \cmark & \xmark \\
 & $\hat{V}_{WL}$, $\hat{V}_{WB}$ (pooling + leverage) & \cmark & \cmark & \cmark \\
 & $\hat{V}_{FW}$, $\hat{V}_{FZ}$ & \xmark & \xmark & \cmark \\
\midrule
\textit{Proposed} (leverage + calibration) & $\VarAR$ & \xmark & \xmark & \cmark \\
\bottomrule
\end{tabular}
}
\end{table}

\subsection{Software availability}
\label{sec:software}

The analyses were conducted in \textsf{R} 4.5.2 and 4.5.3. The project repository for this manuscript contains the simulation, application, table, and figure scripts, and the reusable PGEE implementation is available as the public GitHub package \texttt{pgeeVar} (\url{https://github.com/awanafiaz/pgeeVar}). The package provides PGEE fitting, correlated binary-data generation, and all fourteen variance estimators considered in this paper (the thirteen literature estimators plus $\VarAR$). The current implementation is scoped to repeated-measure subjects with at least two observations per subject, so singleton subjects are excluded. The core package has no external system dependencies; at the R level it relies only on \texttt{MASS} and \texttt{utils}, both included in a standard \textsf{R} installation.

\section{Simulation Study}
\label{sec:simulation}

\subsection{Design}
\label{sec:sim-design}

We generate correlated binary longitudinal data using the conditional linear family (CLF) method of \citet{qaqish2003family}. Each dataset follows the marginal model
\begin{equation*}
\text{logit}(\mu_{ij}) = \beta_0 + \beta_1 x_i + \beta_2 t_{ij},
\quad i = 1, \ldots, N,\; j = 1, \ldots, n_i,
\end{equation*}
where $x_i$ is a subject-level treatment indicator, $t_{ij} = 0.2j$ is time, and the treated-subject proportion is fixed at $\gamma$ within each scenario. Under the power alternative, $\beta_1 = \log(2)$ and $\beta_2 = 0.2$; under the null, the relevant coefficient is set to zero. The intercept $\beta_0$ is calibrated for each combination of target event rate and number of repeated observations per subject to achieve the desired marginal prevalence. True within-subject correlation is exchangeable or AR(1), with matched or deliberately mismatched working correlation. All models are fit using PGEE with the Firth penalty.

The base design comprises 192 scenarios organized into nine groups. The primary evaluation uses 96 null scenarios for $\beta_1$, crossing 4 sample sizes ($N \in \{10, 20, 30, 50\}$), 3 event rates ($10\%, 20\%, 30\%$), 4 true-correlation levels ($\rho \in \{0.05, 0.10, 0.20, 0.30\}$), and 2 correlation structures (exchangeable and AR(1)) ($4 \times 3 \times 4 \times 2 = 96$). Sixteen scenarios test robustness to working-correlation misspecification ($2 \times 4 \times 2 = 16$ across event rates, sample sizes, and misspecification directions). Eight scenarios use unbalanced numbers of repeated observations per subject ($4 \times 2 = 8$ across sample sizes and imbalance patterns). Eight scenarios at $5\%$ event rate with $N \in \{30, 50\}$ and eight at $N = 5$ with $n_i \in \{6, 8\}$ probe the boundaries of the method. Eight scenarios with both $\beta_1 = \beta_2 = 0$ evaluate type I error for the time covariate.

Sixteen power scenarios cross 4 sample sizes, 2 true-correlation levels, and 2 correlation structures ($4 \times 2 \times 2 = 16$) under $\beta_1 = \log(2)$ (odds ratio 2). We also reran 24 scenarios with two alternative values of the treated-subject proportion $\gamma$ ($\gamma = 0.2$ and $\gamma = 0.5$ instead of the core value $0.3$) across three event rates and four sample sizes to check whether the results change materially under treatment-allocation imbalance. Eight scenarios fit the reduced model with $p = 2$ (intercept and treatment only, no time covariate) as a check that $\VarAR$ performs comparably in the simpler setting.

We evaluate fourteen variance estimators: the thirteen literature estimators reviewed in Section~\ref{sec:existing} ($\hat{V}_{LZ}$, $\hat{V}_{DF}$, $\hat{V}_{KC}$, $\hat{V}_{MD}$, $\hat{V}_{FG}$, $\hat{V}_{MBN}$, $\hat{V}_{\text{Pan}}$, $\hat{V}_{GST}$, $\hat{V}_{WL}$, $\hat{V}_{WB}$, $\hat{V}_{RS}$, $\hat{V}_{FW}$, $\hat{V}_{FZ}$) plus our proposed $\VarAR$.

The primary metric is type I error: the proportion of replications rejecting $H_0$ at nominal level 0.05 via a two-sided Wald $t$-test with $N - p$ degrees of freedom. We also report the median SE/SimSE ratio, the median across replications of the estimated standard error divided by the simulation standard error of $\betastar$, with a target of $1.0$. We use the median rather than the mean because the $(I - H_{ii})^{-1}$ correction produces extreme values in $3$--$5\%$ of near-separated datasets at $N = 10$; the median is robust to these outliers. For the null scenarios considered here, 95\% Wald coverage is exactly the complement of the rejection rate, so we report type I error in the main text and use SE/SimSE as the direct variance-calibration diagnostic.

Each scenario uses $B = 5{,}000$ replications. The Monte Carlo standard error for a rejection rate near $0.05$ is at most $\sqrt{0.05 \times 0.95 / 5000} \approx 0.0031$, so differences of $0.01$ or more are reliably detected. In the 192-scenario core design, the CLF generator produced no invalid draws, meaning all conditional probabilities remained in $(0,1)$, so all exclusions arise from PGEE non-convergence. PGEE converged in more than $93\%$ of replications in every scenario, and in more than $97\%$ for $185$ of $192$ scenarios; the weakest cells were the $N = 10$, $10\%$-event settings at higher correlation, with a minimum convergence rate of $0.934$ at $\rho = 0.3$ under exchangeable truth. All operating-characteristic summaries are therefore conditional on successful PGEE convergence.

\subsection{Results}
\label{sec:sim-results}

Figure~\ref{fig:sim-typeI} reports type I error for $\beta_1$ (treatment effect) across sample sizes $N \in \{10, 20, 30, 50\}$ and event rates $\{10\%, 20\%, 30\%\}$. Rows fix $N$; columns group the estimators into baseline, leverage corrections, additive and non-pooling hybrids, and pooling. $\VarAR$ is repeated in every panel in orange for direct comparison against each family. The uncorrected $\hat{V}_{LZ}$ rejects far too often at $N = 10$, exceeding $0.10$ at the lowest event rate. The leverage corrections $\hat{V}_{KC}$ and $\hat{V}_{FZ}$ improve on $\hat{V}_{LZ}$ but remain anti-conservative at small $N$, particularly at the $10\%$ event rate. Among non-pooling estimators, $\hat{V}_{FG}$, $\hat{V}_{MD}$, and $\hat{V}_{FW}$ are closest to nominal across the full grid; $\VarAR$ is slightly more conservative in the aggregate but its advantage concentrates at low event rates where near-separation is most severe. The pooling and pooling-based hybrid estimators ($\hat{V}_{WL}$, $\hat{V}_{WB}$, $\hat{V}_{GST}$, $\hat{V}_{RS}$) sit well below nominal, providing safe but conservative inference at the cost of a common correlation structure and, in our direct unstructured implementation, equal numbers of repeated observations per subject (Table~\ref{tab:assumptions}).

Among the estimators that handle unbalanced designs without a leverage condition, $\hat{V}_{MBN}$ stands out at $0.018$: it achieves safety through global regularization rather than any leverage correction. However, at roughly one-third the nominal rejection rate, the cost in power is severe. $\VarAR$ at $0.033$ is less conservative while still remaining well below nominal because the score-level leverage correction targets the self-shrinkage bias directly rather than applying blanket inflation.

By $N = 50$, the corrected estimators converge toward nominal, though $\hat{V}_{LZ}$ remains at $0.057$, consistent with Proposition~\ref{prop:consistency}. Across all 176 null scenarios, $\VarAR$ has lower type I error than $\hat{V}_{KC}$ in every one.

The advantage of $\VarAR$ is most pronounced at low event rates. At $10\%$ events with $N = 10$ (fewer than 1 event per subject on average), $\hat{V}_{KC}$ inflates to $0.102$ while $\VarAR$ holds at $0.038$. At $30\%$ events, the gap narrows to $\hat{V}_{KC} = 0.043$ versus $\VarAR = 0.025$. The underlying variance estimates tell a consistent story. The median ratio of estimated to simulation standard error at $N = 10$ is $0.73$ for $\hat{V}_{LZ}$ (underestimates by $27\%$), $0.86$ for $\hat{V}_{KC}$ (underestimates by $14\%$), and $1.10$ for $\VarAR$ (overestimates by $10\%$). The leading-order prediction in Proposition~\ref{prop:bias}(iii) applies to the ratio of $\VarAR$ to $\hat{V}_{MD}$ standard errors. In the v5 null grid, the empirical AR/MD standard-error ratio is $1.068$ at $N = 10$, close to $\sqrt{c_N} = 1.082$; by $N = 50$, the empirical ratio is $1.0154$, matching $\sqrt{c_N} = 1.0153$. The upward shift comes primarily from the multiplicative factor $c_N = \mathrm{FPC}\cdot\mathrm{Bessel}$; the mean-centering term is lower order in Proposition~\ref{prop:bias}(iii). The slight overestimation of $\VarAR$ is the finite-sample cost of the cross-subject terms identified in Theorem~\ref{thm:overcorrection}. We report medians rather than means because $3$--$5\%$ of near-separated datasets at $N = 10$ produce extreme $(I - H_{ii})^{-1}$ values that distort the arithmetic mean.

The top row of Figure~\ref{fig:sim-typeI} ($N = 10$) makes this event-rate pattern explicit: at $10\%$ events (fewer than one event per subject on average), $\hat{V}_{LZ}$ reaches $0.145$ and $\hat{V}_{KC}$ reaches $0.102$, while $\VarAR$ holds at $0.038$. As the event rate increases to $30\%$, the most anti-conservative estimators move toward nominal, whereas the conservative estimators $\hat{V}_{MBN}$ and $\VarAR$ move farther below it; the broad ranking is otherwise preserved.

\begin{figure}[!htbp]
\centering
\includegraphics[width=\textwidth,height=0.62\textheight,keepaspectratio]{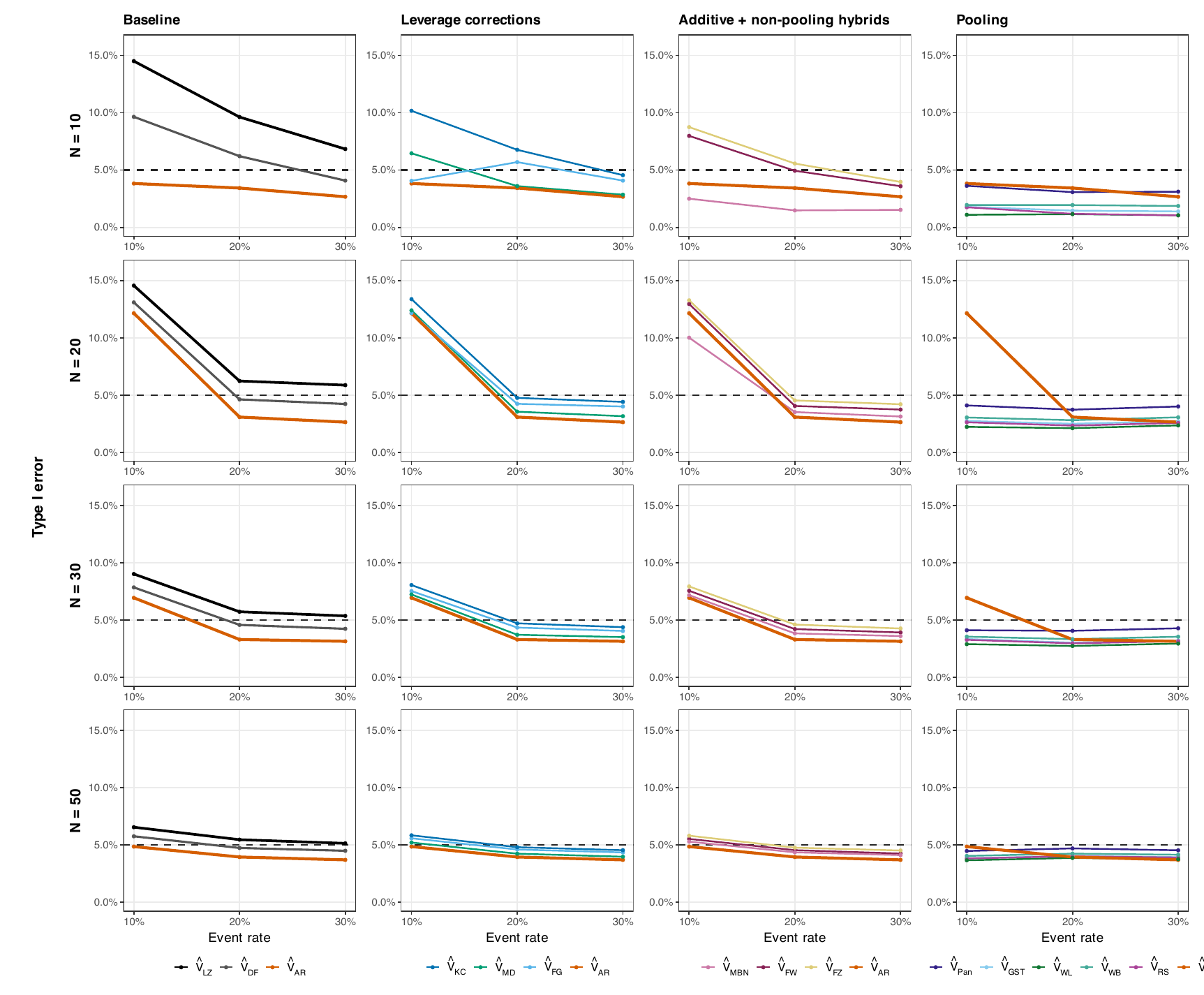}
\caption{Type I error for $\beta_1$ (treatment effect) under $H_0$, by number of subjects $N$ and event rate. Rows fix $N \in \{10, 20, 30, 50\}$; columns group estimators into baseline ($\hat{V}_{LZ}$, $\hat{V}_{DF}$), leverage corrections ($\hat{V}_{KC}$, $\hat{V}_{MD}$, $\hat{V}_{FG}$), additive and non-pooling hybrids ($\hat{V}_{MBN}$, $\hat{V}_{FW}$, $\hat{V}_{FZ}$), and pooling / pooling-based hybrids ($\hat{V}_{\text{Pan}}$, $\hat{V}_{GST}$, $\hat{V}_{WL}$, $\hat{V}_{WB}$, $\hat{V}_{RS}$). $\VarAR$ (orange) appears in every panel for direct comparison. Each cell averages the per-scenario rejection rate across four true-correlation levels $\rho \in \{0.05, 0.10, 0.20, 0.30\}$ and two correlation structures (exchangeable, AR(1)). Dashed horizontal line marks the nominal level $0.05$; $B = 5{,}000$ replications per scenario.}
\label{fig:sim-typeI}
\end{figure}

Table~\ref{tab:power} reports power for $\beta_1$ under the alternative $\beta_1 = \log(2)$ (odds ratio $2$), averaged over $\rho$ and correlation structure. Among the better-calibrated non-pooling estimators, $\VarAR$ pays a modest power penalty relative to $\hat{V}_{MD}$: it retains $77\%$ of $\hat{V}_{MD}$ power at $N = 10$ and $97\%$ at $N = 50$. $\hat{V}_{MBN}$ matches $\VarAR$ at $N = 10$ ($0.044$ versus $0.044$) and is slightly more powerful at $N \geq 20$, consistent with its slightly less conservative type~I error. Relative to $\hat{V}_{LZ}$, $\VarAR$ detects the alternative less often at $N = 10$ ($0.044$ versus $0.128$), but $\hat{V}_{LZ}$'s power is inflated by its anti-conservative type~I error ($0.103$).

\begin{table}[t]
\centering
\caption{Power for $\beta_1$ under $H_1\!: \beta_1 = \log(2)$, event rate $20\%$, averaged over $\rho$ and correlation structure. $B = 5{,}000$ replications per scenario.}
\label{tab:power}
\begin{tabular}{lcccc}
\toprule
Estimator & $N = 10$ & $N = 20$ & $N = 30$ & $N = 50$ \\
\midrule
$\hat{V}_{LZ}$  & 0.128 & 0.191 & 0.255 & 0.386 \\
$\hat{V}_{KC}$  & 0.088 & 0.158 & 0.227 & 0.365 \\
$\hat{V}_{MD}$  & 0.057 & 0.126 & 0.200 & 0.345 \\
$\hat{V}_{FG}$  & 0.065 & 0.140 & 0.212 & 0.355 \\
$\hat{V}_{MBN}$ & 0.044 & 0.130 & 0.205 & 0.349 \\
\midrule
$\bm{\VarAR}$   & \textbf{0.044} & \textbf{0.110} & \textbf{0.185} & \textbf{0.334} \\
\bottomrule
\end{tabular}
\end{table}

A natural question is why $\hat{V}_{MD}$, which has provably positive bias in expectation (Proposition~\ref{prop:bias}), can nevertheless produce anti-conservative tests. The resolution is that positive bias is an expectation-level property, while type~I error is a quantile-level property. Under near-separation, $3$--$5\%$ of datasets produce extreme $(I - H_{ii})^{-1}$ values that inflate $\hat{V}_{MD}$ dramatically. These outlier datasets pull the mean SE well above the target. In the remaining $95$--$97\%$ of datasets, the SE is moderately below the simulation SE. The variance estimator underestimates, and the test rejects. Because the test statistic is computed per dataset, it is the per-dataset SE that determines rejection, not the cross-dataset average. A right-skewed estimator can have $\E[\hat{V}] > \mathrm{Var}(\hat{\beta})$ and median $\hat{V} < \mathrm{Var}(\hat{\beta})$ simultaneously, producing anti-conservative tests despite positive mean bias. Empirically, $\VarAR$ and $\hat{V}_{MD}$ have nearly identical distributional properties (Table~\ref{tab:stability}). The coefficient of variation, skewness, and tail ratios are similar. $\VarAR$'s improved calibration is not achieved through distributional tightening but through a deterministic upward translation. The positive-definite centering remainder and FPC act as a finite-sample buffer, shifting the median SE to approximately $10\%$ above the simulation SE.

Figure~\ref{fig:se-ratio} traces the median SE/SimSE ratio for $\beta_1$ across sample sizes and event rates, making the upward-translation story visible: $\hat{V}_{LZ}$ sits persistently below the target of $1.0$, $\hat{V}_{KC}$ climbs toward the target but remains below, and $\VarAR$ crosses above the target at small $N$ and returns toward it as $N$ grows. Full distributional statistics (CV, skewness, tail ratios) at $N = 10$ are given in Appendix Table~\ref{tab:stability}.

\begin{figure}[!htbp]
\centering
\includegraphics[width=\textwidth]{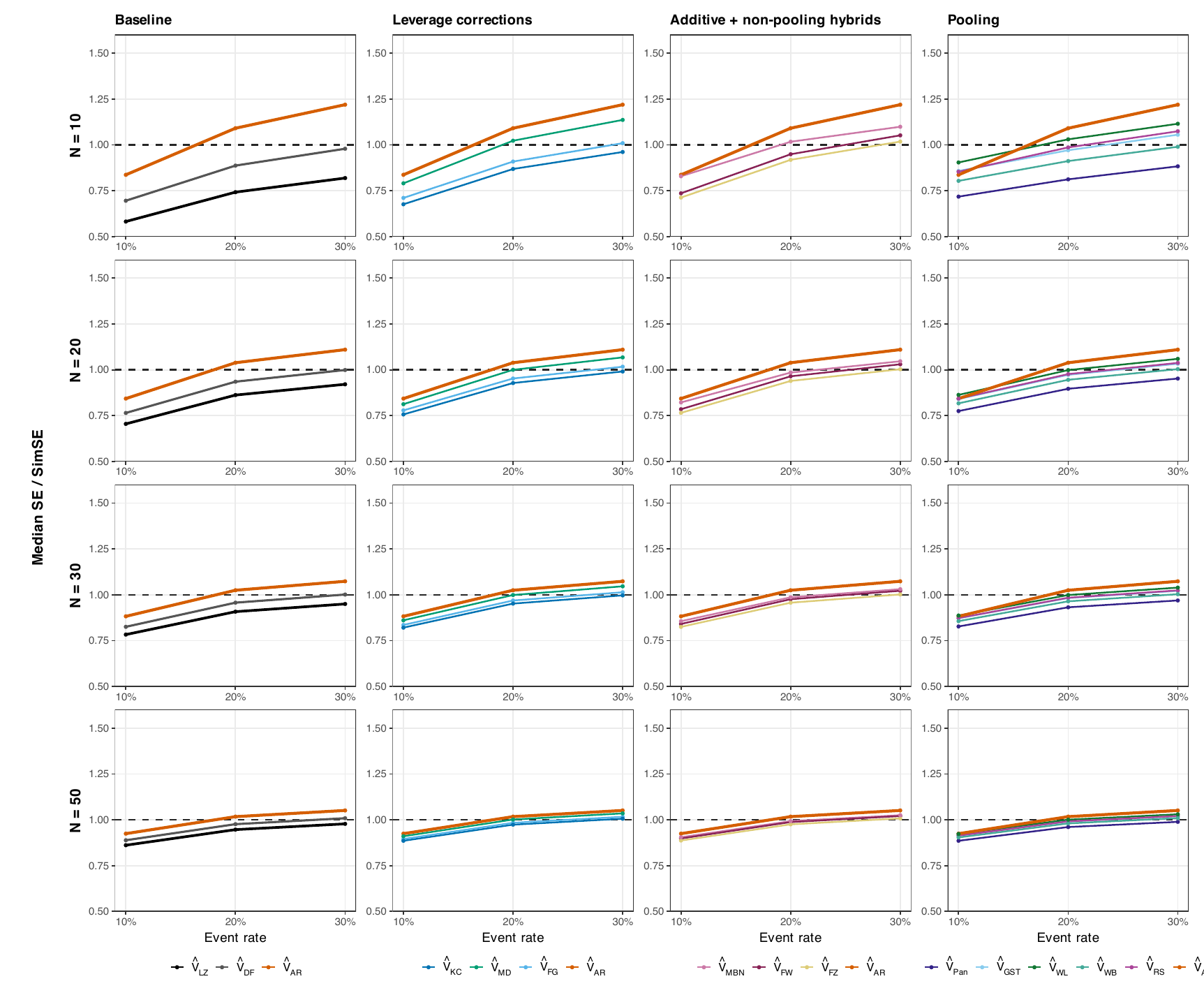}
\caption{Median standard error divided by simulation standard error (SE/SimSE) for $\beta_1$ under $H_0$, by $N$ and event rate. Target value is $1.0$ (dashed horizontal line): above is conservative, below is anti-conservative. Rows fix $N \in \{10, 20, 30, 50\}$; columns group estimators as in Figure~\ref{fig:sim-typeI}. Each cell averages the per-scenario median SE/SimSE across four true-correlation levels and two correlation structures. $\VarAR$ (orange) repeated in every panel.}
\label{fig:se-ratio}
\end{figure}

The results are robust to working correlation misspecification. Averaged over both working-correlation misspecification directions at $N = 10$, $\VarAR$ maintains rejection rate $0.039$, compared to $0.085$ for $\hat{V}_{KC}$ and $0.124$ for $\hat{V}_{LZ}$. Figures~\ref{fig:typeI-by-corr} and~\ref{fig:typeI-misspec} in the appendix decompose this picture by true correlation structure and by direction of working-correlation misspecification.

$\VarAR$ also handles unbalanced numbers of repeated observations per subject without modification. Figure~\ref{fig:typeI-unbal} reports type I error under two unbalanced patterns ($n_i \in \{2, 6\}$ and $n_i \in \{3, 8\}$) with event rate fixed at $20\%$ and $\rho = 0.2$. Across both patterns and all $N$, $\VarAR$ remains close to nominal while $\hat{V}_{KC}$ and $\hat{V}_{LZ}$ drift anti-conservative at small $N$. Five of the thirteen literature estimators cannot be computed at all when subjects have unequal numbers of repeated observations (Table~\ref{tab:assumptions}) and are therefore absent from the figure.

\begin{figure}[!htbp]
\centering
\includegraphics[width=\textwidth]{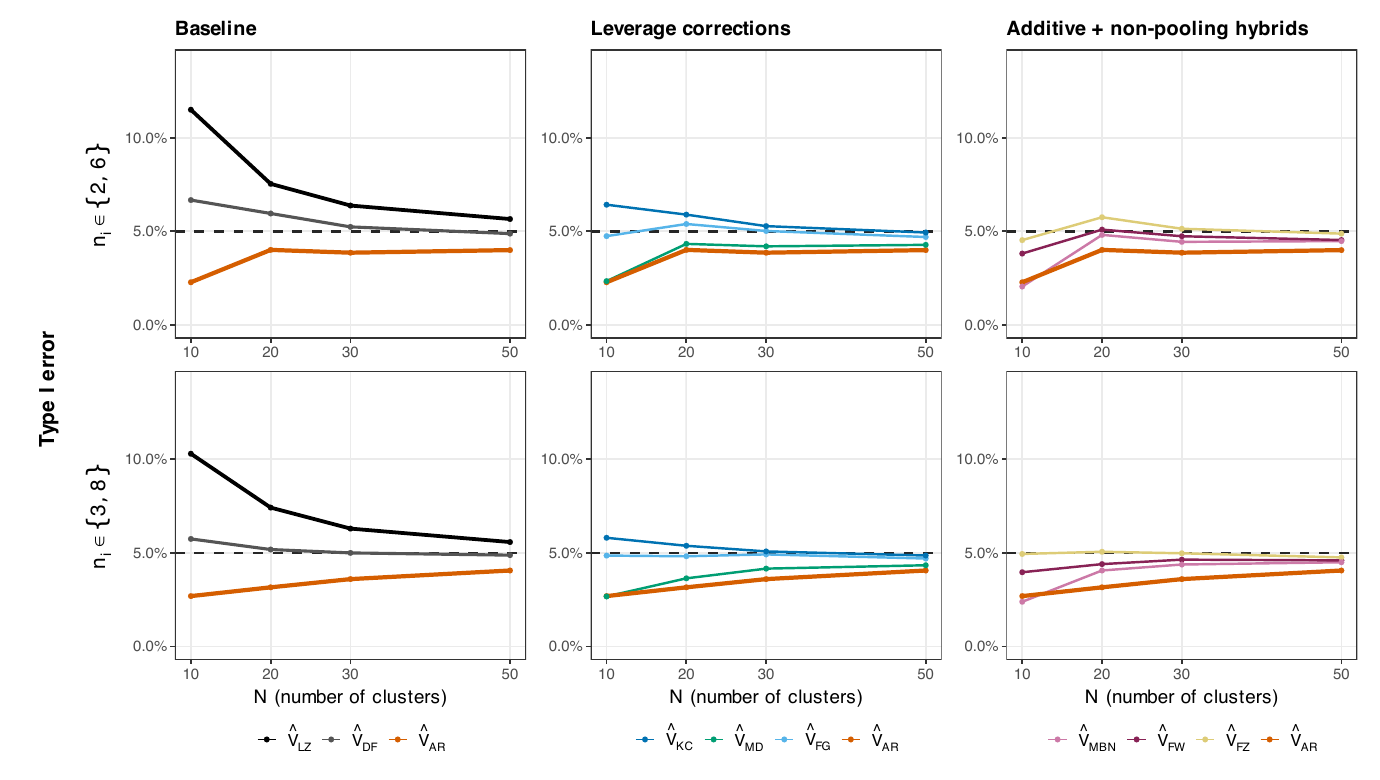}
\caption{Type I error for $\beta_1$ under unbalanced repeated-measures designs. Rows fix the pattern of repeated observations per subject; columns group estimators as in Figure~\ref{fig:sim-typeI} with the pooling family omitted because those estimators are not computable under the direct unstructured implementation for unequal $n_i$. The $x$-axis is the number of subjects $N$; event rate is fixed at $20\%$ and $\rho = 0.2$ with exchangeable true correlation. Dashed horizontal line marks the nominal level $0.05$.}
\label{fig:typeI-unbal}
\end{figure}

The same qualitative ranking of estimators holds when the grid is sliced by the true within-subject correlation $\rho$ rather than aggregated over it: Figure~\ref{fig:typeI-by-rho} in the appendix reports type I error at $N = 10$ across $\rho \in \{0.05, 0.10, 0.20, 0.30\}$, and Figure~\ref{fig:se-ratio-by-rho} gives the matched SE/SimSE view. $\VarAR$ tracks the target across all four $\rho$ levels while $\hat{V}_{KC}$ and $\hat{V}_{LZ}$ drift anti-conservative as $\rho$ grows.

Two regimes expose the limits of all corrections that do not pool across subjects. At $5\%$ events with $N = 30$ (approximately 6 total events), most non-pooling estimators reject at about $0.17$--$0.19$; $\hat{V}_{MBN}$ is lower at $0.158$, and only the pooled estimators and pooling-based hybrids hold near nominal. At $N = 5$ with larger numbers of repeated observations per subject ($n_i \in \{6, 8\}$), all estimators are extremely conservative because the $t$-test with $N - p = 2$ degrees of freedom is the binding constraint, not the variance estimate.

Finally, the overcorrection is parameter-specific in the null rejection rates. Under the null for both $\beta_1$ and $\beta_2$ at $N = 10$, $\hat{V}_{KC}$ reaches $0.050$ for $\beta_2$ but is anti-conservative for $\beta_1$ ($0.059$). By contrast, $\VarAR$ is conservative for both parameters ($0.030$ for $\beta_1$, $0.031$ for $\beta_2$). This tradeoff is a direct consequence of the direction-dependent overcorrection in Theorem~\ref{thm:overcorrection}: the $(I - H_{ii})^{-1}$ correction is tuned to the parameter most affected by separation, and no single scalar correction can serve all parameters equally at small $N$. We return to this open problem in Section~\ref{sec:discussion}.

\section{Applications}
\label{sec:applications}

We illustrate $\VarAR$ on two datasets that expose different aspects of the small-sample variance estimation problem. The goal is to show how the estimator rankings and the overcorrection diagnostic $\rho_s$ behave in live-data settings with different leverage structures, not to support new substantive scientific claims. In each example we fit the PGEE model with exchangeable working correlation and compute all thirteen literature estimators from Section~\ref{sec:existing} plus $\VarAR$. For estimator $k$ and parameter index $s$, the reported standard error is $\mathrm{SE}_{k,s} = \sqrt{[\hat{V}_k]_{ss}}$, the square root of the $s$th diagonal element of the estimated covariance matrix. When a table focuses on a single coefficient we abbreviate this as $\mathrm{SE}_k$. For every estimator, $p$-values are computed from the same two-sided Wald $t$-test using the corresponding estimated standard error and $N-p$ degrees of freedom. We report the overcorrection ratios $\rho_s = [B_{\mathrm{lev}}]_{ss}/[I_0]_{ss}$ from Theorem~\ref{thm:overcorrection} as a per-parameter diagnostic.

\subsection{Toenail onychomycosis trial}
\label{sec:toenail}

The toenail data originate from a 12-week double-blind randomized trial of $372$ patients comparing terbinafine ($250$~mg/day) against itraconazole ($200$~mg/day) for dermatophyte toe onychomycosis \citep{debacker199612}. The version distributed in the \texttt{geesmv} R package \citep{wang2015geesmv} uses $224$ patients as the complete-visit sampling frame, with the time covariate recoded to fractional months. The binary outcome is infection severity: $0$ = none or mild, $1$ = moderate or severe. We fit the marginal model
\begin{equation*}
\text{logit}(\mu_{ij}) = \beta_0 + \beta_1 x_i + \beta_2 t_{ij},
\quad i = 1, \ldots, N, \; j = 1, \ldots, 7.
\end{equation*}
where $x_i$ is the treatment indicator and $t_{ij}$ is the fractional-month time covariate.

Because the full dataset is too large to expose small-sample phenomena, we restrict to patients who completed all seven visits and draw reproducible subsamples from this complete-visit frame. The reported cases are fixed complete-visit subsamples with realized sizes $N = 10$, $13$, and $60$; the exact seeds and subsampling details are given in the appendix. This produces three scenarios of increasing sample size, each exhibiting a different degree of separation in the treatment-by-outcome cross-classification:

\begin{itemize}
\item \textbf{Case~I} ($N = 10$): Quasi-complete separation. The $2 \times 2$ cell counts are $(39, 17; 14, 0)$: no terbinafine patient reports moderate/severe infection at any visit, creating a zero cell. The estimated correlation is $\hat{\alpha} = 0.161$.
\item \textbf{Case~II} ($N = 13$): Near separation. Cell counts are $(44, 12; 32, 3)$. The treated-infection cell contains only three observations. $\hat{\alpha} = 0.100$.
\item \textbf{Case~III} ($N = 60$): Larger-sample near separation. Cell counts are $(190, 27; 145, 58)$. $\hat{\alpha} = 0.332$.
\end{itemize}

Table~\ref{tab:toenail} reports the standard errors and $p$-values for both $\hat{\beta}_1$ (treatment) and $\hat{\beta}_2$ (time) in Case~I, the most informative scenario; Figure~\ref{fig:applied-intervals} in the appendix visualizes the corresponding 95\% Wald intervals ordered by width. With $\hat{\beta}_1 = -3.16$, standard errors range from $0.97$ ($\hat{V}_{LZ}$) to $2.89$ ($\hat{V}_{GST}$), a factor of 3. Four estimators cross the 0.05 threshold for the treatment effect. They are $\hat{V}_{LZ}$ ($p = 0.014$), $\hat{V}_{DF}$ ($p = 0.029$), $\hat{V}_{KC}$ ($p = 0.039$), and $\hat{V}_{FG}$ ($p = 0.041$). In contrast, $\hat{V}_{MD}$, $\hat{V}_{MBN}$, the pooled and pooling-based estimators, and $\VarAR$ do not; $\VarAR$ gives $p = 0.094$.

The time effect is less sensitive to the $0.05$ decision rule, but its standard errors still vary by estimator. The overcorrection ratios from Theorem~\ref{thm:overcorrection} quantify this asymmetry. Here $\rho_1 = 1.00$ for treatment and $\rho_2 = 0.14$ for time. The treatment parameter carries about seven times the relative overcorrection of the time parameter. This explains why estimators that apply a common correction form across parameters can change the treatment decision while leaving the time decision unchanged. An additional $N = 10$ subsample is reported in Appendix~\ref{app:toenail-seed7375}; it shows the same decision sensitivity with broader disagreement among existing estimators.

When we expand the subset to $N = 13$ (Case~II, near separation), the differential leverage drops to $\rho_1 = 0.25$ and the estimators begin to align, though $\VarAR$ remains toward the upper end of the computable SE range ($\text{SE} = 1.40$ versus $\hat{V}_{KC} = 1.20$). At $N = 60$ (Case~III), the leverage vanishes ($\rho_1 = 0.04$) and all estimators converge to a standard error of approximately $0.57$, confirming the asymptotic consistency of Proposition~\ref{prop:consistency}. Full tables for Cases~II and III are in Appendix~\ref{app:applied-full}.

For the actual Case~I allocation ($N_1 = 2$, $N_0 = 8$), direct evaluation of Theorem~\ref{thm:overcorrection} in the three-parameter model gives $\rho_1 = 1.00$, the same value predicted by the $p = 2$ scalar benchmark $1/(N_1 - 1)$. This is four times the value $0.25$ that the same formula would give under balanced $N_1 = N_0 = 5$ allocation. The contrast reflects arm imbalance through $N_{\min}$, not a failure of within-group balance: in Case~I all subjects have $n_i = 7$. By Case~II ($N = 13$, near separation), $\rho_1 = 0.25$ matches the same benchmark, and by Case~III ($N = 60$), $\rho_1 = 0.04$ is negligible.

\begin{table}[!htbp]
\centering
\footnotesize
\caption{Toenail Case~I applied results ($N = 10$, balanced with $n_i = 7$, quasi-complete separation, $\hat{\alpha} = 0.161$). Standard errors and $p$-values across fourteen variance estimators for the treatment effect $\hat{\beta}_1 = -3.16$ and the time effect $\hat{\beta}_2 = -0.42$. Significant at nominal level $0.05$ marked with $^{*}$.}
\label{tab:toenail}
\begin{tabular}{l cc cc}
\toprule
& \multicolumn{2}{c}{$\hat{\beta}_1$ (treatment)} & \multicolumn{2}{c}{$\hat{\beta}_2$ (time)} \\
\cmidrule(lr){2-3} \cmidrule(lr){4-5}
Estimator & SE & $p$ & SE & $p$ \\
\midrule
$\hat{V}_{LZ}$          & 0.97 & 0.014$^{*}$ & 0.11 & 0.006$^{*}$ \\
$\hat{V}_{DF}$          & 1.16 & 0.029$^{*}$ & 0.13 & 0.015$^{*}$ \\
$\hat{V}_{KC}$          & 1.25 & 0.039$^{*}$ & 0.12 & 0.010$^{*}$ \\
$\hat{V}_{MD}$          & 1.66 & 0.098       & 0.13 & 0.014$^{*}$ \\
$\hat{V}_{FG}$          & 1.26 & 0.041$^{*}$ & 0.12 & 0.009$^{*}$ \\
$\hat{V}_{MBN}$         & 1.62 & 0.092       & 0.15 & 0.024$^{*}$ \\
$\hat{V}_{\text{Pan}}$  & 2.42 & 0.232       & 0.10 & 0.004$^{*}$ \\
$\hat{V}_{GST}$         & 2.89 & 0.310       & 0.12 & 0.010$^{*}$ \\
$\hat{V}_{WL}$          & 2.80 & 0.296       & 0.12 & 0.009$^{*}$ \\
$\hat{V}_{WB}$          & 2.58 & 0.261       & 0.11 & 0.007$^{*}$ \\
$\hat{V}_{RS}$          & 2.76 & 0.290       & 0.13 & 0.017$^{*}$ \\
$\hat{V}_{FW}$          & 1.47 & 0.068       & 0.12 & 0.012$^{*}$ \\
$\hat{V}_{FZ}$          & 1.46 & 0.067       & 0.12 & 0.011$^{*}$ \\
\midrule
$\bm{\VarAR}$           & \textbf{1.63} & \textbf{0.094}       & \textbf{0.14} & \textbf{0.021}$^{*}$ \\
\midrule
\multicolumn{5}{l}{\small $\rho_s$: 0.19 (intercept), 1.00 (treatment), 0.14 (time)} \\
\bottomrule
\end{tabular}
\end{table}

\subsection{Contagious bovine pleuropneumonia}
\label{sec:cbpp}

The CBPP data record incidence of contagious bovine pleuropneumonia across $15$ cattle herds in the Ethiopian highlands over four time periods \citep{lesnoff2004cbpp}. The data are available in the \texttt{lme4} R package \citep{bates2015lme4} in aggregated form (number infected per herd-period); we expand to individual-level binary outcomes. Cluster sizes range from $26$ to $96$ observations per herd, giving $n^* = 842$ total observations with an overall event rate of $11.8\%$.

We include CBPP as a numerical stress test of $\VarAR$ under a small-cluster dataset with unbalanced sizes and a constructed high-leverage covariate. We define $h_i$ equal to one if herd $i$'s baseline (period~1) incidence rate exceeds the median across herds, and fit
\begin{equation*}
\text{logit}(\mu_{ij}) = \beta_0 + \beta_1 h_i + \beta_2 t_{ij}.
\end{equation*}
where $t_{ij}$ indexes period. Because $h_i$ is derived from the response in period~1 and the model includes period~1 on the left-hand side, the $p$-values for $\hat\beta_1$ should be read as illustrating the estimator's behavior under an engineered high-leverage design rather than as evidence for a pre-existing herd risk factor.

Table~\ref{tab:cbpp} reports the results. The PGEE fit yields $\hat{\beta}_1 = 1.11$ and $\hat{\beta}_2 = -0.52$. The overcorrection ratios are $\rho_0 = 0.18$, $\rho_1 = 0.21$, $\rho_2 = 0.15$, reflecting moderate differential leverage with the herd-incidence parameter slightly more concentrated.

Because cluster sizes are unbalanced (range $26$--$96$), the pooled correlation matrix $\hat{R}_u$ in \eqref{eq:Ru} is not well-defined in the direct unstructured implementation used here. Thus $\hat{V}_{\text{Pan}}$, $\hat{V}_{GST}$, $\hat{V}_{WL}$, $\hat{V}_{WB}$, and $\hat{V}_{RS}$ cannot be computed. $\VarAR$ operates at the $p$-dimensional score level, where cluster size does not enter, and can still be computed. Among the eight computable estimators, $\VarAR$ gives the largest standard error for $\hat{\beta}_1$ ($\text{SE} = 0.38$, versus $\hat{V}_{LZ} = 0.31$ and $\hat{V}_{KC} = 0.34$). This is qualitatively consistent with the upward shift induced mainly by the multiplicative finite-sample factor $c_N = \mathrm{FPC}\cdot\mathrm{Bessel}$, with centering contributing a smaller correction.

\begin{table}[!htbp]
\centering
\footnotesize
\caption{CBPP applied results ($N = 15$ herds, unbalanced cluster sizes $26$--$96$). Standard errors and $p$-values across the computable variance estimators for the herd-incidence effect $\hat{\beta}_1 = 1.11$ and the period effect $\hat{\beta}_2 = -0.52$. Significant at nominal level $0.05$ marked with $^{*}$. Dashes indicate the pooling estimators ($\hat{V}_{\text{Pan}}$, $\hat{V}_{GST}$, $\hat{V}_{WL}$, $\hat{V}_{WB}$, $\hat{V}_{RS}$) which are not computable under the direct unstructured implementation with unequal $n_i$.}
\label{tab:cbpp}
\begin{tabular}{l cc cc}
\toprule
& \multicolumn{2}{c}{$\hat{\beta}_1$ (high incidence)} & \multicolumn{2}{c}{$\hat{\beta}_2$ (period)} \\
\cmidrule(lr){2-3} \cmidrule(lr){4-5}
Estimator & SE & $p$ & SE & $p$ \\
\midrule
$\hat{V}_{LZ}$   & 0.31 & 0.004$^{*}$ & 0.14 & 0.003$^{*}$ \\
$\hat{V}_{DF}$   & 0.35 & 0.008$^{*}$ & 0.16 & 0.006$^{*}$ \\
$\hat{V}_{KC}$   & 0.34 & 0.007$^{*}$ & 0.16 & 0.007$^{*}$ \\
$\hat{V}_{MD}$   & 0.37 & 0.011$^{*}$ & 0.17 & 0.010$^{*}$ \\
$\hat{V}_{FG}$   & 0.36 & 0.010$^{*}$ & 0.15 & 0.004$^{*}$ \\
$\hat{V}_{MBN}$  & 0.37 & 0.011$^{*}$ & 0.16 & 0.007$^{*}$ \\
Pooling          & ---  & ---         & ---  & ---         \\
$\hat{V}_{FW}$   & 0.35 & 0.009$^{*}$ & 0.17 & 0.008$^{*}$ \\
$\hat{V}_{FZ}$   & 0.34 & 0.007$^{*}$ & 0.16 & 0.008$^{*}$ \\
\midrule
$\bm{\VarAR}$    & \textbf{0.38} & \textbf{0.014}$^{*}$ & \textbf{0.20} & \textbf{0.022}$^{*}$ \\
\midrule
\multicolumn{5}{l}{\small $\rho_s$: 0.18 (intercept), 0.21 (high incidence), 0.15 (period)} \\
\bottomrule
\end{tabular}
\end{table}

\section{Discussion}
\label{sec:discussion}

This paper contributes first-order asymptotic grounding for PGEE along convergent interior-root sequences, a first-order matrix characterization of parameter-specific leverage overcorrection under correctly specified working covariance, and a conservative non-pooling variance estimator for treatment-effect inference in the low-event, small-$N$ regime where PGEE inference is most fragile. We close with practical guidance, limitations, and directions for future work.

For treatment-effect inference in the low-event, small-$N$ non-pooling PGEE settings studied here, we recommend $\VarAR$ as a conservative alternative when standard leverage corrections remain anti-conservative. At $N = 10$ with $10\%$ event rates, where the advantage is most pronounced, $\VarAR$ provides the strongest overall non-pooling calibration with an $N - p$ degrees-of-freedom $t$-test, with $\hat{V}_{MD}$ and $\hat{V}_{FG}$ as the closest competitors on average across the simulation grid. When numbers of repeated observations per subject are equal and the working correlation is correctly specified, pooling estimators and pooling-based hybrids such as $\hat{V}_{WL}$ and $\hat{V}_{GST}$ offer conservative type I error control, but the direct unstructured form cannot be computed for unbalanced designs. We encourage reporting the overcorrection diagnostic $\rho_s$ from Theorem~\ref{thm:overcorrection} alongside standard errors. A large value of $\rho_s$ signals that the leverage correction is contributing excessive variance inflation for parameter $s$, and wide differences in $\rho_s$ across parameters indicate that inference quality will vary by covariate. At very low event rates ($\leq 5\%$) with moderate $N$, all variance estimators that do not pool produce inflated type I error. In this regime, practitioners should verify the stability of $\betastar$ and consider alternative inference procedures such as permutation tests.

$\VarAR$'s calibration at small $N$ is not achieved through a single mechanism. The score-level $(I-H_{ii})^{-1}$ correction eliminates the self-shrinkage bias of fitted residuals, shared with $\hat{V}_{MD}$. This alone is insufficient. $\hat{V}_{MD}$ reaches nominal calibration in the aggregate but is anti-conservative at low event rates (Figure~\ref{fig:sim-typeI}) because the high variability of the variance estimator (CV about $0.23$, Table~\ref{tab:stability}) means a large fraction of datasets produce SEs below the true standard error despite positive mean bias. The upward shift in $\VarAR$ comes primarily from the multiplicative finite-sample factor $c_N = \mathrm{FPC}\cdot\mathrm{Bessel}$, with mean-centering contributing a smaller lower-order correction (Proposition~\ref{prop:bias}). This controlled conservatism is deliberate, explicit, and theoretically explained. It shifts the median SE approximately $10\%$ above the simulation SE at $N = 10$ (Table~\ref{tab:stability}) and guards against anti-conservative rejection at the cost of over-conservatism for some within-subject covariates.

Together, Theorem~\ref{thm:overcorrection}, Corollary~\ref{cor:overcorrection}, and Remark~\ref{rem:rho-interpretation} point to a parameter-specific character of the bias-correction problem in the balanced-design regime we analyze most carefully. Different parameters experience different overcorrection ratios $\rho_s$, and no single scalar correction achieves simultaneous nominal calibration for all parameters at small $N$. We investigated per-parameter scaling (Remark~\ref{rem:scaling}) and found that equalizing the relative overcorrection addresses the wrong source of conservatism. The problem for within-subject covariates is the global FPC inflation, not the leverage correction itself. Two more promising directions remain open. First, one could interpolate the correction exponent $c$ per-parameter, using $\rho_s$ to blend between $c = 1$ (needed for the treatment parameter) and $c = 1/2$ (sufficient for distributed covariates). Second, the overcorrection ratios suggest that the treatment parameter may require a different small-sample reference distribution than within-subject covariates. We do not propose a parameter-specific degrees-of-freedom rule here.

We evaluate all tests using $N - p$ degrees of freedom, following the majority of the comparison literature \citep{ford2018comparison,gosho2023comparison,pengli2015small}. Data-adaptive degrees of freedom, such as the Satterthwaite approximation or the matched degrees of freedom of \citet{fay2001small}, could improve calibration for all estimators. The interaction between variance correction and degrees-of-freedom correction is a natural direction for future work. The anti-conservatism of $\hat{V}_{KC}$ and $\hat{V}_{LZ}$ at $N = 10$ ($0.072$ and $0.103$) is too large to be resolved by a degrees-of-freedom adjustment alone. The binding problem is that the standard error itself underestimates by $14$--$27\%$. $\VarAR$ addresses the volatility of leverage-based corrections through controlled upward translation of the variance estimate rather than by adjusting the reference distribution. A natural future direction is to study whether data-adaptive reference distributions can improve on this tradeoff without sacrificing the protection that $\VarAR$ provides for the treatment parameter.

Beyond variance estimation, the $\rho_s$ diagnostic has potential at the design stage. Under assumed design inputs, namely covariate structure, allocation, and working covariance, $\rho_s$ can be computed before data are collected. A planned value $\rho_1 > 1$ signals that leverage-based corrections will be unreliable for the treatment parameter in the assumed design and motivates increasing $N_{\min}$ if feasible. Pre-specifying a pooling estimator is a separate remedy, conditional on the planned design satisfying common within-subject correlation and, in the direct unstructured implementation, equal numbers of repeated observations per subject. Extension of the overcorrection theory to multi-level or crossed random-effects designs is another natural direction.

Several related questions are intentionally outside the present scope. We study Wald inference with fixed $p$, bounded numbers of repeated observations per subject, and the usual growing-$N$ GEE asymptotics; growing-dimensional covariates, irregular visit processes, missingness mechanisms, and multilevel or crossed clustering require separate theory. Exact, permutation, and bootstrap inference may be preferable in the most sparse designs, but they raise different questions about exchangeability, nuisance-parameter refitting, and computational stability under PGEE. The applied examples are illustrations of the leverage patterns emphasized by the theory, not standalone validation studies of a full analysis workflow. In small-sample PGEE settings near separation, the main message is that variance estimation fails because leverage corrections interact with uneven information across parameters, and that $\rho_s$ and $\VarAR$ together provide a practical way to diagnose and control that failure in the non-pooling regime studied here.

\section*{Acknowledgments}

Part of this work was completed as part of Awan Afiaz's thesis research. We
thank the authors of the \texttt{geefirthr} and \texttt{binarySimCLF} packages
for making their R scripts available.

\section*{Funding}

No funding was received for this work.

\section*{Conflict of interest}

The authors declare no conflict of interest.

\section*{Data and code availability}

All simulated data in this paper can be generated with the public
\texttt{pgeeVar} package (\url{https://github.com/awanafiaz/pgeeVar}). The
real-data examples use the toenail dataset from the \texttt{geesmv} package and
the CBPP dataset from the \texttt{lme4} package. Code for the package, the
worked examples, and the accompanying vignettes is available in the same public
repository.

\section*{Author contributions}

\textbf{AA}: Conceptualization, Methodology, Formal Analysis, Investigation, Software,
Writing - Original Draft, Writing - Review \& Editing, Visualization; \textbf{MSR}:
Conceptualization, Methodology, Supervision, Writing - Review \& Editing.

\section*{Generative AI disclosure}

We utilized Generative AI (Anthropic's Claude Opus 4.5/4.6 and OpenAI's
GPT-5.4) in the production of this manuscript in the following ways: (1)
iteratively improving the concision and clarity of the writing, and (2)
producing and debugging computer code and figure generation. We have carefully
reviewed all aspects of the manuscript for accuracy and coherence. All
scientific insights, analysis, interpretation of data, and conclusions are made
solely by the authors. All errors are our own. This disclosure is adapted from
\href{https://www.econstor.eu/bitstream/10419/307214/1/dp17390.pdf}{Tyler
Ransom (2025)}.

\bibliography{refs}

\clearpage
\appendix

\setcounter{table}{0}
\renewcommand{\thetable}{S\arabic{table}}
\renewcommand{\theHtable}{S\arabic{table}}
\setcounter{figure}{0}
\renewcommand{\thefigure}{S\arabic{figure}}
\renewcommand{\theHfigure}{S\arabic{figure}}
\numberwithin{equation}{section}
\renewcommand{\theequation}{\thesection-\arabic{equation}}
\renewcommand{\theHequation}{\thesection-\arabic{equation}}

\section{Existing variance estimators}
\label{app:existing}

Complete formulas for the thirteen literature estimators referenced in Section~\ref{sec:existing}. Each entry starts from the GEE sandwich \eqref{eq:lz} and states the estimator-specific correction. Notation follows Section~\ref{sec:notation}.

\paragraph{A.1\quad $\hat{V}_{LZ}$ (Liang--Zeger).}
The base sandwich estimator, defined in \eqref{eq:lz}, is
\[
\hat{V}_{LZ} = \Delta\bigl[\sum_{i=1}^N d_i d_i^T\bigr]\Delta,
\]
where $d_i = D_i^T V_i^{-1} r_i$ is the score contribution of subject~$i$.
No finite-sample correction is applied. Consistent as $N \to \infty$ under (R0)--(R7) but downward biased in finite samples because $\E[r_i r_i^T] \neq \Cov(y_i)$.

\paragraph{A.2\quad $\hat{V}_{DF}$ (MacKinnon--White degrees-of-freedom).}
\begin{equation}
\hat{V}_{DF} = \frac{N}{N-p}\,\hat{V}_{LZ}.
\end{equation}
A scalar inflation analogous to Bessel's correction for the sample variance. Removes the $O(p/N)$ leading bias term but does not account for heterogeneous subject leverages.

\paragraph{A.3\quad $\hat{V}_{KC}$ (Kauermann--Carroll).}
Defined by \eqref{eq:vfamily} with $c = 1/2$. Corrects each subject's residual outer product by the matrix square root of $(I - H_{ii})^{-1}$. Requires $(I - H_{ii})$ to be invertible, which can fail when eigenvalues of $H_{ii}$ approach unity.

\paragraph{A.4\quad $\hat{V}_{MD}$ (Mancl--DeRouen).}
Defined by \eqref{eq:vfamily} with $c = 1$. Applies the full $(I - H_{ii})^{-1}$ correction. By Theorem~\ref{thm:overcorrection}, this overcorrects the sandwich bias.

\paragraph{A.5\quad $\hat{V}_{FG}$ (Fay--Graubard).}
\begin{equation}
\hat{V}_{FG} = \Delta\left[\sum_{i=1}^N F_i\, d_i\, d_i^T\, F_i\right]\Delta,
\end{equation}
where $F_i = \diag\bigl\{(1 - \min(b,\, [L_i]_{ss}))^{-1/2}\bigr\}$ is a $p \times p$ diagonal matrix, $L_i = D_i^T V_i^{-1} D_i\,\Delta$, and $b = 0.75$ is the default clipping threshold. Unlike KC and MD, the correction operates at the score level ($p \times p$) rather than the residual level ($n_i \times n_i$): each diagonal entry of $F_i$ inflates the corresponding score component by its estimated leverage.

\paragraph{A.6\quad $\hat{V}_{Pan}$ (Pan).}
Pan's pooled unscaled correlation $\hat{R}_u$ is defined in \eqref{eq:Ru}. The variance estimator substitutes the pooled correlation for the per-subject residual outer products:
\begin{equation}
\hat{V}_{Pan} = \Delta\left[\sum_{i=1}^N D_i^T V_i^{-1} W_i^{1/2}\,\hat{R}_u\, W_i^{1/2} V_i^{-1} D_i\right]\Delta.
\end{equation}
This requires a common within-subject correlation structure across subjects. In the direct unstructured implementation used here, it also requires equal numbers of repeated observations per subject for $\hat{R}_u$ to be well-defined.

\paragraph{A.7\quad $\hat{V}_{GST}$ (Gosho--Sato--Takeuchi).}
The same structure as Pan but with a degrees-of-freedom adjustment in the pooled correlation:
\begin{equation}
\hat{R}_u^{(DF)} = \frac{1}{N-p}\sum_{i=1}^N W_i^{-1/2}\, r_i\, r_i^T\, W_i^{-1/2}.
\end{equation}
The variance estimator replaces $\hat{R}_u$ with $\hat{R}_u^{(DF)}$ in Pan's sandwich.

\paragraph{A.8\quad $\hat{V}_{WL}$ (Wang--Long).}
Combines Pan's pooling with the Mancl--DeRouen leverage correction inside the pooled average:
\begin{equation}
\hat{R}_u^{(MD)} = \frac{1}{N}\sum_{i=1}^N W_i^{-1/2}(I - H_{ii})^{-1}\, r_i\, r_i^T\, (I - H_{ii}^T)^{-1} W_i^{-1/2}.
\end{equation}
The variance estimator uses $\hat{R}_u^{(MD)}$ in place of $\hat{R}_u$ in Pan's sandwich. It inherits the common-correlation pooling assumption, the equal-size requirement of the direct unstructured implementation, and the invertibility requirement on $(I - H_{ii})$.

\paragraph{A.9\quad $\hat{V}_{WB}$ (Westgate--Burchett).}
Generalizes the pooled estimators with a leverage-correction exponent $c \in [0,1]$:
\begin{equation}
\hat{R}_u^{(c)} = \frac{1}{N}\sum_{i=1}^N W_i^{-1/2}(I - H_{ii})^{-c}\, r_i\, r_i^T\, (I - H_{ii}^T)^{-c}\, W_i^{-1/2}.
\end{equation}
Pan corresponds to $c = 0$, Wang--Long to $c = 1$, and $c = 1/2$ yields a KC-type pooled estimator. The variance estimator uses $\hat{R}_u^{(c)}$ in Pan's sandwich.

\paragraph{A.10\quad $\hat{V}_{MBN}$ (Morel--Bokossa--Neerchal).}
Defined in \eqref{eq:vmorel}. The middle matrix $I_{1,\mathrm{MBN}}^c$ applies Morel's finite-population correction $\frac{n^*-1}{n^*-p}$ and Bessel-like factor $\frac{N}{N-1}$ to the mean-centered score contributions $\sum (d_i - \bar{d})(d_i - \bar{d})^T$. The second term is an additive stabilizer with design-effect factor $\kappa$ computed from the same corrected middle matrix. Morel et al. discuss both trace-based and determinant-based versions of this design effect; \eqref{eq:vmorel} uses the trace form. Both terms vanish relative to $\hat{V}_{LZ}$ as $N \to \infty$.

\paragraph{A.11\quad $\hat{V}_{RS}$ (Rogers--Stoner).}
\begin{equation}
\hat{V}_{RS} = \hat{V}_{Pan} + \delta_n \cdot d_{\det} \cdot \Delta,
\end{equation}
where $d_{\det} = \max\bigl\{1,\, |\det(\Delta\, \hat{M}_{Pan})|^{1/p}\bigr\}$ and $\hat{M}_{Pan} = \sum_i D_i^T V_i^{-1} W_i^{1/2}\,\hat{R}_u\, W_i^{1/2} V_i^{-1} D_i$ is Pan's middle matrix. This is the determinant-based Morel variant applied to the pooled estimator.

\paragraph{A.12\quad $\hat{V}_{FW}$ (Ford--Westgate).}
\begin{equation}
\hat{V}_{FW} = \frac{1}{2}\bigl(\hat{V}_{KC} + \hat{V}_{MD}\bigr).
\end{equation}
A heuristic average of the two canonical leverage corrections. KC undercorrects and MD overcorrects (Theorem~\ref{thm:overcorrection}), so the average may improve finite-sample performance. Positive definiteness is not guaranteed.

\paragraph{A.13\quad $\hat{V}_{FZ}$ (Fan--Zhang--Zhang).}
\begin{equation}
\hat{V}_{FZ} = \Delta\left[\sum_{i=1}^N D_i^T V_i^{-1}(I - H_{ii})^{-1}
\Bigl\{r_i\, r_i^T - \sum_{j \neq i} H_{ij}\, r_j\, r_j^T\, H_{ij}^T\Bigr\}
(I - H_{ii}^T)^{-1} V_i^{-1} D_i\right]\Delta,
\end{equation}
where $H_{ij} = D_i\,\Delta\, D_j^T V_j^{-1}$ is the cross-subject hat block. The inner braces subtract an estimated cross-subject contamination before applying the $(I - H_{ii})^{-1}$ correction, producing an estimator approximately unbiased for $\Cov(y_i)$ in Gaussian linear models. The cross-subject summation incurs $O(N^2)$ cost, and the middle matrix can be indefinite. Originally derived for Gaussian linear models; applicable to GEE through the sandwich middle term.

\medskip

\paragraph{Taxonomy by correction mechanism.}
The literature estimators are easiest to read as a baseline block plus four mechanism-based groups. The baseline block contains $\hat{V}_{LZ}$ and $\hat{V}_{DF}$. The four mechanism groups are leverage corrections, additive stabilizers, pooling estimators, and hybrids. Within that taxonomy, $\hat{V}_{KC}$ and $\hat{V}_{MD}$ are the canonical residual-level leverage corrections, $\hat{V}_{FG}$ is a score-level leverage correction, $\hat{V}_{MBN}$ is the trace-based Morel additive stabilizer, $\hat{V}_{RS}$ is the determinant-based Morel variant applied to a pooled middle matrix, $\hat{V}_{WL}$ and $\hat{V}_{WB}$ are pooling-leverage hybrids, and $\hat{V}_{FW}$ and $\hat{V}_{FZ}$ are non-pooling hybrids. The proposed $\VarAR$ (Section~\ref{sec:proposed}) combines a score-level leverage correction with the finite-population and mean-centering structure of Morel's estimator.

\section{Proofs}
\label{app:proofs}

Each result is restated here for self-contained reading.

\subsection*{Lemma~\ref{lem:penalty-order} (Firth penalty order)}

\textbf{Statement.} Under regularity conditions (R0)--(R4) in Section~\ref{sec:regularity}, the Firth penalty satisfies $\|b(\beta)\| = O(1)$ and $\|\partial b / \partial \beta^T\|_{\text{sp}} = O(1)$ uniformly on $\Theta$, where $\|\cdot\|_{\text{sp}} = \sigma_{\max}(\cdot)$ denotes the spectral norm.

\begin{proof}
By (R2), $\inf_{\beta \in \Theta}\lambda_{\min}(N^{-1}I_0(\beta)) \geq c > 0$ for large $N$, so
\[
\|I_0(\beta)^{-1}\|_{\text{sp}} \leq (cN)^{-1}.
\]
By (R4), the first and second derivatives of $I_0$ are both $O(N)$ uniformly on $\Theta$, so
\[
\left\|\frac{\partial I_0}{\partial\beta_r}\right\|_{\text{sp}} \leq CN,
\qquad
\left\|\frac{\partial^2 I_0}{\partial\beta_s\partial\beta_r}\right\|_{\text{sp}} \leq CN.
\]
For the penalty component $b_r(\beta)$,
\[
\begin{aligned}
|b_r(\beta)|
&= \frac{1}{2}\left|\tr\!\left(I_0^{-1}\frac{\partial I_0}{\partial\beta_r}\right)\right| \\
&\leq \frac{p}{2}\left\|I_0^{-1}\frac{\partial I_0}{\partial\beta_r}\right\|_{\text{sp}} \\
&\leq \frac{p}{2}\|I_0^{-1}\|_{\text{sp}}
\left\|\frac{\partial I_0}{\partial\beta_r}\right\|_{\text{sp}} \\
&\leq \frac{pC}{2c}
= O(1).
\end{aligned}
\]
For $\partial b/\partial\beta^T$, differentiate componentwise:
\[
\begin{aligned}
\frac{\partial b_r}{\partial\beta_s}
&= \frac{1}{2}\tr\!\left[
\frac{\partial}{\partial\beta_s}
\left(I_0^{-1}\frac{\partial I_0}{\partial\beta_r}\right)
\right] \\
&= \frac{1}{2}\tr\!\left(
-I_0^{-1}\frac{\partial I_0}{\partial\beta_s}
I_0^{-1}\frac{\partial I_0}{\partial\beta_r}
+ I_0^{-1}\frac{\partial^2 I_0}{\partial\beta_s\partial\beta_r}
\right).
\end{aligned}
\]
The two terms are bounded separately:
\[
\begin{aligned}
\left\|I_0^{-1}\frac{\partial I_0}{\partial\beta_s}
I_0^{-1}\frac{\partial I_0}{\partial\beta_r}\right\|_{\text{sp}}
&\leq \|I_0^{-1}\|_{\text{sp}}^2
\left\|\frac{\partial I_0}{\partial\beta_s}\right\|_{\text{sp}}
\left\|\frac{\partial I_0}{\partial\beta_r}\right\|_{\text{sp}}
= O(1), \\
\left\|I_0^{-1}\frac{\partial^2 I_0}{\partial\beta_s\partial\beta_r}\right\|_{\text{sp}}
&\leq \|I_0^{-1}\|_{\text{sp}}
\left\|\frac{\partial^2 I_0}{\partial\beta_s\partial\beta_r}\right\|_{\text{sp}}
= O(1).
\end{aligned}
\]
Hence each entry $\partial b_r/\partial\beta_s$ is $O(1)$ uniformly on $\Theta$. Since $p$ is fixed, $\|\partial b/\partial\beta^T\|_{\text{sp}} = O(1)$.
\end{proof}

\subsection*{Theorem~\ref{thm:pgee-asymptotics} (PGEE asymptotic distribution)}

\textbf{Statement.} Under (R0)--(R7), let $\{\betastar_N\}$ be any sequence of PGEE roots with $\betastar_N \in \mathrm{int}(\Theta)$ for all sufficiently large $N$. Then
\begin{enumerate}
\item[(a)] $\betastar_N \plim \beta_0$;
\item[(b)] $\sqrt{N}(\betastar_N - \beta_0) \dlim N(0, \Gamma_0^{-1}\Sigma_0\Gamma_0^{-1})$,
\end{enumerate}
where $\Gamma_0 = \lim_{N \to \infty} N^{-1} I_0$ and $\Sigma_0 = \lim_{N \to \infty} N^{-1} M_0$. Consistency of the selected root follows from the Z-estimator argument of \citet[Theorem 5.9]{vandervaart1998asymptotic}, applied to the penalized score $U^\star$ under (R1)--(R2) and Lemma~\ref{lem:penalty-order}. The plug-in error from $\hat\alpha$ is absorbed into the first-order expansion under (R6) by the standard GEE plug-in argument; \citet{xie2003asymptotics} provides the relevant asymptotic template, though not this exact statement in our notation.

\begin{proof}
\textit{Part (a): Consistency.} Define
\[
\bar{U}(\beta) = \lim_{N \to \infty} N^{-1}\sum_i \E[U_i(\beta)].
\]
At $\beta_0$, $\bar{U}(\beta_0) = 0$ with Jacobian $-\Gamma_0 \prec 0$. By the identification part of (R2), $\beta_0$ is a well-separated zero of $\bar U$ on $\Theta$.

By (R0)--(R2), $N^{-1}U(\beta) \plim \bar{U}(\beta)$ pointwise by Chebyshev's inequality. A finite covering argument upgrades the convergence to uniform on $\Theta$. For any $\epsilon > 0$, cover $\Theta$ with finitely many balls of radius $\delta$ (compactness); at each center, pointwise convergence holds; the oscillation
\[
\|N^{-1}U(\beta) - N^{-1}U(\beta_k)\| \leq L\|\beta - \beta_k\|
\]
is controlled by the uniform Lipschitz constant from (R4); a union bound gives
\[
\sup_{\beta \in \Theta}\|N^{-1}U(\beta) - \bar{U}(\beta)\| \plim 0.
\]
By Lemma~\ref{lem:penalty-order}, $\|N^{-1}b(\beta)\| \to 0$ uniformly. So $N^{-1}U^*(\beta) \plim \bar{U}(\beta)$ uniformly, and $\betastar_N \plim \beta_0$ by the Z-estimator consistency theorem \citet[Theorem~5.9]{vandervaart1998asymptotic}.

\textit{Part (b): Asymptotic normality.} Linearize $U^*(\betastar_N) = 0$ around $\beta_0$ via the mean value theorem:
\[
0 = U(\beta_0) + b(\beta_0) + J^*(\tilde{\beta})(\betastar_N - \beta_0),
\]
where $J^*(\beta) = \partial U^*/\partial\beta^T$ and $\tilde{\beta}$ denotes componentwise mean-value points between $\betastar_N$ and $\beta_0$, each converging to $\beta_0$ by consistency. Rearranging:
\[
\sqrt{N}(\betastar_N - \beta_0) = \bigl[-N^{-1}J^*(\tilde{\beta})\bigr]^{-1}\bigl[N^{-1/2}U(\beta_0) + N^{-1/2}b(\beta_0)\bigr].
\]

\textit{Leading matrix.} By (R2), consistency, and the continuous mapping theorem:
\[
-N^{-1}J^*(\tilde{\beta}) \plim \Gamma_0.
\]

\textit{Score term.} $N^{-1/2}U(\beta_0)$ is a sum of $N$ independent mean-zero vectors. The Lyapunov condition holds:
\[
\begin{aligned}
\sum_i \E\bigl[\|N^{-1/2}U_i\|^{2+\delta}\bigr]
&= N^{-(2+\delta)/2}\sum_i \E\bigl[\|U_i\|^{2+\delta}\bigr] \\
&\leq CN^{-\delta/2} \to 0,
\end{aligned}
\]
by (R5). So $N^{-1/2}U(\beta_0) \dlim N(0, \Sigma_0)$ by the Cram\'{e}r--Wold device and (R3).

\textit{Working-correlation plug-in.} A Taylor expansion in $\alpha$ gives
\[
U(\beta_0,\hat\alpha)
= U(\beta_0,\alpha_0)
+ \dot U_\alpha(\beta_0,\tilde\alpha)(\hat\alpha-\alpha_0),
\]
for an intermediate $\tilde\alpha$. At $\beta_0$, each summand of $\dot U_\alpha$ has mean zero because $\E_{\beta_0}\{y_i-\mu_i(\beta_0)\}=0$. Under (R5)--(R6), $N^{-1}\dot U_\alpha(\beta_0,\tilde\alpha)\plim 0$, while $\hat\alpha-\alpha_0=O_p(N^{-1/2})$. Therefore
\[
N^{-1/2}\dot U_\alpha(\beta_0,\tilde\alpha)(\hat\alpha-\alpha_0)=o_p(1).
\]
Because $\hat\alpha(\beta)$ is $\sqrt{N}$-consistent uniformly in a neighborhood of $\beta_0$ and $\betastar\plim\beta_0$, the same expansion applies to $\hat\alpha(\betastar)$.

\textit{Penalty term.} By Lemma~\ref{lem:penalty-order}:
\[
N^{-1/2}b(\beta_0) = O(N^{-1/2}) \to 0.
\]
By Slutsky's lemma, the result follows.
\end{proof}

\subsection*{Theorem~\ref{thm:overcorrection} (Direction-dependent overcorrection)}

\textbf{Statement.} Under correctly specified working covariance ($V_i = \Cov(y_i)$) and the first-order Taylor expansion \eqref{eq:residual-expansion}, the overcorrection matrix of any estimator applying the full $(I - H_{ii})^{-1}$ correction to the score contributions is
\[
B_{\mathrm{lev}} = \sum_{i=1}^N A_i (I_0 - A_i)^{-1} A_i,
\]
where $A_i = D_i^T V_i^{-1} D_i$ is the per-subject contribution to $I_0 = \sum_i A_i$. The per-parameter relative overcorrection
\[
\rho_s = \frac{[B_{\mathrm{lev}}]_{ss}}{[I_0]_{ss}}
\]
depends on the information structure of each parameter.

\begin{proof}
From the Taylor expansion \eqref{eq:residual-expansion}, the corrected score $f_i = D_i^TV_i^{-1}(I-H_{ii})^{-1}r_i$ satisfies
\[
\begin{aligned}
f_i
&\approx D_i^TV_i^{-1}e_i \\
&\quad + D_i^TV_i^{-1}(I-H_{ii})^{-1}\sum_{l \neq i}H_{il}e_l,
\end{aligned}
\]
where the first term is the true score (self-shrinkage cancelled) and the second is a cross-subject remainder. Taking expectation:
\[
\E[f_if_i^T] = D_i^T V_i^{-1}\Sigma_i V_i^{-1}D_i + \text{cross-subject PSD term}.
\]
Under $V_i = \Sigma_i$, the first term simplifies to $D_i^TV_i^{-1}\Cov(y_i)V_i^{-1}D_i$ (the exact target). The overcorrection from cluster $i$ is
\[
\begin{aligned}
[B_{\mathrm{lev}}]_i
&= \sum_{l \neq i} D_i^TV_i^{-1}(I-H_{ii})^{-1}H_{il}\Sigma_lH_{il}^T \\
&\qquad \times (I-H_{ii}^T)^{-1}V_i^{-1}D_i.
\end{aligned}
\]
Apply the push-through identity $(I - UW)^{-1}U = U(I-WU)^{-1}$ with $U = D_i$ and $W = \Delta D_i^TV_i^{-1}$:
\begin{equation}
(I - H_{ii})^{-1}D_i = D_i(I_0 - A_i)^{-1}I_0.
\label{eq:pushthrough}
\end{equation}
Substitution of \eqref{eq:pushthrough} into $[B_{\mathrm{lev}}]_i$, using $H_{il} = D_i\Delta D_l^TV_l^{-1}$ and $V_l = \Sigma_l$:
\[
\begin{aligned}
[B_{\mathrm{lev}}]_i
&= A_i(I_0-A_i)^{-1}I_0\,\Delta
\left(\sum_{l \neq i}A_l\right)\Delta\, I_0(I_0-A_i)^{-1}A_i.
\end{aligned}
\]
The key summation identity is $\sum_{l \neq i}A_l = I_0 - A_i$. Since $I_0\Delta = \Delta I_0 = I_p$,
\[
\begin{aligned}
[B_{\mathrm{lev}}]_i
&= A_i(I_0-A_i)^{-1}(I_0-A_i)\Delta I_0(I_0-A_i)^{-1}A_i \\
&= A_i(I_0-A_i)^{-1}(I_0-A_i)(I_0-A_i)^{-1}A_i \\
&= A_i(I_0 - A_i)^{-1}A_i.
\end{aligned}
\]
The sum over $i = 1, \ldots, N$ yields the result.
\end{proof}

\subsection*{Corollary~\ref{cor:overcorrection} (Scalar overcorrection bound)}

\textbf{Statement.} Under the conditions of Theorem~\ref{thm:overcorrection} with $p = 2$ (intercept and a subject-level binary covariate), $N_{\min} \ge 2$, and common numbers of repeated observations within each treatment group (that is, all treated subjects share the same number of repeated observations and all control subjects share the same number), the maximum eigenvalue of $M_0^{-1}\E[\hat{M}_{MD}]$ is
\[
\lambda_{\max}\!\left(M_0^{-1} \E\!\left[\hat{M}_{MD}\right]\right) = \frac{N_{\min}}{N_{\min} - 1},
\]
where $N_{\min} = \min(N_0, N_1)$. With $N_{\min} = 3$ (a common clinical trial configuration under PGEE), the overcorrection is 50\%.

\begin{proof}
The middle matrix of $\hat{V}_{MD}$ is
\[
\hat{M}_{MD}
= \sum_i D_i^TV_i^{-1}(I-H_{ii})^{-1}r_ir_i^T(I-H_{ii}^T)^{-1}V_i^{-1}D_i
= \sum_i f_if_i^T,
\]
where $f_i$ is the corrected score \eqref{eq:corrected-score}. Hence $\E[\hat{M}_{MD}] - M_0 = B_{\mathrm{lev}}$ from Theorem~\ref{thm:overcorrection}, and under correct specification $M_0 = I_0$.

Consider $X_i = (1, x_i)$ where $x_i \in \{0,1\}$ with $N_0$ controls and $N_1$ treated subjects. Let $A_0$ and $A_1$ denote the common per-subject information contributions in the control and treated groups. Then $I_0 = N_0A_0 + N_1A_1$ and, by Theorem~\ref{thm:overcorrection},
\[
B_{\mathrm{lev}} = N_0A_0(I_0 - A_0)^{-1}A_0 + N_1A_1(I_0 - A_1)^{-1}A_1.
\]
In the intercept-treatment basis,
\[
A_0 = a_0\begin{pmatrix}1 & 0 \\ 0 & 0\end{pmatrix},
\qquad
A_1 = a_1\begin{pmatrix}1 & 1 \\ 1 & 1\end{pmatrix},
\]
so
\[
A_0(I_0 - A_0)^{-1}A_0 = \frac{A_0}{N_0-1},
\qquad
A_1(I_0 - A_1)^{-1}A_1 = \frac{A_1}{N_1-1}.
\]
Hence
\[
B_{\mathrm{lev}} = \frac{N_0}{N_0-1}A_0 + \frac{N_1}{N_1-1}A_1.
\]
Substituting these forms into $I_0^{-1}B_{\mathrm{lev}}$ gives the lower-triangular matrix
\[
I_0^{-1}B_{\mathrm{lev}} = \begin{pmatrix} \dfrac{1}{N_0 - 1} & 0 \\[6pt] \dfrac{1}{N_1 - 1} - \dfrac{1}{N_0 - 1} & \dfrac{1}{N_1 - 1} \end{pmatrix}.
\]
Because this matrix is lower-triangular, its eigenvalues are exactly its diagonal entries: $1/(N_0-1)$ and $1/(N_1-1)$. Therefore the eigenvalues of $I_0^{-1}\E[\hat{M}_{MD}] = I_p + I_0^{-1}B_{\mathrm{lev}}$ are
\[
\frac{N_0}{N_0 - 1} \quad \text{and} \quad \frac{N_1}{N_1 - 1}.
\]
The maximum is achieved by the minority group:
\[
\lambda_{\max} = \frac{N_{\min}}{N_{\min}-1}.
\]
\end{proof}

\subsection*{Proposition~\ref{prop:bias} (Finite-sample bias comparison)}

\textbf{Statement.} Under (R0)--(R7) with $p$ fixed and correctly specified working covariance $V_i = \Cov(y_i)$, let $B(\hat{M}) = \E[\hat{M}(\beta_0)] - M_0$ denote the first-order bias, and write $B_{\mathrm{lev}}^{(0)} = \sum_i A_i\Delta A_i$ for the leading-order form of $B_{\mathrm{lev}}$ from Theorem~\ref{thm:overcorrection}. Then:
\begin{enumerate}
\item[(i)] $\hat{V}_{LZ}$: $\E[\hat{M}_{LZ}] - M_0 = -B_{\mathrm{lev}}^{(0)} + O(N^{-1})$, negative semidefinite at leading order.
\item[(ii)] $\hat{V}_{MD}$: $\E[\hat{M}_{MD}] - M_0 = +B_{\mathrm{lev}}^{(0)} + O(N^{-1})$, positive semidefinite at leading order, with $\lambda_{\max}(M_0^{-1}\E[\hat{M}_{MD}]) = N_{\min}/(N_{\min}-1)$ under Corollary~\ref{cor:overcorrection} conditions.
\item[(iii)] $\VarAR$: $\E[\hat{M}_{AR}] - \E[\hat{M}_{MD}] = (c_N - 1)\E[\hat{M}_{MD}] + O(N^{-1})$, positive definite at leading order, where $c_N = \frac{n^\star - 1}{n^\star - p}\cdot\frac{N}{N-1}$ is the product of the FPC and Bessel factors. The excess acts as a finite-sample buffer against the high variability of the leverage correction (Section~\ref{sec:sim-results}).
\end{enumerate}
In each case, $B = O(1)$ while $M_0 = O(N)$, so the relative bias is $O(N^{-1})$.

\begin{proof}
We work in the standardised basis $\tilde D_i = V_i^{-1/2}D_i$, in which $A_i = \tilde D_i^T\tilde D_i$ and $\tilde H_{ii} = \tilde D_i\Delta\tilde D_i^T$ is symmetric PSD with eigenvalues in $[0,1)$. Under $V_i = \Sigma_i$, equation~\eqref{eq:Errt} in the standardised basis becomes
\[
\Sigma_i^{-1/2}\E[r_ir_i^T]\Sigma_i^{-1/2} = (I - \tilde H_{ii})^2 + \sum_{l\neq i}\tilde H_{il}\tilde H_{il}^T,
\]
using symmetry of $\tilde H_{ii}$ and the cross-subject block $\tilde H_{il} = \tilde D_i\Delta\tilde D_l^T$.

\textit{(i) $\hat V_{LZ}$.} The per-subject LZ middle contribution, transformed to the standardised basis, equals $\tilde D_i^T[(I - \tilde H_{ii})^2 + \sum_{l\neq i}\tilde H_{il}\tilde H_{il}^T]\tilde D_i$. For the self-shrinkage piece, expand $(I - \tilde H_{ii})^2 = I - 2\tilde H_{ii} + \tilde H_{ii}^2$. Using $\tilde D_i^T\tilde H_{ii}\tilde D_i = A_i\Delta A_i$ and $\tilde D_i^T\tilde H_{ii}^2\tilde D_i = A_i\Delta A_i\Delta A_i$,
\[
\tilde D_i^T(I - \tilde H_{ii})^2\tilde D_i = A_i - 2A_i\Delta A_i + A_i\Delta A_i\Delta A_i.
\]
For the cross-subject piece, $\tilde D_i^T\tilde H_{il}\tilde H_{il}^T\tilde D_i = A_i\Delta A_l\Delta A_i$, so by $\sum_{l\neq i}A_l = I_0 - A_i$ and $\Delta I_0 = I_p$,
\[
\begin{aligned}
\tilde D_i^T\!\sum_{l\neq i}\tilde H_{il}\tilde H_{il}^T\tilde D_i
&= A_i\Delta(I_0 - A_i)\Delta A_i \\
&= A_i\Delta A_i - A_i\Delta A_i\Delta A_i.
\end{aligned}
\]
Adding the two contributions gives $A_i - A_i\Delta A_i$ per subject, and summing,
\[
\E[\hat M_{LZ}] = \sum_i A_i - \sum_i A_i\Delta A_i = I_0 - B_{\mathrm{lev}}^{(0)}.
\]
Since $M_0 = I_0$ under correct specification, the first-order bias is $-B_{\mathrm{lev}}^{(0)}$, negative semidefinite because each $A_i\Delta A_i$ is PSD. Higher-order terms $\sum_i A_i\Delta A_i\Delta A_i = O(N^{-1})$ are absorbed into the Taylor remainder.

\textit{(ii) $\hat V_{MD}$.} The correction $(I - \tilde H_{ii})^{-1}$ cancels the self-shrinkage factor exactly:
\[
(I - \tilde H_{ii})^{-1}(I - \tilde H_{ii})^2(I - \tilde H_{ii})^{-1} = I.
\]
Therefore
\[
\sum_i\tilde D_i^T(I - \tilde H_{ii})^{-1}(I - \tilde H_{ii})^2(I - \tilde H_{ii})^{-1}\tilde D_i
= \sum_i A_i
= I_0.
\]
The cross-subject piece uses the push-through identity $\tilde D_i^T(I - \tilde H_{ii})^{-1} = (I - A_i\Delta)^{-1}\tilde D_i^T$ (Theorem~\ref{thm:overcorrection}, equation~\eqref{eq:pushthrough}):
\[
\begin{aligned}
\tilde D_i^T(I - \tilde H_{ii})^{-1}\!\sum_{l\neq i}\tilde H_{il}\tilde H_{il}^T(I - \tilde H_{ii})^{-1}\tilde D_i
&= (I - A_i\Delta)^{-1}A_i\Delta(I_0 - A_i)\Delta A_i \\
&\qquad \times (I - \Delta A_i)^{-1}.
\end{aligned}
\]
At leading order $(I - A_i\Delta)^{-1} = I + O(N^{-1})$, so this reduces to $A_i\Delta(I_0 - A_i)\Delta A_i + O(N^{-1}) = A_i\Delta A_i + O(N^{-1})$. Summing,
\[
\E[\hat M_{MD}] = I_0 + B_{\mathrm{lev}}^{(0)} + O(N^{-1}),
\]
so the first-order bias is $+B_{\mathrm{lev}}^{(0)}$, positive semidefinite. Corollary~\ref{cor:overcorrection} gives the exact worst-case eigenvalue $N_{\min}/(N_{\min}-1)$ under its $p = 2$ balanced configuration; the general form is $B_{\mathrm{lev}} = \sum_i A_i(I_0 - A_i)^{-1}A_i$ from Theorem~\ref{thm:overcorrection}.

\textit{(iii) $\VarAR$.} From the algebraic identity $\sum_i(f_i - \bar f)(f_i - \bar f)^T = \sum_i f_i f_i^T - N\bar f\bar f^T$ and the definition $\hat M_{MD} = \sum_i f_if_i^T$ with $f_i = D_i^TV_i^{-1}(I - H_{ii})^{-1}r_i$,
\[
\begin{aligned}
\hat M_{AR} &= c_N\bigl[\hat M_{MD} - N\bar f\bar f^T\bigr], \\
\E[\hat M_{AR}] &= c_N\E[\hat M_{MD}] - (c_N/N)\E[S_f S_f^T],
\end{aligned}
\]
where $S_f = \sum_i f_i$. Under PGEE, the deterministic penalty contribution gives $\E[S_f] = O(1)$. Thus $\E[S_f S_f^T] = O(1)$, so the centering term is $O(N^{-1})$ and negligible relative to $\E[\hat M_{MD}] = O(N)$.

Substituting the PGEE residual expansion \eqref{eq:pgee-residual} for $r_i$ directly into $f_i$,
\begin{align*}
f_i
&= D_i^TV_i^{-1}(I - H_{ii})^{-1}
   \!\left[(I - H_{ii})e_i - \sum_{l\neq i}H_{il}e_l\right] \\
&\quad - D_i^TV_i^{-1}(I-H_{ii})^{-1}D_i\Delta b(\beta_0) \\
&= \tilde d_i - (I - A_i\Delta)^{-1}A_i\Delta\!\sum_{l\neq i}\tilde d_l - Q_i b(\beta_0),
\end{align*}
where $\tilde d_l = D_l^TV_l^{-1}e_l$ are independent subject-level scores with $\Cov(\tilde d_l) = A_l$, $b(\beta_0)$ is the Firth penalty evaluated at $\beta_0$, and we used the push-through identity and $D_i^TV_i^{-1}H_{il} = A_i\Delta D_l^TV_l^{-1}$. Let $Q_i = (I - A_i\Delta)^{-1}A_i\Delta$, $Q = \sum_i Q_i$, and $S_{\tilde d} = \sum_l\tilde d_l$. Then
\[
\begin{aligned}
S_f
&= \sum_i\tilde d_i - \sum_i Q_i\!\sum_{l\neq i}\tilde d_l - Qb(\beta_0) \\
&= (I - Q)S_{\tilde d} + \sum_l Q_l\tilde d_l - Qb(\beta_0).
\end{aligned}
\]
Expanding $Q_l = A_l\Delta + A_l\Delta A_l\Delta + O(N^{-3})$ and summing over $l$ gives $Q = I_p + B_{\mathrm{lev}}^{(0)}\Delta + O(N^{-2}) = I_p + O(N^{-1})$, so $I - Q = O(N^{-1})$. Under independence of subjects,
\begin{align*}
\Var\bigl((I - Q)S_{\tilde d}\bigr) &= (I - Q)I_0(I - Q)^T = O(N^{-1}),\\
\Var\bigl(\textstyle\sum_l Q_l\tilde d_l\bigr) &= \sum_l Q_l A_l Q_l^T = O(N^{-1}),
\end{align*}
and the cross-covariance is $O(N^{-1})$. The penalty term is deterministic, so it does not change the variance. It does change the mean:
\[
\E[S_f] = -Qb(\beta_0) = -b(\beta_0) + O(N^{-1}) = O(1),
\]
by Lemma~\ref{lem:penalty-order}. Therefore
\[
\E[S_f S_f^T] = \Var(S_f) + \E[S_f]\E[S_f]^T = O(N^{-1}) + O(1) = O(1).
\]

Combining,
\[
\E[\hat M_{AR}] = c_N\E[\hat M_{MD}] + O(N^{-1}),
\]
and the bias excess over $\hat M_{MD}$ is
\[
\E[\hat M_{AR}] - \E[\hat M_{MD}] = (c_N - 1)\E[\hat M_{MD}] + O(N^{-1}),
\]
which is strictly positive definite at leading order because $c_N > 1$ (FPC $> 1$ for $p \ge 2$ and Bessel $> 1$ for $N \ge 2$) and $\E[\hat M_{MD}] \succ 0$. Sandwiching by $\Delta$,
\[
\E[\VarAR] - \E[\hat V_{MD}] = (c_N - 1)\E[\hat V_{MD}] + O(N^{-3}) \succ 0.
\]
At $N = 10$, $n_i = 4$, $p = 3$, the explicit value is $c_N - 1 = (39/37)(10/9) - 1 \approx 0.171$, corresponding to a leading-order SE inflation of $\sqrt{c_N} - 1 \approx 0.082$.
\end{proof}

\subsection*{Proposition~\ref{prop:consistency} (Consistency)}

\textbf{Statement.} Under (R0)--(R7) with $p$ fixed, for each candidate $\hat{V} = I_0(\betahat)^{-1}\hat{M}(\betahat)I_0(\betahat)^{-1}$, $N \hat{V} \plim \Gamma_0^{-1}\Sigma_0\Gamma_0^{-1}$. In particular, $\VarAR$ is consistent.

\begin{proof}
We show the result for $\VarAR$; other candidates are analogous.

\textit{Step 1 (leverage correction vanishes).} With $\Delta=I_0^{-1}$, define
\[
\widetilde H_{ii}=V_i^{-1/2}D_i\Delta D_i^T V_i^{-1/2}.
\]
Then $\widetilde H_{ii}$ is symmetric positive semidefinite and $H_{ii}=V_i^{1/2}\widetilde H_{ii}V_i^{-1/2}$. Hence
\[
\|\widetilde H_{ii}\|_{\text{sp}} \leq \tr(\widetilde H_{ii})=\tr(H_{ii}).
\]
By (R0) and the compact working-correlation parameter space in (R6), the eigenvalues of $V_i$ are uniformly bounded above and away from zero. Thus, for a constant $C<\infty$ independent of $i$ and $N$,
\[
\|H_{ii}\|_{\text{sp}} \leq C\|\widetilde H_{ii}\|_{\text{sp}} \leq C\tr(H_{ii}).
\]
By (R7), $\max_i\|H_{ii}\|_{\text{sp}}\to 0$.
By the Neumann series bound:
\[
\max_i\|(I-H_{ii})^{-1} - I\| \leq \frac{\max_i\|H_{ii}\|_{\text{sp}}}{1-\max_i\|H_{ii}\|_{\text{sp}}} \to 0.
\]
Therefore $\sup_i\|f_i(\beta) - d_i(\beta)\| = o_P(1)$ uniformly, where $d_i(\beta) = D_i^TV_i^{-1}(y_i - \mu_i(\beta))$.

\textit{Step 2 (ULLN for scores).} Each $d_i(\beta)d_i(\beta)^T$ depends only on cluster $i$. Pointwise:
\[
N^{-1}\sum_i d_id_i^T \plim \Sigma(\beta)
\]
by Chebyshev (bounded variances from (R0)). Uniformity on compact $\Theta$ follows by the same Lipschitz covering argument as in Theorem~\ref{thm:pgee-asymptotics}(a): the map $\beta \mapsto d_i(\beta)d_i(\beta)^T$ has a common Lipschitz constant from (R4).

\textit{Step 3 (combine).} By Steps 1--2:
\[
N^{-1}\hat{M}_{AR}(\beta) \plim \Sigma(\beta)
\]
uniformly (the FPC factor $\to 1$ and Bessel factor $\to 1$). Evaluating at $\betahat \plim \beta_0$ (Theorem~\ref{thm:pgee-asymptotics}) and applying the continuous mapping theorem to $\Delta(\cdot)\Delta$:
\[
N\VarAR \plim \Gamma_0^{-1}\Sigma_0\Gamma_0^{-1}.
\]
\end{proof}

\subsection*{Proposition~\ref{prop:plug-in-phi-morel} (Plug-in scalar dispersion estimation and Morel's correction)}

\begin{proposition}
\label{prop:plug-in-phi-morel}
Suppose the working covariance factors as
\[
V_i(\beta,\alpha,\phi)=\phi\,\widetilde V_i(\beta,\alpha), \qquad \phi>0,
\]
with scalar scale parameter $\phi$. For fixed $(\beta,\alpha)$, define
\[
I_0(\phi)=\sum_{i=1}^N D_i^T V_i(\phi)^{-1}D_i, \qquad \Delta(\phi)=I_0(\phi)^{-1},
\]
\[
H_{ii}(\phi)=D_i\Delta(\phi)D_i^T V_i(\phi)^{-1},
\]
and
\[
d_i(\phi)=D_i^T V_i(\phi)^{-1}r_i.
\]
Let
\[
I_{1,\mathrm{MBN}}^c(\phi)=
\frac{n^*-1}{n^*-p}\cdot\frac{N}{N-1}
\sum_{i=1}^N (d_i(\phi)-\bar d(\phi))(d_i(\phi)-\bar d(\phi))^T,
\qquad
\bar d(\phi)=N^{-1}\sum_{i=1}^N d_i(\phi).
\]
Then:
\begin{enumerate}
\item[(i)] $I_0(\phi)=\phi^{-1}\widetilde I_0$, $\Delta(\phi)=\phi\,\widetilde\Delta$, and $H_{ii}(\phi)=\widetilde H_{ii}$.
\item[(ii)] $d_i(\phi)=\phi^{-1}\widetilde d_i$, hence $I_{1,\mathrm{MBN}}^c(\phi)=\phi^{-2}\widetilde I_{1,\mathrm{MBN}}^c$.
\item[(iii)] The multiplicative Morel term is invariant:
\[
\Delta(\phi)\,I_{1,\mathrm{MBN}}^c(\phi)\,\Delta(\phi)
=
\widetilde\Delta\,\widetilde I_{1,\mathrm{MBN}}^c\,\widetilde\Delta.
\]
\item[(iv)] If
\[
\kappa(\phi)=\max\left\{1,\frac{1}{p}\tr\bigl(\Delta(\phi)I_{1,\mathrm{MBN}}^c(\phi)\bigr)\right\},
\qquad
\delta_n=\min\left\{0.5,\frac{p}{N-p}\right\},
\]
then
\[
\kappa(\phi)=\max\left\{1,\phi^{-1}\frac{C}{p}\right\},
\qquad
C=\tr(\widetilde\Delta\,\widetilde I_{1,\mathrm{MBN}}^c),
\]
so $\kappa(\phi)$ itself is not invariant. However,
\[
\kappa(\phi)\Delta(\phi)=\max\left\{\phi,\frac{C}{p}\right\}\widetilde\Delta.
\]
Since $\widetilde\Delta=O(N^{-1})$ and $\widetilde I_{1,\mathrm{MBN}}^c=O(N)$, we have $C/p=O(1)$, and therefore
\[
\delta_n\,\kappa(\phi)\Delta(\phi)=O(N^{-2})
\]
for $\phi$ in any fixed neighborhood of $\phi_0>0$.
\end{enumerate}
Consequently, when $\phi$ is estimated as a plug-in scalar dispersion parameter, the regression-parameter count appearing in Morel's correction remains $p=\dim(\beta)$.
\end{proposition}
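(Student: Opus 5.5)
The proof is a direct scaling computation. The key point is that $\phi$ enters every quantity only through $V_i(\phi)^{-1}=\phi^{-1}\widetilde V_i^{-1}$, where $\widetilde V_i=\widetilde V_i(\beta,\alpha)$. For fixed $(\beta,\alpha)$, neither $D_i$ nor the residual $r_i$ depends on $\phi$. I would take the four parts in order, each feeding the next.

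\emph{Part (i).} Substituting $V_i(\phi)^{-1}=\phi^{-1}\widetilde V_i^{-1}$ gives $I_0(\phi)=\phi^{-1}\sum_i D_i^T\widetilde V_i^{-1}D_i=\phi^{-1}\widetilde I_0$, and inverting gives $\Delta(\phi)=\phi\,\widetilde\Delta$. In the hat block $H_{ii}(\phi)=D_i\Delta(\phi)D_i^TV_i(\phi)^{-1}$ the factors $\phi$ and $\phi^{-1}$ cancel, so $H_{ii}(\phi)=\widetilde H_{ii}$.

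\emph{Parts (ii) and (iii).} Since $r_i$ is free of $\phi$, $d_i(\phi)=\phi^{-1}\widetilde d_i$ and hence $\bar d(\phi)=\phi^{-1}\bar{\widetilde d}$. The centered outer products therefore scale by $\phi^{-2}$, while the prefactor $\frac{n^*-1}{n^*-p}\cdot\frac{N}{N-1}$ does not depend on $\phi$, which gives (ii). For (iii), sandwiching $I_{1,\mathrm{MBN}}^c(\phi)$ by $\Delta(\phi)$ on both sides produces $\phi\cdot\phi^{-2}\cdot\phi=1$, so the multiplicative Morel term is invariant.

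\emph{Part (iv).} From (i) and (ii), $\Delta(\phi)I_{1,\mathrm{MBN}}^c(\phi)=\phi^{-1}\widetilde\Delta\widetilde I_{1,\mathrm{MBN}}^c$, so $\kappa(\phi)=\max\{1,\phi^{-1}C/p\}$. Multiplying by $\Delta(\phi)=\phi\,\widetilde\Delta$ and using $\phi\max\{1,x/\phi\}=\max\{\phi,x\}$ for $\phi>0$ gives $\kappa(\phi)\Delta(\phi)=\max\{\phi,C/p\}\,\widetilde\Delta$.

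\emph{Main obstacle: the order claims in (iv).} These need more than algebra.
\begin{itemize}
\item $\widetilde\Delta=O(N^{-1})$ follows from (R2).
\item $\widetilde I_{1,\mathrm{MBN}}^c=O_p(N)$ follows from bounded score contributions under (R0), as in Step 2 of Proposition~\ref{prop:consistency}.
\item Hence $C=\tr(\widetilde\Delta\widetilde I_{1,\mathrm{MBN}}^c)\le p\,\|\widetilde\Delta\|_{\text{sp}}\|\widetilde I_{1,\mathrm{MBN}}^c\|_{\text{sp}}=O_p(1)$. These orders should be read in probability.
\item $\max\{\phi,C/p\}$ is bounded for $\phi$ in a compact neighborhood of $\phi_0$.
\item $\delta_n=p/(N-p)=O(N^{-1})$ for large $N$.
\end{itemize}
Combining these, the additive term is $\delta_n\kappa(\phi)\Delta(\phi)=O(N^{-1})\cdot O(1)\cdot O(N^{-1})=O(N^{-2})$.

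\emph{Concluding claim.} Both $\phi$-dependent pieces behave well: the multiplicative term is exactly invariant, and the additive term is of lower order than $\Delta=O(N^{-1})$. The factor $(n^*-1)/(n^*-p)$ never interacts with $\phi$. So a plug-in estimate of the scalar $\phi$ does not introduce an extra degree of freedom into Morel's construction, and the count stays $p=\dim(\beta)$.
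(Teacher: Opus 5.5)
Your proposal is correct and follows the paper's proof essentially step for step. It uses the same $\phi^{-1}$ scaling of $V_i^{-1}$, pushed through $I_0$, $\Delta$, $H_{ii}$, $d_i$, the centered middle matrix and $\kappa$, and then the same order count $O(N^{-1})\cdot O(1)\cdot O(N^{-1})$ for the additive term. Your justification of $\widetilde\Delta=O(N^{-1})$ via the eigenvalue lower bound in (R2) is in fact the right one, since the paper's appeal to $\widetilde I_0=O(N)$ by itself only bounds $\widetilde\Delta$ from below.
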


\begin{proof}
Since $V_i(\phi)=\phi\,\widetilde V_i$, we have
\[
V_i(\phi)^{-1}=\phi^{-1}\widetilde V_i^{-1}.
\]
Therefore
\[
I_0(\phi)=\sum_{i=1}^N D_i^T V_i(\phi)^{-1}D_i
=\phi^{-1}\sum_{i=1}^N D_i^T\widetilde V_i^{-1}D_i
=\phi^{-1}\widetilde I_0,
\]
which gives $\Delta(\phi)=I_0(\phi)^{-1}=\phi\,\widetilde\Delta$. Then
\[
H_{ii}(\phi)
=
D_i[\phi\,\widetilde\Delta]D_i^T[\phi^{-1}\widetilde V_i^{-1}]
=
D_i\widetilde\Delta D_i^T\widetilde V_i^{-1}
=
\widetilde H_{ii},
\]
proving (i).

Because $r_i=y_i-\hat\mu_i$ does not depend on $\phi$ for fixed $(\beta,\alpha)$,
\[
d_i(\phi)=D_i^T V_i(\phi)^{-1}r_i
=\phi^{-1}D_i^T\widetilde V_i^{-1}r_i
=\phi^{-1}\widetilde d_i.
\]
Centering preserves the common factor $\phi^{-1}$, so
\[
I_{1,\mathrm{MBN}}^c(\phi)=\phi^{-2}\widetilde I_{1,\mathrm{MBN}}^c.
\]
This proves (ii).

Combining (i) and (ii),
\[
\Delta(\phi)I_{1,\mathrm{MBN}}^c(\phi)\Delta(\phi)
=
(\phi\widetilde\Delta)(\phi^{-2}\widetilde I_{1,\mathrm{MBN}}^c)(\phi\widetilde\Delta)
=
\widetilde\Delta\,\widetilde I_{1,\mathrm{MBN}}^c\,\widetilde\Delta,
\]
which proves (iii).

For (iv),
\[
\tr(\Delta(\phi)I_{1,\mathrm{MBN}}^c(\phi))
=
\tr\bigl((\phi\widetilde\Delta)(\phi^{-2}\widetilde I_{1,\mathrm{MBN}}^c)\bigr)
=
\phi^{-1}\tr(\widetilde\Delta\,\widetilde I_{1,\mathrm{MBN}}^c)
=
\phi^{-1}C.
\]
Hence
\[
\kappa(\phi)=\max\left\{1,\phi^{-1}\frac{C}{p}\right\}.
\]
Multiplying by $\Delta(\phi)=\phi\widetilde\Delta$ gives
\[
\kappa(\phi)\Delta(\phi)
=
\max\left\{1,\phi^{-1}\frac{C}{p}\right\}\phi\widetilde\Delta
=
\max\left\{\phi,\frac{C}{p}\right\}\widetilde\Delta.
\]
Now $\widetilde\Delta=O(N^{-1})$ because $\widetilde I_0=O(N)$, while $\widetilde I_{1,\mathrm{MBN}}^c=O(N)$ as a sum of $N$ centered subject contributions with fixed finite-population and Bessel factors, so
\[
C=\tr(\widetilde\Delta\,\widetilde I_{1,\mathrm{MBN}}^c)=O(1)
\]
and therefore $C/p=O(1)$. Since $\delta_n=O(N^{-1})$, we obtain
\[
\delta_n\,\kappa(\phi)\Delta(\phi)=O(N^{-2})
\]
for $\phi$ in any fixed neighborhood of $\phi_0>0$.

Thus the multiplicative Morel term is exactly invariant to a plug-in scalar working scale, and the additive term changes only at higher order. The regression-parameter count in Morel's correction therefore remains $p=\dim(\beta)$.
\end{proof}


\section{Full applied data results}
\label{app:applied-full}

\subsection{Toenail Cases II and III}
\label{app:toenail-cases}

The toenail subsets use a fixed random seed for reproducibility. Because Case~I ($N = 10$) involves quasi-complete separation, the results depend on which patients are sampled; different seeds can produce different cell counts and different degrees of separation.

Table~\ref{tab:toenail-cases} reports the standard errors and $p$-values for $\hat{\beta}_1$ (treatment) across all fourteen estimators for the toenail data at $N = 13$ (Case~II, near separation) and $N = 60$ (Case~III). As the sample size grows, the estimators converge. The standard error range narrows from $1.09$--$1.40$ at $N = 13$ to $0.56$--$0.58$ at $N = 60$, and the overcorrection ratio $\rho_1$ drops from $0.25$ to $0.04$.

\begin{table}[htbp]
\centering
\caption{Toenail data, Cases II and III. Standard errors and $p$-values for the treatment effect ($\hat{\beta}_1$). Case~II: $N = 13$, near separation. Case~III: $N = 60$.}
\label{tab:toenail-cases}
\begin{tabular}{l cc cc}
\toprule
& \multicolumn{2}{c}{Case~II ($N = 13$)} & \multicolumn{2}{c}{Case~III ($N = 60$)} \\
\cmidrule(lr){2-3} \cmidrule(lr){4-5}
Estimator & SE & $p$ & SE & $p$ \\
\midrule
$\hat{V}_{LZ}$          & 1.09 & 0.409 & 0.56 & 0.137 \\
$\hat{V}_{DF}$          & 1.24 & 0.467 & 0.58 & 0.147 \\
$\hat{V}_{KC}$          & 1.20 & 0.453 & 0.56 & 0.137 \\
$\hat{V}_{MD}$          & 1.33 & 0.496 & 0.57 & 0.143 \\
$\hat{V}_{FG}$          & 1.24 & 0.466 & 0.58 & 0.147 \\
$\hat{V}_{MBN}$         & 1.30 & 0.486 & 0.58 & 0.151 \\
$\hat{V}_{\text{Pan}}$  & 1.11 & 0.416 & 0.56 & 0.133 \\
$\hat{V}_{GST}$         & 1.26 & 0.474 & 0.57 & 0.143 \\
$\hat{V}_{WL}$          & 1.30 & 0.488 & 0.57 & 0.140 \\
$\hat{V}_{WB}$          & 1.20 & 0.451 & 0.56 & 0.134 \\
$\hat{V}_{RS}$          & 1.21 & 0.456 & 0.57 & 0.141 \\
$\hat{V}_{FW}$          & 1.27 & 0.476 & 0.57 & 0.140 \\
$\hat{V}_{FZ}$          & 1.22 & 0.461 & 0.56 & 0.137 \\
\midrule
$\bm{\VarAR}$           & \textbf{1.40} & \textbf{0.517} & \textbf{0.58} & \textbf{0.153} \\
\midrule
\multicolumn{5}{l}{\small $\rho_1$: 0.25 (Case~II), 0.04 (Case~III)} \\
\bottomrule
\end{tabular}
\end{table}

\subsection{Additional Toenail Case I Sensitivity Subsample}
\label{app:toenail-seed7375}

Table~\ref{tab:toenail-seed7375} reports a second pre-specified $N = 10$ toenail subsample included as a stress test for decision sensitivity. The sample uses seed $7375$, has the same complete-visit restriction as Case~I, and produces near-to-quasi-complete separation with cell counts $(9, 12; 48, 1)$. The PGEE fit gives $\hat{\beta}_1 = -5.41$ and $\hat{\alpha} = 0.047$. The overcorrection ratios are $\rho_0 = 0.38$, $\rho_1 = 0.17$, and $\rho_2 = 0.46$, showing that the high-leverage direction is not always the treatment coefficient.

\begin{table}[htbp]
\centering
\caption{Toenail data, additional Case~I subsample (seed $7375$, $N = 10$). Standard errors and $p$-values for the treatment effect ($\hat{\beta}_1 = -5.41$). Significant at nominal level 0.05 marked with $^{*}$.}
\label{tab:toenail-seed7375}
\begin{tabular}{l cc}
\toprule
Estimator & SE & $p$ \\
\midrule
$\hat{V}_{LZ}$          & 1.52 & 0.009$^{*}$ \\
$\hat{V}_{DF}$          & 1.82 & 0.021$^{*}$ \\
$\hat{V}_{KC}$          & 1.93 & 0.026$^{*}$ \\
$\hat{V}_{MD}$          & 2.32 & 0.053 \\
$\hat{V}_{FG}$          & 1.89 & 0.024$^{*}$ \\
$\hat{V}_{MBN}$         & 1.92 & 0.026$^{*}$ \\
$\hat{V}_{\text{Pan}}$  & 1.20 & 0.003$^{*}$ \\
$\hat{V}_{GST}$         & 1.44 & 0.007$^{*}$ \\
$\hat{V}_{WL}$          & 1.83 & 0.021$^{*}$ \\
$\hat{V}_{WB}$          & 1.53 & 0.009$^{*}$ \\
$\hat{V}_{RS}$          & 1.37 & 0.005$^{*}$ \\
$\hat{V}_{FW}$          & 2.14 & 0.039$^{*}$ \\
$\hat{V}_{FZ}$          & 2.10 & 0.036$^{*}$ \\
\midrule
$\bm{\VarAR}$           & \textbf{2.66} & \textbf{0.082} \\
\midrule
\multicolumn{3}{l}{\small $\rho_s$: 0.38 (intercept), 0.17 (treatment), 0.46 (time)} \\
\bottomrule
\end{tabular}
\end{table}

\section{Distributional properties of the variance estimators}
\label{app:stability}

Table~\ref{tab:stability} complements Figure~\ref{fig:se-ratio} by reporting higher-order distributional statistics of the variance estimators at $N = 10$: the coefficient of variation of the SE across replications, its skewness, the $95$th- and $99$th-percentile-to-median tail ratios, and the median SE/SimSE ratio. The CV entries motivate the claim in the main text that $\hat{V}_{MD}$ and $\VarAR$ have similar distributional spread but differ in location. The tail ratios P95/P50 and P99/P50 make the right-skew of $(I - H_{ii})^{-1}$-based corrections explicit.

\begin{table}[htbp]
\centering
\caption{Distributional properties of the variance estimators at $N = 10$, $\beta_1$ (treatment effect). Median values across all null scenarios; $B = 5{,}000$ replications per scenario. CV = coefficient of variation of SE across replications. P95/P50 and P99/P50 = tail ratios. Median SE/SimSE = median estimated SE divided by simulation SE.}
\label{tab:stability}
\begin{tabular}{lcccc c}
\toprule
Estimator & CV & Skewness & P95/P50 & P99/P50 & Med.\ SE/SimSE \\
\midrule
$\hat{V}_{LZ}$  & 0.203 & 0.23 & 1.37 & 1.56 & 0.73 \\
$\hat{V}_{KC}$  & 0.213 & 0.23 & 1.40 & 1.60 & 0.86 \\
$\hat{V}_{MD}$  & 0.225 & 0.21 & 1.43 & 1.62 & 1.04 \\
$\hat{V}_{MBN}$ & 0.198 & 0.28 & 1.31 & 1.49 & 1.00 \\
\midrule
$\bm{\VarAR}$   & \textbf{0.232} & \textbf{0.30} & \textbf{1.45} & \textbf{1.65} & \textbf{1.10} \\
\bottomrule
\end{tabular}
\end{table}

\section{Additional simulation and application figures}
\label{app:supp-figures}

This appendix collects supporting figures that decompose the main-text Figures~\ref{fig:sim-typeI}--\ref{fig:typeI-unbal} along individual simulation factors, and presents additional application diagnostics.

\begin{figure}[htbp]
\centering
\includegraphics[width=\textwidth,height=0.74\textheight,keepaspectratio]{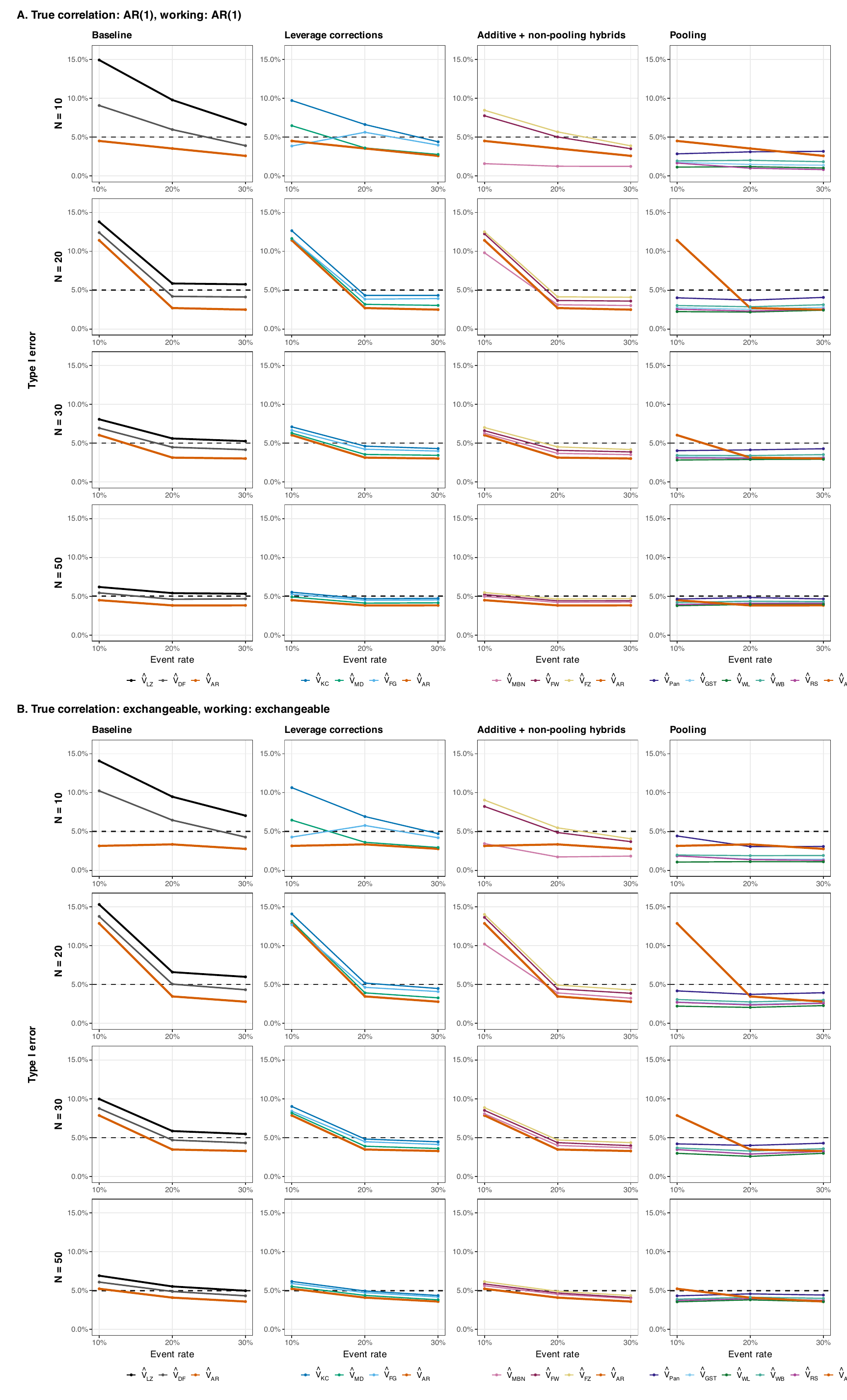}
\caption{Type I error for $\beta_1$ under correct specification, decomposed by the true within-subject correlation structure. Panel~A fixes both the true and working correlations at AR(1); Panel~B fixes both at exchangeable. Rows within each panel fix the number of subjects $N \in \{10, 20, 30, 50\}$; columns group estimators as in Figure~\ref{fig:sim-typeI}. Each cell averages the per-scenario rejection rate across $\rho \in \{0.05, 0.10, 0.20, 0.30\}$. $\VarAR$ (orange) is repeated in every panel; dashed horizontal line is the nominal level $0.05$. This figure verifies that the averaged view in Figure~\ref{fig:sim-typeI} does not hide a large asymmetry between the two correctly-specified correlation structures.}
\label{fig:typeI-by-corr}
\end{figure}

\begin{figure}[htbp]
\centering
\includegraphics[width=\textwidth,height=0.74\textheight,keepaspectratio]{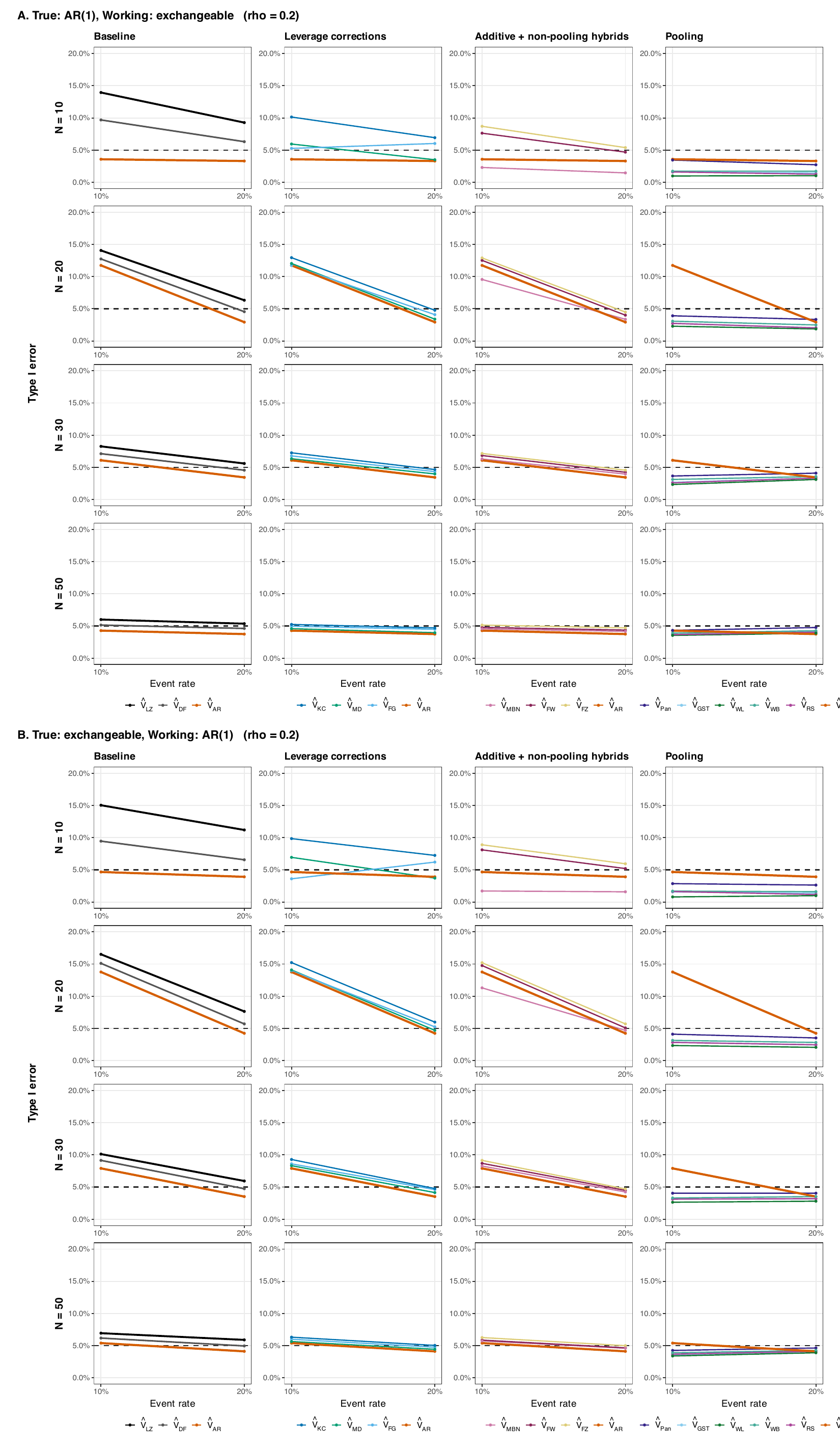}
\caption{Type I error for $\beta_1$ under working-correlation misspecification, $\rho = 0.2$. Panel~A: data are generated with AR(1) correlation but the working model assumes exchangeable. Panel~B: data are generated with exchangeable correlation but the working model assumes AR(1). Rows fix $N \in \{10, 20, 30, 50\}$; columns group estimators as in Figure~\ref{fig:sim-typeI}. $\VarAR$ (orange) is repeated in every panel; dashed horizontal line is the nominal level $0.05$. Both misspecification directions show the same qualitative small-$N$ pattern as the correctly-specified case, confirming that the main-text conclusions are not artefacts of correctly specifying the working correlation.}
\label{fig:typeI-misspec}
\end{figure}

\begin{figure}[htbp]
\centering
\includegraphics[width=\textwidth]{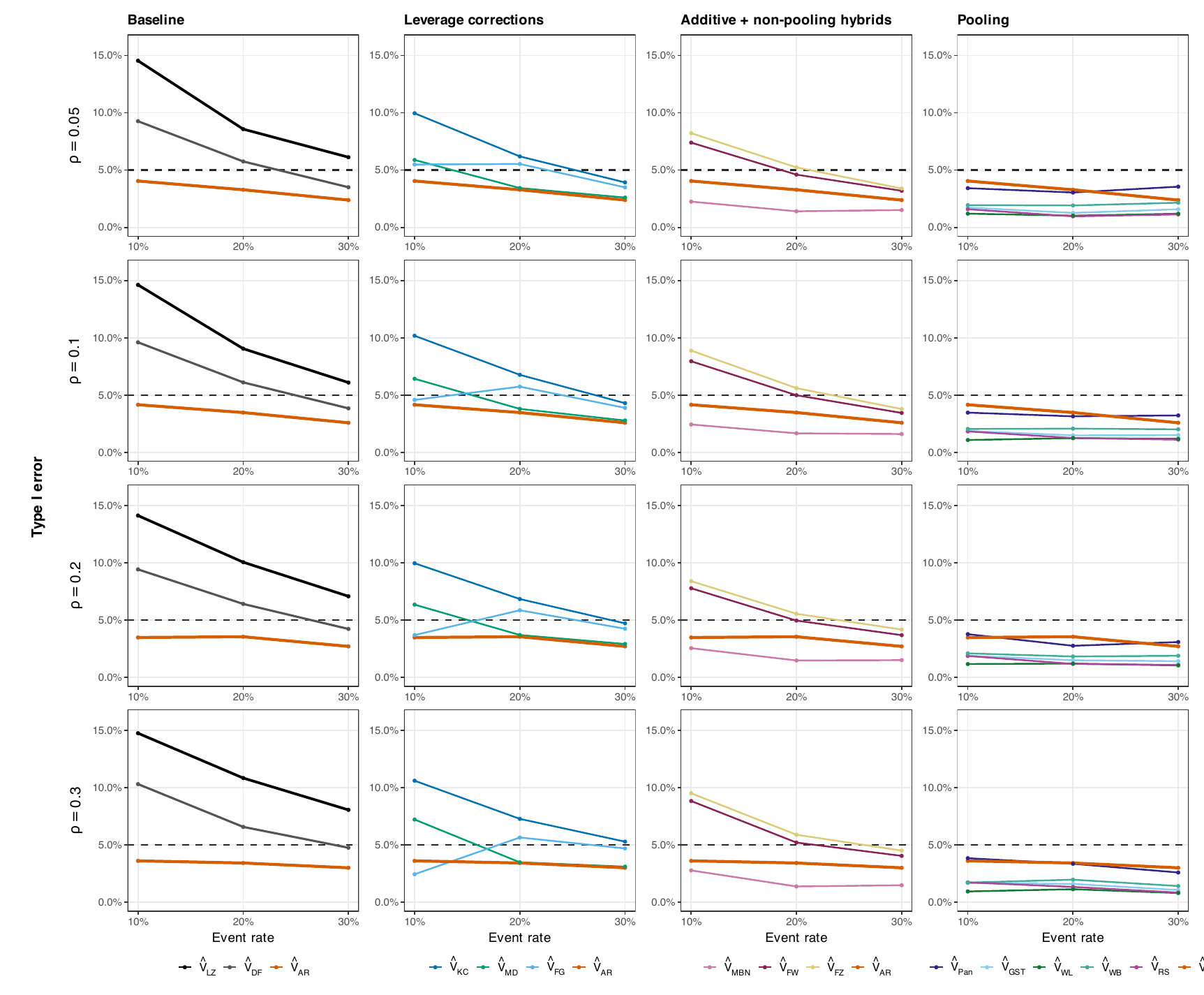}
\caption{Type I error for $\beta_1$ at $N = 10$, decomposed by the true within-subject correlation level $\rho$. Rows fix $\rho \in \{0.05, 0.10, 0.20, 0.30\}$; columns group estimators as in Figure~\ref{fig:sim-typeI}. Each cell averages the per-scenario rejection rate across event rate and across exchangeable and AR(1) true correlation structures. $\VarAR$ (orange) is repeated in every panel; dashed horizontal line is the nominal level $0.05$. The anti-conservatism of $\hat{V}_{KC}$ and $\hat{V}_{LZ}$ becomes more severe as $\rho$ grows; $\VarAR$ remains near or below nominal across all four $\rho$ values.}
\label{fig:typeI-by-rho}
\end{figure}

\begin{figure}[htbp]
\centering
\includegraphics[width=\textwidth]{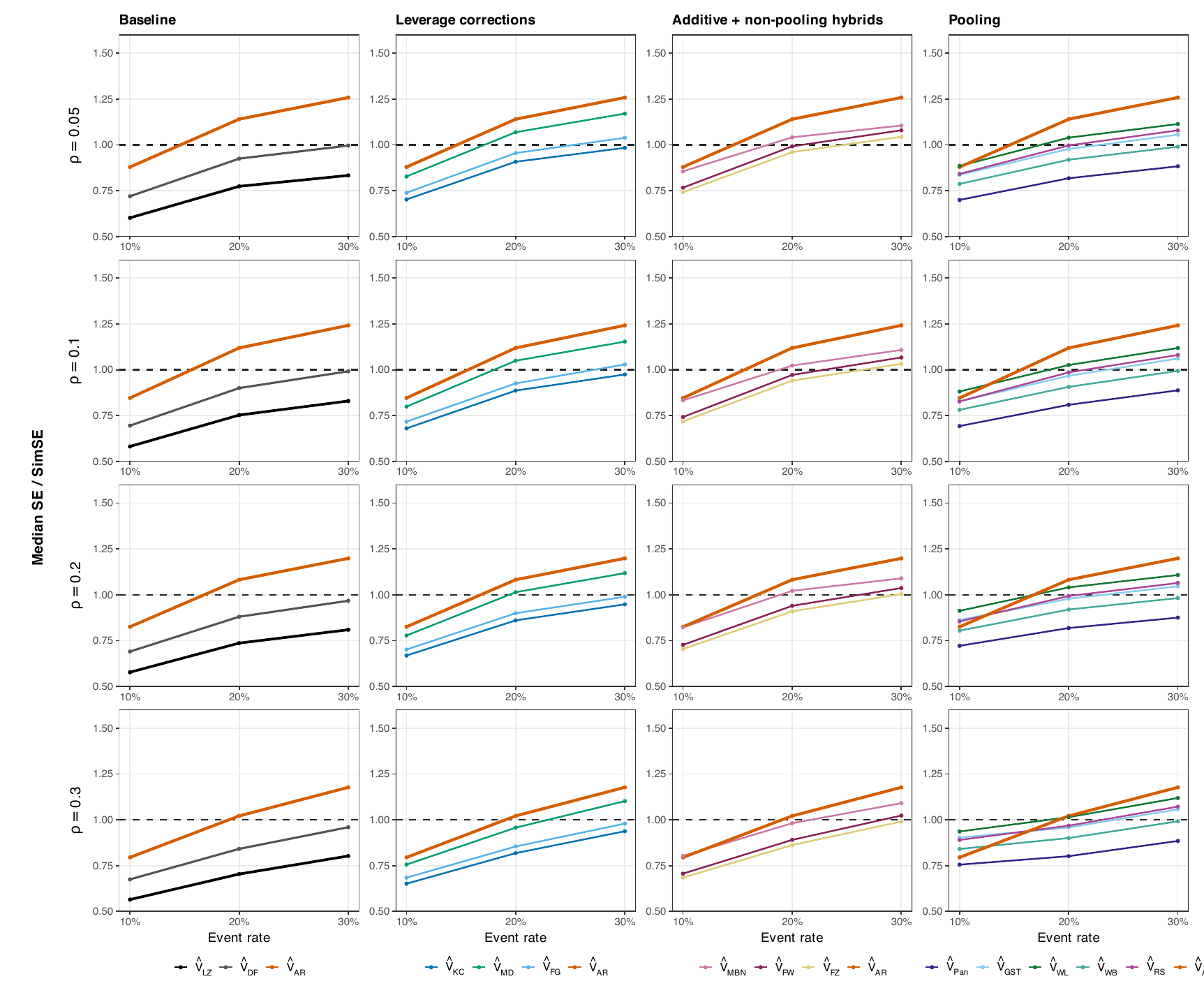}
\caption{Median SE/SimSE for $\beta_1$ at $N = 10$, by the true within-subject correlation level $\rho$. Rows fix $\rho$; columns group estimators as in Figure~\ref{fig:se-ratio}. Target value is $1.0$ (dashed horizontal line). This figure decomposes the SE/SimSE calibration picture of Figure~\ref{fig:se-ratio} by $\rho$ at the smallest sample size.}
\label{fig:se-ratio-by-rho}
\end{figure}

\begin{figure}[htbp]
\centering
\includegraphics[width=0.85\textwidth]{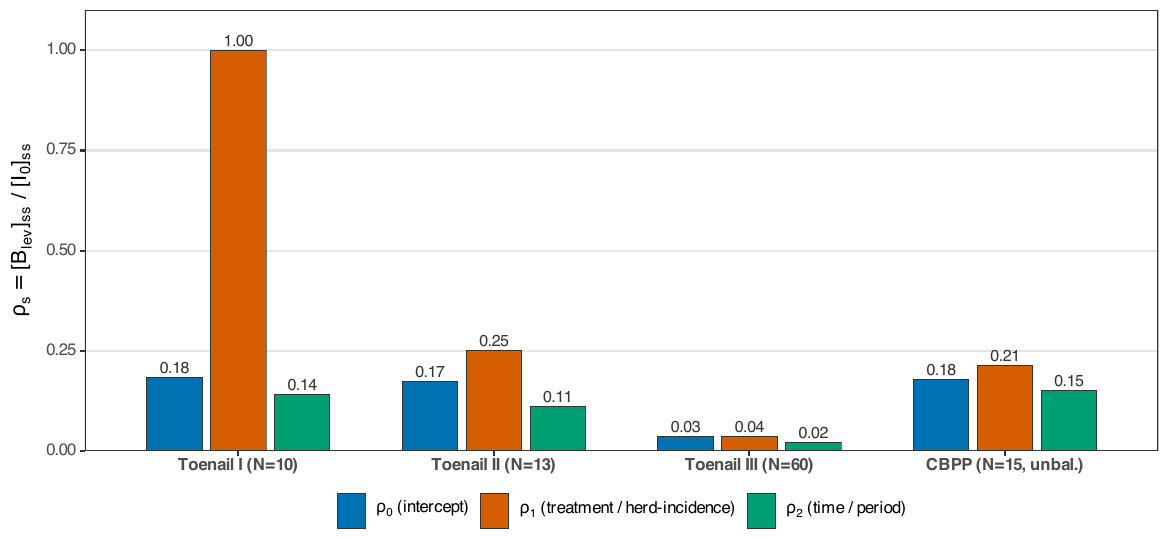}
\caption{Overcorrection ratio $\rho_s = [B_{\mathrm{lev}}]_{ss} / [I_0]_{ss}$ across the applied datasets, by parameter. For Toenail Case~I under quasi-complete separation, $\rho_1$ for the treatment parameter saturates near $1.00$ while the intercept and time components remain small. As the Toenail sample size increases (Cases~II and~III), $\rho_s$ collapses across all parameters. CBPP shows moderate $\rho_s$ values on the herd-incidence parameter. The balanced benchmark $1/(N_{\min} - 1) = 0.25$ is specific to the Toenail Case~I design ($n_i = 7$); it is not drawn as a horizontal reference because it is not comparable across datasets with different $n_i$ values.}
\label{fig:rho-s-apps}
\end{figure}

\begin{figure}[htbp]
\centering
\includegraphics[width=0.7\textwidth]{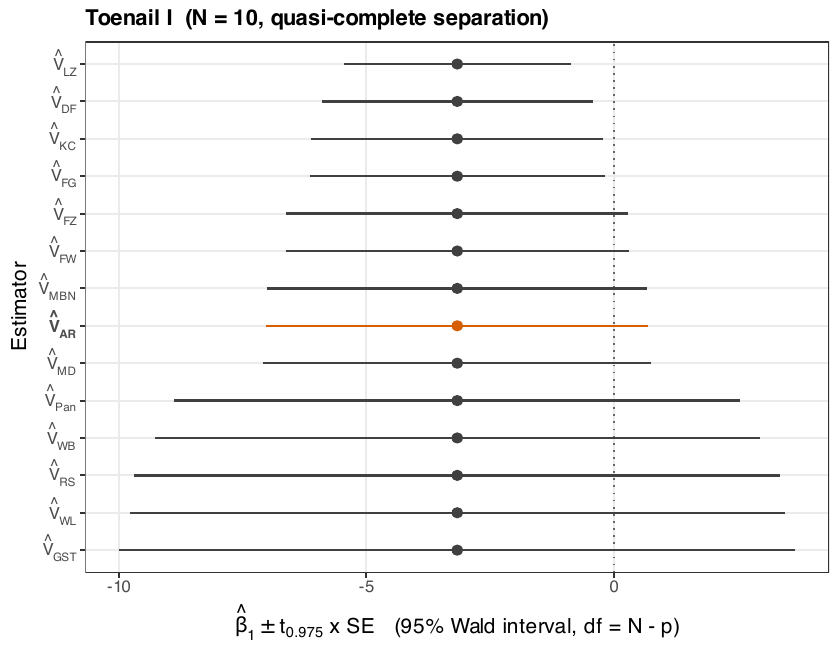}
\caption{Wald 95\% confidence intervals for the treatment effect $\hat{\beta}_1$ in Toenail Case~I across all fourteen variance estimators, sorted by interval width with the narrowest on top. Intervals use $N - p$ degrees of freedom for the $t$ critical value. $\VarAR$ (orange) produces one of the wider intervals, consistent with its upward-translation behaviour under near-separation. The figure is an application illustration, not a validation that wider intervals are always preferable.}
\label{fig:applied-intervals}
\end{figure}

\end{document}